\documentclass[11pt]{amsart}
\usepackage{amssymb}
\usepackage{amsaddr}
\usepackage[english]{babel}
\usepackage[ansinew]{inputenc}
\usepackage{graphicx}
\usepackage{graphpap}
\usepackage{accents}
\usepackage{psfrag}


\newtheorem{theorem}{Theorem}[section]
\newtheorem{lemma}[theorem]{Lemma}
\theoremstyle{definition}
\newtheorem{definition}[theorem]{Definition}

\theoremstyle{remark}
\newtheorem{remark}[theorem]{Remark}
\numberwithin{equation}{section}
\theoremstyle{plain}

\newtheorem{corollary}[theorem]{Corollary}

\newtheorem{proposition}[theorem]{Proposition}

\newcounter{mnotecount}

\newcommand{\mnotex}[1]
{\protect{\stepcounter{mnotecount}}$^{\mbox{\footnotesize $\bullet$\themnotecount}}$ 
\marginpar{
\raggedright\tiny\em
$\!\!\!\!\!\!\,\bullet$\themnotecount: #1} }


\def\defi{:=}
\def\be{\begin{equation*}}
\def\en{\end{equation*}}


\def\E{\epsilon} 
\def\P{{\mathcal C}} 
\def\M{{\mathcal M}}
\def\N{{\mathcal N}}
\def\G{{\mathcal G}}

\def\cN{N}

\def\X{{\mathfrak X}}

\def\gM{g}

\def\U{U}
\def\Y{Y}
\def\sone{\hat{s}}

\def\Kn{K^{\normal}}

\def\bmell{\bm{\ell}}
\def\etaell{\bm{\eta^{(\ell)}}}
\def\etag{\bm{\eta^{\gM}}}

\def\ellc{\underaccent{\check}{\bmell}}
\def\nn{n{}^{\mbox{\tiny $(2)$}}}
\def\ll{\ell^{\mbox{\tiny $(2)$}}}

\def\ellell{(\ell \cdot \ell)}
\def\acc{\mathfrak{a}}
\def\a{{\mathcal R}}
\def\rig{L}
\def\rigging{\xi}
\def\normal{\nu}
\def\gauge{\zeta}
\def\n{n}

\def\metdata{\{\N,\gamma,\ellc,\ll\}}
\def\hypdata{\{\N,\gamma,\ellc,\ll,\Y\}}
\def\mathypdata{\{\N,\gamma,\ellc,\ll,\Y,\rho_{\ell},\bm{J} \}}

\def\A{{\mathcal A}}
\def\gup{\A^{\sharp}}

\def\Gamo{{\stackrel{\circ}{\Gamma}}}
\def\nablao{{\stackrel{\circ}{\nabla}}}
\def\nabgam{{\nabla^{(\gamma)}}}

\def\Riem{{\mbox{Riem}}}
\def\Ein{{\mbox{Ein}}}
\def\Riemo{{\stackrel{\circ}{\Riem}}}
\def\Riemoin{{\stackrel{\circ}{R}}}
\def\Vo{{\stackrel{\circ}{V}}}
\def\Ricc{{\mbox{Ric}}}
\def\Ricco{{\stackrel{\circ}{Ric}}}
\def\Riemgam{R^{(\gamma)}}

\def\gsph{g_{\mathbb{S}^{m-1}}}
\def\sph{\mathbb{S}}

\def\ellr{\ell_{\lambda}}
\def\nr{\n^{\lambda}}
\def\zetar{\zeta^{\lambda}}
\def\Yr{\Y_{\lambda}}
\def\YT{\Y_T}

\def\UU{{\mathcal U}}
\def\tilrho{\tilde{\rho}}

\def\sign{\mbox{sign}}
\def\Z{Z}
\def\ZT{\Z_T}
\def\Zr{\Z_{\lambda}}
\def\S{S}

\usepackage{geometry}
\geometry{a4paper,total={170mm,257mm},
left=20mm,top=20mm}

\begin{document}

\newcommand{\bm}[1]{\mbox{\boldmath $#1$}}


\def\Journal#1#2#3#4#5#6{#1, ``#2'', {\em #3} {\bf #4}, #5 (#6).}
\def\JournalPrep#1#2{#1, ``#2'',  In preparation}
\def\Monograph#1#2#3#4{#1, ``#2'', #3 (#4)}
\def\Living#1#2#3#4#5#6#7{#1, #2. Living Rev. Relativity \textbf{#3}, #4 (#5),
URL (cited on #6): 
http://www.livingreviews.org/#7}


\def\JHDE{\em J. Hyp. Diff. Eqns.}
\def\IMRN{\em Int. Math. Res. Not.}
\def\JGP{\em J. Geom. Phys.}
\def\JDG{\em J. Diff. Geom.}
\def\CQG{\em Class. Quantum Grav.}
\def\JPA{\em J. Phys. A: Math. Gen.}
\def\PRD{{\em Phys. Rev.} \bm{D}}
\def\GRG{\em Gen. Rel. Grav.}
\def\IJT{\em Int. J. Theor. Phys.}
\def\PR{\em Phys. Rev.}
\def\RMP{\em Rep. Math. Phys.}
\def\MNRAS{\em Mon. Not. Roy. Astr. Soc.}
\def\JMP{\em J. Math. Phys.}
\def\DG{\em Diff. Geom.}
\def\CMP{\em Commun. Math. Phys.}
\def\APP{\em Acta Phys. Polon.}
\def\PRL{\em Phys. Rev. Lett.}
\def\ARAA{\em Ann. Rev. Astron. Astroph.}
\def\ANP{\em Annals Phys.}
\def\AP{\em Ap. J.}
\def\APJL{\em Ap. J. Lett.}
\def\MPL{\em Mod. Phys. Lett.}
\def\PREP{\em Phys. Rep.}
\def\AASF{\em Ann. Acad. Sci. Fennicae}
\def\ZP{\em Z. Phys.}
\def\PNAS{\em Proc. Natl. Acad. Sci. USA}
\def\PLMS{\em Proc. London Math. Soth.}
\def\AIHP{\em Ann. Inst. H. Poincar\'e}
\def\ANYAS{\em Ann. N. Y. Acad. Sci.}
\def\SPJ{\em Sov. Phys. JETP}
\def\PAWBS{\em Preuss. Akad. Wiss. Berlin, Sitzber.}
\def\PPLL{\em Phys. Lett. A }
\def\QJRAS{\em Q. Jl. R. Astr. Soc.}
\def\CR{\em C.R. Acad. Sci. (Paris)}
\def\CP{\em Cahiers de Physique}
\def\NC{\em Nuovo Cimento}
\def\AM{\em Ann. Math.}
\def\APP{\em Acta Physica Polonica}
\def\BAMS{\em Bulletin Amer. Math. Soc}
\def\CPAM{\em Commun. Pure Appl. Math.}
\def\PJM{\em Pacific J. Math.}
\def\ATMP{\em Adv. Theor. Math. Phys.}
\def\PRSA{\em Proc. Roy. Soc. Lond. A.}
\def\APPT{\em Ann. Poincar\'e Phys. Theory}
\def\RPM{\em Rep. Math. Phys.}
\def\AHP{\em Annales Henri Poincar\'e}

\title{Hypersurface data: General properties and Birkhoff theorem 
in spherical symmetry}

\author{Marc Mars}

\address{
Instituto de F\'{\i}sica Fundamental y Matem\'aticas, IUFFyM\\
Universidad de Salamanca\\
Plaza de la Merced s/n \\
37008 Salamanca, Spain \\
e-mail: marc@usal.es. \\}

\begin{abstract}
  The notions of (metric) hypersurface data were introduced in \cite{MarsGRG} 
as a tool to analyze, from an abstract viewpoint,
hypersurfaces of arbitrary signature in pseudo-riemannian
manifolds. In this paper, general geometric properties of these notions are 
studied. In particular, the properties of the gauge group inherent to the
geometric construction are analyzed and the metric hypersurface connection
and its corresponding curvature tensor are studied. The results
set up the stage for various potential applications. The particular  but relevant case of spherical
symmetry is considered in detail. In particular, a collection
of gauge invariant quantities and a radial  covariant derivative
is introduced, such that the constraint equations of the Einstein field equations with matter can be written
in a very compact form. The general solution of these equations
in the vacuum case and Lorentzian
ambient signature is obtained, and a generalization of the Birkhoff
theorem to this abstract hypersurface setting is derived.

\end{abstract}

\keywords{Abstract hypersurface, metric data, hypersurface data, gauge group,
  spherical symmetry, Birkhoff theorem}

\subjclass[2010]{53B05, 53B25. 53C50. 83C05}

\maketitle

\section{Introduction}

The geometric character of the gravitational field requires that 
the initial value problem  for the gravitational field
is quite different than for other evolutionary systems. The
initial data does not consist simply in providing 
the initial values of the fields
(and as many initial time derivatives as needed) at a given initial
initial hypersurface. Instead, one provides an $(n-1)$- manifold and geometric
data on that manifold (usually satisfying suitable constraint equations,
depending on the specific gravity theory at hand) and asks for  the existence
of a spacetime where this data can be  embedded so that the geometry
of the embedded hypersurface suitably agrees with the prescribed geometric data. 
Sometimes this essentially geometric nature of the problem may be
somewhat obscured by the way how the evolution problem is set up.
For instance, in the case of the
Cauchy problem for the Einstein (say vacuum) field equations,
one knows a priori there exists a globally
hyperbolic Ricci flat spacetime where the abstract initial manifold
(and data) can be embedded as a spacelike hypersurface. One can then adapt coordinates, say $\{ t,x^i\}$  so that $\{ t=0\}$ corresponds
to this embedded hypersurface. By this method, the problem has been 
converted into a more standard initial value problem (although the diffeomorphism
freedom remains, and needs to be addressed e.g. by imposing coordinate
conditions such as, for instance, harmonic coordinates). In this case, however,
it is clear and well-understood that the initial data are given in a
a fully abstract manifold, completely detached from the
spacetime one intends to construct.

Another  instance of this conversion of an intrinsically
geometric initial value problem into
a standard PDE initial value problem occurs
in the Goursat (or
characteristic) initial value problem for the Einstein (vacuum) field
equations. Here the initial values are given in a pair of null
hypersurfaces intersecting on a spacelike hypersurface (or 
on the null cone of a point). One also adapts coordinates
to these hypersurfaces and solves a PDE problem. The existence of solutions
of the PDE proves at the same time the existence of the spacetime. It should be noted, however, that in the characteristic initial value problem as addressed
so far in the literature \cite{ZumHagen, Rendall, Caciotta,
Luk, Choquet, ChruscielPaetz},
one does not start with abstract geometric data and
asks whether there exists a spacetime where this data can be suitably embedded.
Instead, one assumes the hypersurface(s) to be embedded, adapts coordinates
to it, fixes values of a suitable subset of the spacetime metric
components (and of their derivatives)  on this hypersurface 
and solves the PDE problem. Although this procedure is perfectly
fine to show existence, it obscures rather strongly the essentially
geometric nature of the initial data. It also obscures the
fundamentally  abstract (in the sense of detached from the
spacetime to be constructed) nature of that data.  The fundamental reason
behind this  is the fact for null hypersurfaces (or more generally,
hypersurfaces that may have null points) 
it is much harder to define geometric structures that one can then fully
detach from the spacetime.

In \cite{MarsGRG} 
a notion of abstract hypersurface data, fully detached from the
spacetime was put forward.  The procedure followed there was a
top-bottom approach, i.e. one considered an embedded hypersurface (of fully 
unrestricted causal character) and identified 
geometric structures in the hypersurface that could then be promoted
to structures in an abstractly defined manifold (i.e. not  embedded).
The main result in \cite{MarsGRG} is, besides the introduction of geometric structures
that capture the geometry of embedded hypersurfaces at a fully detached level,
the derivation of constraint equations that this data  must
necessarily satisfy in order to be embeddable in a spacetime. 
The results were also applied to define an abstract notion of
spacetime shell, corresponding to a hypersurface where two spacetimes can
be matched so that matter and/or gravitational field are (distributionally)
supported on this codimension one submanifold. This abstract notion
of hypersurface data that can cope with null points opens up the possibility
to address several interesting initial value problems. In particular,
proving well-posedness of the characteristic initial value problem
given abstract hypersurface data, thus providing a fully
geometric initial value formulation of the problem and putting it at
the same level as the standard Cauchy problem.
Another interesting problem that may be attempted is to study well-posedness
of the abstract achronal initial value problem (i.e. when the abstract data
contains no timelike points).  It should be emphasized that the notion of
abstract hypersurface data
is useful not only to address geometric initial value problems of a more
general causal nature, but even to study the standard Cauchy initial
value problem. This is because  the notion of hypersurface data has a
built-in gauge freedom, that is a essential
to deal with null points, and provides a very flexible framework in all cases,
including the spacelike or timelike ones.  Another interesting setup where the abstract notion can be relevant is the study of so-called Killing initial data
(or KID) equations. In the spacelike Cauchy \cite{Coll, Moncrief, BeigChrusciel}
or null caracteristic \cite{ChruscielPaetz2} case, it has been possible to find necesary and sufficient conditions that the data must satisfy so that the spacetime one constructs by solving the field  equations (say vacuum) admits a Killing vector. The abstract geometric structures  provide a suitable framework to unify and extend these type of results to signature changing situations.

The fully abstract and detached construction of geometric data on the hypersurface may also find applications in formulations of General Relativy as
a constrained system (\`a la Dirac). This, and the associated
canonical formalism that it leads to, is potentially useful for an eventual theory of quantum gravity (see \cite{Friedrich}, \cite{Thiemann} and references therein).

The approach we follow here is a bottom-up approach. i.e. we define
all the notions at the fully abstract level and derive their general
properties without making any a priori assumption concerning the embeddedness
of the data in a spacetime. The connection with embedded data is made by 
quoting the results in \cite{MarsGRG}. The aim of the paper is to analyze general
properties of the geometric concept of abstract hypersurface data. 
This will provide a convenient stage to analyze the well-posedness issues 
mentioned above or, hopefully, many problems related to the abstract geometry 
of hypersurfaces.  As an example of the capabilities of the method,
we consider the simplest case of spherically symmetric hypersurface
data. After introducing suitable gauge invariant quantities, and a suitable
radial covariant derivative we pose and solve the vacuum constraint equations.
This leads to a generalized Birkhoff theorem.

The organization of the paper is as follows. In  Section \ref{Summary}
we recall the definitions of metric hypersurface data, 
hypersurface data and  describe its fundamental properties, including the
notions of null and non-null points, and a canonical volume form. The
connection with the geometry of codimension one submanifolds in ambient pseudo-riemannian spaces is made via the notion of embedded  (metric) hypersurface
data \cite{MarsGRG} which we recall. Section \ref{GaugeStruc} is devoted
to studying the inherent gauge freedom that the abstract data carries with.
In particular we identify the underlying gauge group and the effect
of a gauge transformation in the case of embedded data. We also 
determine the elements of the gauge group that leave the data invariant,
and analyze the underlying reason why non-null points behave very differently
to null points in this respect.  Sections \ref{Sect:Connection} and
\ref{Sect:Curvature} are
devoted respectively to analyzing the properties of the metric hypersurface
connection and the corresponding curvature. In particular we show existence
and uniqueness of this connection, find its behaviour under gauge
transformations and compute several identities satisfied by its 
curvature tensor. After recalling the notion of matter-hypersurface data in
Section \ref{Sect:Matter}  we devote the reminder of the paper to study in detail
the spherically symmetric case. In Section \ref{Sect:Spher} we define this notion and 
find a number of gauge covariant quantities as well as a radial
covariant  derivative that allows one to write down the constraint equations
in a very compact form. In Section \ref{Sect:Birkhoff} we solve the constraint
equations in the vacuum case and Lorentz ambient signature and prove an
embedding results that generalized the Birkhoff theorem to the abstract
hypersurface data level.

\section{Abstract data and embedded data}

\label{Summary}

The two fundamental objects studied in this paper
are the so-called metric hypersurface data and hypersurface data. These
notions were introduced in \cite{MarsGRG} in the process of extracting
the fundamental ingredients that allow one to describe 
the geometry of hypersurfaces  of pseudo-riemannian
manifolds in a fully
abstract way, i.e. without viewing them as embedded in any ambient space.

\begin{definition}[Metric hypersurface data and hypersurface data]
A four-tuple
$\{\N^m,\gamma,\ellc,\ll\}$ where
$\N$ is a smooth $m$-dimensional manifold\footnote{Manifolds are assumed to
  be connected in this paper.}, 
$\gamma$ is a symmetric two-covariant tensor field,
$\ellc$  a one-form field and  $\ll$ a scalar field, defines
{\bf metric hypersurface data} provided the
symmetric $2$-covariant tensor $\A|_p$ on $T_p \N \times \mathbb{R}$ defined by
\be
\A|_p((W,a),(Z,b))\defi \gamma|_p(W,Z) + a \, \ellc|_p(Z)  + b \,
\ellc|_p(W) + a \, b \, \ll|_p
\en
is non-degenerate at every $p \in \N$. 
A five-tuple  $\{\N,\gamma,\ellc,\ll, {\bf Y}\}$ where
$\{\N,\gamma,\ellc,\ll\}$ is metric hypersurface data and
${\bf Y}$ is a symmetric two-covariant tensor field is called
{\bf hypersurface data}. 
\end{definition}

All tensor field in this paper are assumed to be smooth. Obviously these
notions also make sense under finite, sufficiently high,
differentiability assumptions.

Given metric hypersurface data the following geometric constructions
can be made. Since $\A|_p$ is a symmetric two-covariant
and non-degenerate tensor on $T_p \N \oplus \mathbb{R}$, there
exists a (unique) symmetric, two-contravariant tensor
$\gup|_p$ defined by the property $\gup|_p (\cdot, \A|_p (V, \cdot ))
= V$ for any vector $V \in T_p \N \oplus \mathbb{R}$. From $\gup|_p$ we can define a symmetric contravariant tensor $P|_p$, vector $\n|_p$ and scalar
$\nn|_p$ on $T_p \N$ by the decomposition
\begin{align}
\gup|_p ((\bm{\alpha},a), (\bm{\beta},b)) =
P|_p (\bm{\alpha},\bm{\beta}) + a \, \n|_p (\bm{\beta}) +
b \, \n |_p (\bm{\alpha}) + a \, b \, \nn|_p, \quad \quad
\bm{\alpha}, \bm{\beta} \in T^{\star}_p \N, \quad a,b \in \mathbb{R}.
\label{inverse}
\end{align}
We also define $\A, \A^{\sharp}$, $P$, $\n$ and
$\nn$ to be the corresponding tensor fields. The defining condition for
$\gup$ is equivalent to
\begin{align}
&\gamma(\n,\cdot ) +  \nn \ellc = 0 \label{EqP1bis} \\
& \ellc(\n) = 1-  \nn \ll \label{EqP2bis} \\
& P (\cdot, \ellc) = - \ll \n, \label{EqP3bis} \\
& P ( \cdot, \gamma(\cdot, X )) = X - \ellc(X) \n \quad
\quad & \forall X \in T_p \N, \label{EqP4bis}  \\
& \gamma ( \cdot, P(\cdot, \bm{\alpha} )) = \bm{\alpha} - 
\bm{\alpha} (\n) \ellc  \quad
\quad &  \forall \bm{\alpha} \in T^{\star}_p \N,  \label{EqP5bis}  
\end{align}
which define uniquely $P$, $\n$ and $\nn$. In abstract index
notation, $\A$ and $\A^{\sharp}$ can be written as square $(m+1)$-matrices
as
\begin{eqnarray}
\A = 
\left ( \begin{array}{cc}
              \gamma_{ab} & \ellc_b \\
              \ellc_a & \ll 
             \end{array}
\right ),  
\quad \quad \quad \quad 
\A^{\sharp} = \left ( \begin{array}{cc}
              P^{ab} & n^b \\
              n^a & \nn 
             \end{array}
\right ),
\label{matrixA}
\end{eqnarray}
and the defining conditions
(\ref{EqP1bis})-(\ref{EqP5bis}) 
read respectively (the last two are condensed
into one when indices are used)
\begin{eqnarray}
\gamma_{ab} n^b  + \nn \ell_a = 0, \label{EqP4} \\
n^a \ell_a  + \nn \ll = 1,   \label{EqP3} \\ 
P^{ab} \ell_b + \ll n^a = 0, \label{EqP2} \\
P^{ab} \gamma_{bc} + n^a \ell_c = \delta^a_c. \label{EqP1} 
\end{eqnarray}

We emphasize that no assumption on the signature of the tensor
$\A$ is made, so unless otherwise stated the results below
are valid for any such signature.
An interesting property is that $\gamma$ has at most one degeneration
direction at each point $p \in \N$.
\begin{lemma}
Let $\{\N^m,\gamma,\ellc,\ll\}$ be metric hypersurface data and $p \in \N$.
Then the radical of $\gamma$, with the usual definition
 $\mbox{Rad}|_p \defi \{ X \in T_p  \N \, ; \,
\gamma(X,\cdot)=0 \} \subset
T_p \N$, is either zero or one-dimensional.
\end{lemma}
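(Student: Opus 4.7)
The plan is to exploit the non-degeneracy of $\A|_p$ to show that the linear functional $\ellc|_p$, restricted to the radical of $\gamma$ at $p$, is injective. Since its image lies in $\mathbb{R}$, this immediately forces $\dim \mbox{Rad}|_p \leq 1$.

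Concretely, I would argue as follows. Pick any $X \in \mbox{Rad}|_p$. I would then consider the pairing of $(X,0) \in T_p\N \oplus \mathbb{R}$ against an arbitrary element $(Z,b) \in T_p\N \oplus \mathbb{R}$ using the defining expression for $\A|_p$:
\begin{equation*}
\A|_p\bigl((X,0),(Z,b)\bigr) = \gamma|_p(X,Z) + 0\cdot \ellc|_p(Z) + b\,\ellc|_p(X) + 0\cdot \ll|_p = b\,\ellc|_p(X),
\end{equation*}
where the vanishing of $\gamma|_p(X,Z)$ for every $Z$ follows from $X \in \mbox{Rad}|_p$.

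Now if $\ellc|_p(X)=0$, then $\A|_p\bigl((X,0),(Z,b)\bigr)=0$ for every $(Z,b) \in T_p\N \oplus \mathbb{R}$. Non-degeneracy of $\A|_p$ then forces $(X,0)=0$, i.e.\ $X=0$. Hence the restriction $\ellc|_p : \mbox{Rad}|_p \to \mathbb{R}$ has trivial kernel and is therefore injective, yielding $\dim \mbox{Rad}|_p \leq 1$.

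There is really no serious obstacle in this proof, since the whole content of the metric hypersurface data definition is precisely the non-degeneracy of $\A$; the only point worth care is to recognize that vectors of the form $(X,0)$ with $X$ in the radical of $\gamma$ pair with $\A$ against an arbitrary $(Z,b)$ exclusively through the term $b\,\ellc(X)$, so that non-degeneracy of $\A$ collapses into the statement that $\ellc$ separates points of $\mbox{Rad}|_p$.
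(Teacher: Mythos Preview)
Your proof is correct and follows essentially the same approach as the paper: both arguments observe that for $X \in \mbox{Rad}|_p$ the vector $(X,0)$ pairs via $\A|_p$ against an arbitrary $(Z,b)$ only through the term $b\,\ellc|_p(X)$, so non-degeneracy of $\A|_p$ forces $\ellc|_p$ to be injective on the radical. The paper phrases this by taking two non-zero radical vectors $X_1,X_2$ and exhibiting a linear combination in the kernel of $\A|_p$, while you state it directly as injectivity of $\ellc|_p\!\restriction_{\mbox{\scriptsize Rad}|_p}$; the content is identical.
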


\begin{proof}
It suffices to show that when the radical contains a non-zero element,
then it is
one-dimensional. Assume $X_1,X_2$ are two non-zero degeneration vectors 
of $\gamma$ and define
(clearly non-zero) vectors on $T_p \N \oplus \mathbb{R}$ by
$$(
X_i, 0), \quad i = 1,2.
$$
Applying $\A|_p$  gives, respectively, the one-forms
$$
\A|_p(X_1,0) = \ellc(X_1) \mathfrak{i}|_p, \quad \quad
\A|_p(X_2,0) = \ellc(X_2) \mathfrak{i}|_p
$$ 
where $\mathfrak{i}|_p \in \left ( T_p \N \oplus \mathbb{R} \right )^{\star}$
is defined by $\mathfrak{i} (X,a) = a$.
Now, $\ellc(X_2)$ cannot vanish because $\A|_p$ is non-degenerate. Consequently
$(X_1 - \frac{\ellc(X_1)}{\ellc(X_2)} X_2, 0)$ lies in the kernel
of $\A|_p$ and hence vanishes identically, which shows that
$\{ X_1, X_2\}$ are linearly dependent and the dimension of the radical 
cannot be two or more.
\end{proof}

The previous lemma suggests the following definition.
\begin{definition}
Let $\{\N,\gamma,\ellc,\ll\}$ be metric hypersurface data and
$p \in \N$. The point $p$ is called a {\bf non-null point} 
if $\mbox{Rad}|_p = \{ 0 \}$. $p$ is a 
 {\bf null point} 
if $\mbox{dim} (\mbox{Rad}|_p) = 1$. 
\end{definition}

\begin{remark}
From the expression of an inverse matrix in terms of its minors, it follows
immediately from (\ref{matrixA}) that 
$\nn=0$ if and only if
$\mbox{det}(\gamma)=0$, i.e. iff $\gamma$ is degenerate.
So $p$ is a null point if and only if $\nn|_p =0$.
Moreover, at null points $n |_p \neq 0$ (as otherwise $\A^{\sharp}$ would be 
degenerate) and (\ref{EqP4}) shows that 
$\mbox{Rad}|_p = \langle n|_p \rangle$.
Thus, at null points $n$ defines the only degeneration direction
of $\gamma$. 
\end{remark}

\begin{definition}
The metric hypersurface data $\metdata$ is called {\bf null metric
hypersurface data} (or simply null data), if all points $p \in \N$
are null points.
\end{definition}

At non-null points, the fields $P$, $n$ and $\nn$ admit explicit expressions
in terms of the metric hypersurface data.
\begin{lemma}
\label{Pnnn_non-null}
Let $\{\N,\gamma,\ellc,\ll\}$ be metric hypersurface data and $p \in \N$
a non-null point. Then $\mbox{det}(\gamma)|_p \neq 0$ and
$(\ll - \gamma^{\sharp} (\ellc,\ellc) )|_p \neq 0$, where
$\gamma^{\sharp}|_p$ is the contravariant metric associated to
$\gamma|_p$. The tensors $P|_p$, $n |_p$ and $\nn|_p$ read (evaluating
everything at $p$)
\begin{align*}
P  = \gamma^{\sharp} + \frac{1}{\ll - \gamma^{\sharp}(\ellc,\ellc)}
\ell^{\sharp} \otimes \ell^{\sharp}, 
\quad \quad
n = - \frac{1}{\ll - \gamma^{\sharp}(\ellc,\ellc)} \ell^{\sharp},
\quad \quad 
\nn = \frac{1}{\ll - \gamma^{\sharp}(\ellc,\ellc)}
\end{align*}
where $\ell^{\sharp} := \gamma^{\sharp}(\ellc,\cdot)$.
\end{lemma}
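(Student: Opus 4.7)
The plan is to solve the defining algebraic system (\ref{EqP1})--(\ref{EqP4}) for $P$, $n$ and $\nn$ directly, using that at a non-null point $\gamma|_p$ is a non-degenerate bilinear form on $T_p\N$.

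First, I would observe that the condition $\mbox{Rad}|_p = \{0\}$ says precisely that the map $T_p\N \to T^{\star}_p\N$ induced by $\gamma|_p$ is injective; since both spaces have the same dimension, this is equivalent to $\mbox{det}(\gamma)|_p \neq 0$, so the contravariant metric $\gamma^{\sharp}|_p$ (and hence $\ell^{\sharp} := \gamma^{\sharp}(\ellc,\cdot)$) is well defined. By the remark preceding the lemma, $\nn|_p \neq 0$ at a non-null point.

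Second, I would contract (\ref{EqP4}) with $\gamma^{\sharp\, ca}$ to isolate $n^a$, obtaining $n^a = -\nn\, \ell^{\sharp\, a}$. Substituting into (\ref{EqP3}) yields the scalar identity $\nn(\ll - \gamma^{\sharp}(\ellc,\ellc)) = 1$, which simultaneously establishes that $\ll - \gamma^{\sharp}(\ellc,\ellc) \neq 0$ at $p$ and produces the announced formula for $\nn$; back-substitution then gives the formula for $n$.

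With $n^a$ in hand, equation (\ref{EqP1}) becomes $P^{ab}\gamma_{bc} = \delta^a_c - n^a \ell_c$, and one further contraction with $\gamma^{\sharp\, cd}$ delivers the asserted expression for $P$. Consistency with (\ref{EqP2}) is then automatic from the uniqueness of the triple $(P,n,\nn)$ already noted in the text. There is no substantive obstacle: the lemma is the coordinate-free rephrasing of the Schur-complement inversion of the block matrix (\ref{matrixA}), and the only point that demands a moment's care is deducing $\ll - \gamma^{\sharp}(\ellc,\ellc) \neq 0$, which is read off cleanly from $\nn(\ll - \gamma^{\sharp}(\ellc,\ellc)) = 1$ once $\nn \neq 0$ has been secured.
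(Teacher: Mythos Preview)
Your proof is correct. The route differs slightly from the paper's: the paper invokes the Schur complement identity $\det(\A) = (\det\gamma)\bigl(\ll - \gamma^{\sharp}(\ellc,\ellc)\bigr)$ to conclude $\ll - \gamma^{\sharp}(\ellc,\ellc) \neq 0$, and then says the formulas for $P$, $n$, $\nn$ follow from block-matrix inversion or by direct verification in (\ref{EqP4})--(\ref{EqP1}). You instead solve the defining system constructively, using the already-established fact (from the Remark preceding the lemma) that $\nn \neq 0$ at non-null points to read off $\ll - \gamma^{\sharp}(\ellc,\ellc) \neq 0$ from the derived relation $\nn\bigl(\ll - \gamma^{\sharp}(\ellc,\ellc)\bigr) = 1$. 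Your argument is marginally more self-contained, since it does not appeal to an external matrix identity; the paper's version, on the other hand, yields the determinant relation (\ref{reldets}) as a byproduct, which is reused later (e.g.\ in the remark after Proposition \ref{volume}).
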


\begin{proof}
By definition of non-null point, $\gamma|_p$ is non-degenerate, hence
a metric, and  admits an associated contravariant metric 
$\gamma^{\sharp}$. We work at $p$ from now on 
and  simplify the notation by dropping $|_p$. 
Known properties of matrices (see e.g. \cite{Schur})
state that the  determinant of $\A$  is
\begin{align}
\mbox{det} (\A)|_p = \left ( \mbox{det} \gamma \right ) 
\left ( \ll - \gamma^{\sharp}(\ellc,\ellc) \right ),
\label{reldets}
\end{align}
which in particular implies  
$\left ( \ll - \gamma^{\sharp}(\ellc,\ellc) \right ) \neq 0$ at $p$.
The explicit form of $P$, $n$, and $\nn$ follow either from using 
known forms for the 
inverse of block diagonal metrics (e.g. \cite{Schur})
or simply by direct substitution in the defining expressions 
(\ref{EqP4})-(\ref{EqP1}).
\end{proof}

It is also of interest to relate the signatures of $\gamma$ and
of $\A$ at non-null-points. To be precise, by signature
of a non-degenerate quadratic form $q$, denoted
$\mbox{sign} (q)$, we mean
the collection of $\{-1,$ $\cdots,$ $-1,$ $+1,$ $\cdots,$ $+1\}$
in the canonical
expression of the quadratic form. We view this collection as a set, hence the order of
elements is irrelevant.
\begin{lemma}
\label{signa}
Let $\metdata$ be metric hypersurface data and $p \in \N$ a non-null
point. Then the signatures of $\gamma|_p$ and $\A|_p$ are related by
\begin{align}
\mbox{sign} (\A|_p) = \mbox{sign} (\gamma|_p) \cup \{ \mbox{sign} (\ll -
\gamma(\ell^{\sharp},\ell^{\sharp})) \}.
\label{signatures}
\end{align}
\end{lemma}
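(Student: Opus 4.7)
The plan is to exploit Sylvester's law of inertia by finding a basis of $T_p\N \oplus \mathbb{R}$ in which the matrix of $\A|_p$ is block diagonal, with $\gamma|_p$ in the upper-left block and the scalar $\ll - \gamma(\ell^\sharp,\ell^\sharp)$ in the lower-right entry. Since non-degeneracy of $\gamma|_p$ (guaranteed at a non-null point by Lemma \ref{Pnnn_non-null}) is precisely what is needed to block-diagonalise a symmetric matrix of the form (\ref{matrixA}) via a congruence transformation, this is the natural route.

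Concretely, I would work at $p$ and, dropping the evaluation symbol, consider the change of basis on $T_p\N \oplus \mathbb{R}$ given by $(W,a) \mapsto (W',a)$ with $W = W' - a\,\ell^\sharp$, where $\ell^\sharp = \gamma^\sharp(\ellc,\cdot)$. This is clearly a linear isomorphism (its inverse sends $(W',a)$ to $(W' - a\ell^\sharp, a)^{-1}$ in the appropriate sense), so it induces a congruence on $\A|_p$. Substituting into the definition
\be
\A|_p((W,a),(Z,b)) = \gamma(W,Z) + a\,\ellc(Z) + b\,\ellc(W) + ab\,\ll,
\en
and repeatedly using $\gamma(\ell^\sharp,\cdot) = \ellc(\cdot)$ together with $\ellc(\ell^\sharp) = \gamma(\ell^\sharp,\ell^\sharp)$, the cross terms in $(W',a)$ and $(Z',b)$ cancel and one is left with
\be
\A|_p((W',a),(Z',b)) = \gamma(W',Z') + ab\bigl(\ll - \gamma(\ell^\sharp,\ell^\sharp)\bigr).
\en
In this basis $\A|_p$ is block diagonal with blocks $\gamma|_p$ and the $1\times 1$ block $\ll - \gamma(\ell^\sharp,\ell^\sharp)$, which is non-zero by Lemma \ref{Pnnn_non-null} (consistent with non-degeneracy of $\A$).

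Applying Sylvester's law of inertia to this congruence yields the signature relation (\ref{signatures}). There is no real obstacle here: the only subtlety is tracking the identities $\gamma(\ell^\sharp,Z) = \ellc(Z)$ and $\ellc(\ell^\sharp) = \gamma(\ell^\sharp,\ell^\sharp)$ correctly when verifying that the cross terms vanish, and noting that $\gamma(\ell^\sharp,\ell^\sharp)$ coincides with $\gamma^\sharp(\ellc,\ellc)$ appearing in Lemma \ref{Pnnn_non-null} so that the nonvanishing of the lower-right block is automatic.
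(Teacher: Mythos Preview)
Your argument is correct and follows essentially the same route as the paper: both block-diagonalise $\A|_p$ by the congruence that replaces the $\mathbb{R}$-direction $(0,1)$ with $(-\ell^\sharp,1)$, killing the cross terms and leaving $\gamma|_p$ and the scalar $\ll - \gamma(\ell^\sharp,\ell^\sharp)$ as blocks. The only cosmetic difference is that the paper first passes to a canonical basis for $\gamma|_p$ and then reads off the canonical form of $\A|_p$ directly, whereas you block-diagonalise in an arbitrary basis and invoke Sylvester's law; the underlying linear algebra is the same.
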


\begin{proof}
Let $\{ e_{a} \}$  be a basis of $T_p \N$ in which $\gamma_p$ takes
its canonical form, i.e.$\gamma_{ab} := \gamma|_p(e_a,e_b) = \eta_{ab}$
with $\eta_{ab} = \mbox{diag} (-1, \cdots, -1, +1, \cdots +1)$.
We view $T_p \N$ as a vector subspace of $T_p \N \times \mathbb{R}$ and
define $e_{m+1} = (0,1) \subset T_p \times \mathbb{R}$. Define
$\ell_a := \A|_p (e_a,e_{m+1})$ and $\ell^a := \eta^{ab} \ell_b$ where
$\eta^{ab} := \eta_{ab}$. In the transformed basis $\{ e_a':= e_a, e_{m+1}'
:= (- \ell^b e_b,1) \}$, the components of $\A|_p$ and $\gamma |_p$ are
(note that $e_{m+1}' = e_{m+1} - \ell^b e_b$)
\begin{align*}
\A|_p (e'_a, e'_b) & = 
\A|_p (e_a, e_b) = 
\gamma|_p (e_a,e_b) = \eta_{ab}, \\
\A |_p (e'_a, e'_{m+1} ) & = \A|_p ( e_a, e_{m+1} - \ell^b e_b )
) = \ell_a - \ell^{b} \eta_{ab} = 0, \\
\A|_p (e'_{m+1},e'_{m+1} ) & = 
\A|_p (e'_{m+1}, e_{m+1} - \ell^b e'_b) =
\A|_p (e'_{m+1}, e_{m+1} ) 
= \A|_p (e_{m+1} - \ell^a e_b, e_{m+1}) \\
& = \ll - \ell^a \ell_a
= \ll - \gamma(\ell^{\sharp},\ell^{\sharp}) := \epsilon a^2, \quad
\quad \epsilon=\pm 1, a >0.
\end{align*}
Hence $\{e'_a, a^{-1}
e'_{m+1} \}$ is a canonical basis
of $\A|_p$, and  (\ref{signatures}) follows.
\end{proof}

It is remarkable that hypersurface data is sufficient to define a volume form
on the manifold.

\begin{proposition}[Volume form]
\label{volume}
Let $\{\N,\gamma,\ellc,\ll\}$ be metric hypersurface data 
and assume
$\N$ to be oriented. Then the following local expression 
in any  chart $\{x^a \}$ defines a volume form on $\N$,
\begin{align}
\label{volformdef}
\etaell = \sigma \sqrt{| \mbox{det} (\A ) |} \, \E
\end{align}
where $\sigma =+1 (-1)$ if the chart is positively (negatively) oriented, $\E$ is the Levi-Civita totally antisymmetric
symbol and $| \cdot |$ is the absolute value.
\end{proposition}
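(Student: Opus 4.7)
The plan is to verify two things: (a) the local expression defines a chart-independent, smooth global $m$-form on $\N$, and (b) this $m$-form is nowhere vanishing. Point (b) is immediate, since $\A$ is non-degenerate at every point by definition of metric hypersurface data, so $\mbox{det}(\A)$ never vanishes and $\sqrt{|\mbox{det}(\A)|}$ is smooth and strictly positive everywhere. All the work lies in (a).

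For (a), I would study how the components transform under a change of coordinates $y^a = y^a(x^b)$ between two overlapping oriented charts. Write $M^a{}_b = \partial x^a/\partial y^b$ for the Jacobian matrix. Since $\gamma$ and $\ellc$ transform tensorially while $\ll$ is a scalar, inspection of the block form (\ref{matrixA}) shows that $\A$ transforms as the congruence
\begin{equation*}
\A' \;=\; \begin{pmatrix} M^T & 0 \\ 0 & 1 \end{pmatrix}\, \A\, \begin{pmatrix} M & 0 \\ 0 & 1 \end{pmatrix},
\end{equation*}
the crucial feature being that the extra row and column associated with the $\mathbb{R}$ factor carry no Jacobian weight. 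Taking determinants yields $\mbox{det}(\A') = (\mbox{det}\, M)^2\, \mbox{det}(\A)$, hence $\sqrt{|\mbox{det}(\A')|} = |\mbox{det}\, M|\, \sqrt{|\mbox{det}(\A)|}$. This holds uniformly on $\N$, independently of whether $\gamma$ is degenerate, which is what makes the construction go through at null points where $\mbox{det}(\gamma) = 0$.

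Combining this with the standard identity $M^{b_1}{}_{a_1}\cdots M^{b_m}{}_{a_m}\, \E_{b_1\cdots b_m} = (\mbox{det}\, M)\, \E_{a_1\cdots a_m}$, the tensorial transformation law required of an $m$-form,
\begin{equation*}
\sigma'\sqrt{|\mbox{det}(\A')|}\, \E_{a_1\cdots a_m} \;=\; M^{b_1}{}_{a_1}\cdots M^{b_m}{}_{a_m}\, \sigma\sqrt{|\mbox{det}(\A)|}\, \E_{b_1\cdots b_m},
\end{equation*}
reduces to the single condition $\sigma'\, |\mbox{det}\, M| = \sigma\, \mbox{det}\, M$, equivalently $\sigma\sigma' = \mbox{sign}(\mbox{det}\, M)$. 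This is satisfied by the very definition of $\sigma$ as the orientation of the chart, because two oriented charts induce the same orientation on $\N$ precisely when their Jacobian has positive determinant, and the opposite one when it is negative. There is no substantive obstacle in the argument; its only non-trivial ingredient is the observation that bordering $\gamma$ by the one-form $\ellc$ and the scalar $\ll$ to form $\A$ leaves the coordinate weight of the determinant unchanged, so the standard "density times $\E$'' construction of a volume form extends seamlessly across the null locus.
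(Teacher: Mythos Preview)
Your proof is correct and follows essentially the same approach as the paper: both identify the block-matrix transformation $\A' = \mbox{diag}(M^T,1)\,\A\,\mbox{diag}(M,1)$ under a change of chart and read off $\det(\A') = (\det M)^2 \det(\A)$, which is exactly the density weight needed for $\sqrt{|\det(\A)|}\,\E$ to transform as an $m$-form. Your version is slightly more explicit in that you handle both chart orientations and spell out the nowhere-vanishing of the form, whereas the paper restricts to positively oriented charts; but the core argument is the same.
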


\begin{remark}
  Note that at non-null points this volume form is different from the
  metric volume form $\bm{\eta_{\gamma}} := \sigma \sqrt{| \mbox{det} (\gamma)|}
  \epsilon$. The relation between both follows directly from (\ref{reldets})
 and reads
  \begin{align*}
    \etaell =  \sqrt{|\ll - \gamma^{\sharp}(\ellc,\ellc)|} \, \bm{\eta_{\gamma}}
    = \frac{1}{\sqrt{|\nn|}} \bm{\eta_{\gamma}},
  \end{align*}
  where in the second equality we used Lemma \ref{Pnnn_non-null}.
  \end{remark}

  \begin{proof}
We only need to check how  $| \mbox{det} (\A)|$ changes under
positively oriented changes of coordinates.
Fix a point $p \in \N$ and a positively oriented
coordinate chart $x$ defined on a neighbourhood
$U$ of $p$. Let $x'$ be another positively oriented chart on $U$ and $x'(x)$
be the coordinate change. Define
$J^{a'}_{\,\,b} = \left . \frac{\partial x'{}^{a'}}{\partial x^b} 
\right |_{x(p)}$ be the Jacobian
at $p$ and $J$ its (positive) determinant.  Since $\ell_{a} = J^{a'}_{\,\,a}
\ell'_{a'}$
and $\gamma_{ab} = J^{a'}_{\,\,a} J^{b'}_{\,\,b} \gamma'_{a'b'}$, with obvious 
notation, we have ($T$ denotes transpose)
\be
\mbox{det} (\A) = \left | \left ( \begin{array}{ll}
                                   (\gamma) & (\ell)^T \\
                                   (\ell) & \ll
                                   \end{array}
\right ) \right | =
\left | \left ( \begin{array}{ll}
                (J) & 0^T \\
                0 & 1
                \end{array} 
\right ) \left ( \begin{array}{ll}
                 (\gamma^{\prime}) & (\ell^{\prime})^T \\
                 (\ell^{\prime})  & \ll{}^{\prime}
                 \end{array}
\right ) \left ( \begin{array}{ll}
                (J^{T}) & 0^T \\
                0 &  1
                \end{array} 
\right ) \right | = 
\mbox{det} (J)^2 \mbox{det} (\A^{\prime}),
\en
where $(\ell)$ is a row vector with entries $\ell_a$ and similarly
for the other objects. In particular $(J)$ is the matrix
with entries $(J)^{a'}_{a} = J^{a'}_a$
where the upper index denotes column and the lower index row.
\end{proof}

The definition of metric hypersurface data and hypersurface data makes no 
reference to any ambient space.  The connection 
of the abstract definition 
with the geometry of embedded hypersurfaces in pseudo-riemannian
manifolds
is provided by the following Definition 
presented in \cite{MarsGRG} and the corresponding embedding
Proposition \ref{propembedded}.  The notion of rigging in the context
of codimension one submanifolds dates back to Schouten \cite{Schouten}:

\begin{definition}
A metric hypersurface data $\{\N^m,\gamma,\ellc,\ll\}$ is
{\bf embedded} in a pseudo-riemannian manifold $(\M^{m+1},\gM)$ (called
ambient space) if there
exists an embedding $\Phi : \N \rightarrow \M$ and 
a vector field $\rigging$ along $\Phi(\N)$ everywhere transversal to $\Phi(\N)$ (a 
so-called {\bf rigging})
such that,
\begin{equation}
\Phi^{\star} (\gM) = \gamma, \quad \Phi^{\star} \left ( \gM(\rigging, \cdot) \right ) =
\ellc,
\quad   \Phi^{\star} (\gM(\rigging,\rigging) ) = \ll. 
\label{condembedded}
\end{equation} 
The hypersurface data 
$\{\N,\gamma,\ellc,\ll,Y\}$ 
 is {\bf embedded} if
$\{\N,\gamma,\ellc,\ll\}$  is embedded and, in addition
\begin{equation}
\frac{1}{2} \Phi^{\star} \left ( {\pounds}_{\rigging\,\,} \gM \right ) = {\bf Y}.
\label{EmbHyp}
\end{equation}
\end{definition}

\begin{remark}
To simplify the notation, we will often use the same symbol
to denote a scalar function in the ambient space
and its pull-back under $\Phi$.
Thus, the third condition in (\ref{condembedded}) may also be written as
$\gM(\rigging,\rigging)  = \ll$.
\end{remark}

\begin{remark}
It is clear from the definition that a necessary condition for given metric
data to be embedded in $(\M,g)$ is that the signature of $\A$ is the same
as the signature of $g$. For this reason, we denote the
signature of $\A$ as {\bf ambient
signature} (irrespectively of whether the data is embedded or not).
\end{remark}

\begin{remark}
Note that $\Phi^{\star} (\pounds_{\rigging} g)$ requires an
extension of $\rigging$ off $\Phi(\N)$ but the result is independent
of this extension (it only depends on $\rigging$ along $\Phi(\N)$).
\end{remark}

\begin{remark}
\label{orientab}
A necessary and sufficient condition 
for an embedded hypersurface $\Phi(\N)$ to admit
a smooth transversal vector field $\rigging$ is that
$\Phi(\N)$  admits a smooth nowhere zero field
of normal one-forms (see Lemma 1 in \cite{MarsGRG}). In turn,
assuming $(\M,g)$ to be orientable
the existence of such smooth normal field is equivalent
to the orientability of $\N$ (see e.g. \cite{Guillemin} p. 106).
\end{remark}

\begin{proposition}
[\cite{MarsGRG}, cf. also \cite{MarsSenovilla1993}] 
\label{propembedded}
Let $\{\N,\gamma,\ellc,\ll, {\bf Y}\}$ be
embedded hypersurface data with embedding
$\Phi$, ambient space $(\M,\gM)$ and rigging vector $\rigging$.
Let $\bm{\normal}$ be the (unique) normal-one form of $\Phi(\N)$
satisfying $\bm{\normal}(\rigging) =1$. Consider a (local) basis $\{ e_a \}$ of 
$T \N$
and define
$\hat{e}_a := \Phi_{\star} (e_a)$.   
Then $\{ \rigging, \hat{e}_a \}$ is a (local)
basis of
$T \M|_{\Phi(\N)}$
and we can consider its dual 
basis  $\{ \bm{\normal}, \bm{\hat{\omega}}^a\}$. Then, the following 
decompositions
hold
\begin{align}
\normal \defi g^{\sharp} (\bm{\normal},\cdot) &=  n^a \hat{e}_a + \nn 
\rigging, \label{n}\\
g(\rigging,\cdot) & = \ell_a \bm{\hat{\omega}^a}  
+ \ll \bm{\normal}, \label{bmell} \\
 g(\hat{e}_a, \cdot) & = \gamma_{ab}  \bm{\hat{\omega}}^b
 + \ell_a \bm{\normal}, \label{bme} \\
g^{\sharp}(\bm{\omega}^a, \cdot) &= P^{ab} \hat{e}_b + n^a \rigging. \label{omegaa}
\end{align}
\end{proposition}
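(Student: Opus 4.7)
The plan is to exploit the fact that, by the transversality of $\rigging$ to $\Phi(\N)$, the set $\{\rigging,\hat{e}_a\}$ is genuinely a (local) basis of $T\M|_{\Phi(\N)}$, so both sides of each decomposition live in finite-dimensional spaces where equality is checked componentwise. The one-form $\bm{\normal}$ is normal to $\Phi(\N)$, so it annihilates every $\hat{e}_a$; together with the normalization $\bm{\normal}(\rigging)=1$ and the defining relations $\bm{\hat\omega}^a(\hat e_b)=\delta^a_b$, $\bm{\hat\omega}^a(\rigging)=0$, this identifies $\{\bm{\normal},\bm{\hat\omega}^a\}$ as the dual basis. Thus any covector on $T\M|_{\Phi(\N)}$ is determined by its values on $\rigging$ and $\hat e_a$, and any vector by its $g$-pairings with $\rigging$ and $\hat e_a$ (since $g$ is non-degenerate).

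First I would handle the two covector identities (\ref{bmell}) and (\ref{bme}). Writing $g(\rigging,\cdot)=A_a\bm{\hat\omega}^a+B\bm{\normal}$, the coefficients are obtained by evaluation on the basis: $A_a=g(\rigging,\hat e_a)=\Phi^{\star}(g(\rigging,\cdot))(e_a)=\ell_a$ and $B=g(\rigging,\rigging)=\ll$, both directly from (\ref{condembedded}). An identical computation for $g(\hat e_a,\cdot)$ gives $C_{ab}=g(\hat e_a,\hat e_b)=\Phi^{\star}(g)(e_a,e_b)=\gamma_{ab}$ and $D_a=g(\hat e_a,\rigging)=\ell_a$, yielding (\ref{bme}).

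Next, for the vector identities (\ref{n}) and (\ref{omegaa}), the strategy is to verify that the proposed right-hand sides satisfy the same defining pairings with $\gM$ as the left-hand sides. For (\ref{n}), I would compute
\begin{align*}
g(n^a\hat e_a+\nn\,\rigging,\rigging)&=n^a\ell_a+\nn\ll=1,\\
g(n^a\hat e_a+\nn\,\rigging,\hat e_b)&=n^a\gamma_{ab}+\nn\ell_b=0,
\end{align*}
where the first equality uses (\ref{EqP3}) and the second uses (\ref{EqP4}); these match $\bm{\normal}(\rigging)=1$ and $\bm{\normal}(\hat e_b)=0$, so by non-degeneracy of $g$ the vector $n^a\hat e_a+\nn\,\rigging$ must equal $g^{\sharp}(\bm{\normal},\cdot)$. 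For (\ref{omegaa}), the analogous pairings give $g(P^{ab}\hat e_b+n^a\rigging,\hat e_c)=P^{ab}\gamma_{bc}+n^a\ell_c=\delta^a_c$ by (\ref{EqP1}) and $g(P^{ab}\hat e_b+n^a\rigging,\rigging)=P^{ab}\ell_b+n^a\ll=0$ by (\ref{EqP2}), matching $\bm{\hat\omega}^a(\hat e_c)=\delta^a_c$ and $\bm{\hat\omega}^a(\rigging)=0$.

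There is no serious obstacle here; the proof is essentially bookkeeping. The only point worth a line of justification is ensuring that $\{\rigging,\hat e_a\}$ really is a basis at every point of $\Phi(\N)$, which follows from the transversality condition in the definition of rigging, and that $\bm{\normal}(\hat e_a)=0$, which follows from $\bm{\normal}$ being the normal one-form. Once these are in place, all four decompositions reduce to identifying coefficients in a fixed basis, and the algebraic identities (\ref{EqP4})--(\ref{EqP1}) do the rest.
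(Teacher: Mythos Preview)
Your argument is correct. The paper does not supply its own proof of this proposition; it is quoted from \cite{MarsGRG} (cf.\ also \cite{MarsSenovilla1993}) and stated without demonstration. Your direct verification---computing the coefficients of the covector identities by evaluation on the basis $\{\rigging,\hat e_a\}$, and checking the vector identities by pairing against that same basis via the non-degenerate $g$ and invoking the algebraic relations (\ref{EqP4})--(\ref{EqP1})---is exactly the natural proof one would expect, and there is no gap.
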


\begin{remark}
\label{nn}
This proposition implies, in particular, that
$\gamma$ is the first fundamental form of 
$\Phi(\N)$. It also justifies the superindex $(2)$
in the notation $\nn$ because, multiplying 
(\ref{n}) by
$\bm{\normal}$ and using $\bm{\normal} (\rigging ) = 1$ yields
\begin{align*}
g^{\sharp}(\bm{\normal}, \bm{\normal}) = \nn
\end{align*}
The justification for the $(2)$ in $\ll$ comes from the last in 
(\ref{condembedded}).
\end{remark}

The volume form $\etaell$  acquires
a clear geometric interpretation for embedded hypersurface
data.
\begin{lemma}
Let $\{\N,\gamma,\ellc,\ll\}$ be
embedded metric hypersurface data with embedding
$\Phi$, ambient space $(\M,\gM)$ and rigging vector $\rigging$. Assume that 
$\M$ is oriented with metric volume form $\etag$. Then $\N$ is orientable
and, after choosing an orientation, it holds
\be
\etaell(X_1,\cdots, X_m) = \hat{\sigma} \etag(\rigging,\Phi_{\star}(X_1),
\cdots, \Phi_{\star}(X_m)), \quad \quad X_i \in \X(\N)
\en
where  $\hat{\sigma}$ is the product of the orientations of $\{\rigging, 
\Phi_{\star}(X_1), \cdots,
\Phi_{\star}(X_m)\}$ and  $\{X_1, \cdots, X_m\}$.
\end{lemma}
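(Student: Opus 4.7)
The plan is to reduce the asserted identity to a pointwise computation in an adapted basis, and then to identify the Gram matrix of the ambient metric $\gM$ in this basis with the data matrix $\A$ of (\ref{matrixA}).

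First I would dispose of the orientability statement. By definition $\rigging$ is a smooth nowhere-zero transversal field along $\Phi(\N)$, so Remark \ref{orientab} provides a smooth nowhere-zero normal one-form on $\Phi(\N)$. Combined with orientability of $\M$ (guaranteed by the existence of the metric volume form $\etag$), this forces orientability of $\N$ via the embedding $\Phi$. Fix an orientation on $\N$ once and for all.

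Both sides of the claimed equality are $C^{\infty}(\N)$-multilinear and totally antisymmetric in $(X_1,\ldots,X_m)$ and both vanish when the $X_i$ are linearly dependent; hence it is enough to verify the identity at an arbitrary point $p \in \N$ on one basis of $T_p \N$. I would choose a positively oriented coordinate chart $\{x^a\}$ at $p$ and take $e_a := \partial_{x^a}|_p$. Definition (\ref{volformdef}) gives immediately $\etaell(e_1,\ldots,e_m) = \sqrt{|\det \A|_p|}$. For the right-hand side, the embedding relations (\ref{condembedded}) show that in the ordered basis $\{\rigging,\hat e_1,\ldots,\hat e_m\}$ of $T_{\Phi(p)}\M$ the Gram matrix of $\gM$ has entries $\ll$, $\ell_a$, $\gamma_{ab}$; a block-permutation of $m$ rows and $m$ columns (with sign $(-1)^{2m}=+1$) identifies its determinant with $\det \A|_p$. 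The standard pseudo-riemannian identity $\etag(v_0,\ldots,v_m) = \hat\epsilon \sqrt{|\det G|}$, where $G$ is the Gram matrix of $\{v_I\}$ and $\hat\epsilon = \pm 1$ is the orientation of $\{v_I\}$ in $\M$, then gives
\be
\etag(\rigging,\hat e_1,\ldots,\hat e_m) = \hat\epsilon\,\sqrt{|\det\A|_p|}.
\en

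Combining the two evaluations one gets $\etaell(e_1,\ldots,e_m) = \hat\epsilon\,\etag(\rigging,\hat e_1,\ldots,\hat e_m)$, and because the chart was chosen positively oriented on $\N$, the sign $\hat\epsilon$ coincides with the product of orientations $\hat\sigma$ in the statement of the lemma. The identity for arbitrary $X_i$ then follows by multilinearity. There is no real obstacle here; the only delicate point is the bookkeeping of the various sign conventions, and this is exactly what the definition of $\hat\sigma$ as a product of the two relevant orientations is designed to absorb.
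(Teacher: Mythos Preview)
Your proof is correct and follows essentially the same approach as the paper: reduce to a positively oriented basis at a point, identify the Gram matrix of $\gM$ in the basis $\{\rigging,\hat e_1,\ldots,\hat e_m\}$ with the matrix $\A|_p$, and invoke the standard formula for the metric volume form in terms of the Gram determinant and orientation sign. Your version is slightly more explicit about the multilinearity reduction and the block permutation, but there is no substantive difference in strategy.
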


\begin{proof}
The orientability of $\N$ is necessary for the embeddability of the metric
hypersurface data in an oriented
ambient space (see Remark \ref{orientab}). It suffices to prove the equality
for a positively oriented basis $\{e_1, \cdots, e_m\}$ of $T_p \N$. 
Then $\{ \rigging|_{\Phi(p)}, \Phi_{\star}(e_1), \cdots, \Phi_{\star}(e_m) \}$ is a basis of
$T_{\Phi(p)} \M$. Let $\hat{\sigma}$ be its orientation with respect to 
$g$. From the definition of metric volume form
\begin{align*}
\etag(\xi|_{\Phi(p)}, \Phi_{\star}(e_1), \cdots, \Phi_{\star}(e_m) \} = \hat{\sigma} \sqrt{|\mbox{det} (g)|},
\end{align*}
where $(g)$ is the symmetric matrix  with entries $g (1,1):= g|_{\Phi(p)} (\xi|_{\Phi(p)}, \xi|_{\Phi(p)}) = \ll |_p$, $g(1,a+1):= g |_{\Phi(p)} (\xi|_{\Phi(p)},\Phi_{\star}(e_i))
= \ell_a |_p$ and
$g(a+1,b+1):= g|_{\Phi(p)} (\Phi_{\star}(e_a),\Phi_{\star}(e_b)) 
= \gamma_{ab} |_p$, where we used the definition of embedded metric
hypersurface data and $\ell_a |_p$, $\gamma_{ab}|_p$ are the components
of $\ellc|_p$ and $\gamma|_p$ in the basis $\{ e_a \}$.
Since the matrix $\A|_p$ in the basis $\{ e_a\}$  has the same entries as $(g)$ 
 we conclude
\begin{align*}
\etag(\xi|_{\Phi(p)}, \Phi_{\star}(e_1), \cdots, \Phi_{\star}(e_m) ) = \hat{\sigma} 
\sqrt{|\mbox{det} \A |_p} = \hat{\sigma} \etaell(e_1, \cdots, e_m).
\end{align*}
\end{proof}

\begin{remark}
A useful  quantity that can be defined in any
hypersurface data $\hypdata$ is the 
symmetric tensor
\begin{align*}
\Kn \defi \nn \Y + \frac{1}{2} \pounds_{n} \gamma + \ellc \otimes_s d \nn
\end{align*}
where $\otimes_s$ denotes symmetrized tensor product, i.e.
$\alpha \otimes_s \beta \defi
\frac{1}{2} ( \alpha \otimes \beta + \beta \otimes \alpha)$.
When the data 
is embedded, $\Kn$ is the second fundamental
form of $\Phi(\N)$  with respect to the normal
$\bm{\normal}$. This follows easily from
(\ref{EmbHyp})  
and (\ref{n})
(see \cite{MarsGRG} for details).

\end{remark}

\section{Gauge structure}
\label{GaugeStruc}
An important property of hypersurface data is that it has a built-in gauge
freedom, connected to the fact that for embedded hypersurface data the rigging
vector is non-unique. The relevant definition is as follows 
(we correct a typo
in \cite{MarsGRG} concerning the definition of the gauge function $u$).
We denote by ${\mathcal F}^{\star}(\N)$  the set of smooth nowhere
zero real functions on $\N$ and $\X(\N)$ refers, as usual, to the set
of (smooth) vector fields on $\N$.
\begin{definition}
\label{gauge}
Let $\hypdata$ be hypersurface data.
Let $u \in {\mathcal F}^{\star}(\N)$ and $\gauge \in \X(\N)$.
The {\bf gauge transformed} hypersurface data with gauge parameters $(u,\gauge)$ 
is defined as 
\begin{align}
\G_{(u,\gauge)}(\gamma)  & := \gamma, \quad 
\G_{(u,\gauge)} (\ellc) := 
u \left ( \ellc + \gamma(\gauge, \cdot) \right ), \quad
\G_{(u,\gauge)} (\ll) =: u^2 \left ( \ll + 2 \ellc (\gauge) 
+ \gamma(\gauge,\gauge)  \right ),
\label{gaugetrans} \\
\G_{(u,\gauge)} (Y) & := u Y + \ellc \otimes_s du 
+  \frac{1}{2} \pounds_{u \gauge } \gamma.
\nonumber
\end{align}
\end{definition}
\begin{remark}
We will sometimes apply gauge transformations to 
just metric hypersurface data. The definition is the same but
ignoring the $\Y$ tensor.
\end{remark}

The following lemma shows that the name ``gauge transformations'' is
justified, as the collection of transformation forms a group.

\begin{lemma}
The set of gauge transformations forms a group $\G = {\mathcal F}^{\star}(\N)
\times \X(\N)$ with composition law
\begin{align}
(u_2,\gauge_2) \circ (u_1, \gauge_1) = \left ( u_2 u_1, \gauge_1 + u_1^{-1}\gauge_2
\right ). \label{comp}
\end{align}
\end{lemma}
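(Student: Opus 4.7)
The plan is to verify the group axioms directly by comparing the effect of two successive gauge transformations with the effect of a single transformation with the proposed composite parameters. Concretely, I will check that for each of the four pieces of data $(\gamma,\ellc,\ll,Y)$, the composition
\be
\G_{(u_2,\gauge_2)} \circ \G_{(u_1,\gauge_1)} = \G_{(u_2 u_1,\,\gauge_1 + u_1^{-1}\gauge_2)}
\en
holds identically. Once this is established, closure and associativity of the product on $\G$ follow for free, since composition of maps on hypersurface data is automatically associative, and the same identity taken twice determines the composition law uniquely.

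First I would dispatch the trivial cases: $\gamma$ is gauge invariant, so there is nothing to check; the computation for $\ellc$ is a short linear exercise,
\begin{align*}
\G_{(u_2,\gauge_2)}\bigl(u_1(\ellc + \gamma(\gauge_1,\cdot))\bigr) &= u_2 u_1 \bigl(\ellc + \gamma(\gauge_1 + u_1^{-1}\gauge_2,\cdot)\bigr);
\end{align*}
and the $\ll$ computation reduces, after expanding and collecting terms, to the obvious identity $\gauge_1 + u_1^{-1}\gauge_2$ appearing as the effective displacement at scale $(u_2 u_1)^2$. The main obstacle will be the $Y$-component, because it mixes the transformed $\ellc$, the differential of $u_1 u_2$, and two Lie derivatives of $\gamma$.

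For this last step I will exploit the standard identity
\be
\pounds_{fX}\gamma = f\,\pounds_X\gamma + 2\, df \otimes_s \gamma(X,\cdot),
\en
valid for any vector field $X$ and scalar $f$, and the Leibniz rule $d(u_1 u_2) = u_2\, du_1 + u_1\, du_2$. Expanding $\G_{(u_2,\gauge_2)}(\G_{(u_1,\gauge_1)}Y)$ produces six terms, and grouping the two $\ellc\otimes_s du$-type pieces yields $\ellc\otimes_s d(u_1 u_2)$, while the displayed identity applied with $X = u_1\gauge_1$, $f = u_2$ rewrites the two Lie-derivative contributions plus the cross term $u_1\,\gamma(\gauge_1,\cdot)\otimes_s du_2$ as $\tfrac12 \pounds_{u_2 u_1 \gauge_1 + u_2\gauge_2}\gamma$. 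This is precisely the Lie-derivative piece of $\G_{(u_2 u_1,\,\gauge_1+u_1^{-1}\gauge_2)}(Y)$, finishing the verification of the composition law.

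Finally, from (\ref{comp}) the identity element is $(1,0)$, since $(u,\gauge)\circ(1,0)=(u,\gauge)$ and $(1,0)\circ(u,\gauge)=(u,\gauge)$; and the inverse of $(u,\gauge)$ is $(u^{-1},-u\gauge)$, because $(u^{-1},-u\gauge)\circ(u,\gauge)=(1,\gauge+u^{-1}(-u\gauge))=(1,0)$ and likewise on the other side. Associativity of $\circ$ as defined by (\ref{comp}) can be verified by a direct two-line check on the two components, but it is cleaner to note that it is inherited from the associativity of composition of the maps $\G_{(u,\gauge)}$ acting on hypersurface data, once the composition formula has been established.
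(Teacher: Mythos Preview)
Your proposal is correct and follows essentially the same approach as the paper: derive the composition law from the action on $\ellc$, then verify it on $\gamma$, $\ll$, and $Y$, using the same Lie-derivative identity $\pounds_{fX}\gamma = f\,\pounds_X\gamma + 2\, df \otimes_s \gamma(X,\cdot)$ for the $Y$ step, and finally read off the identity and inverse. The only cosmetic difference is that you defer associativity to the associativity of map composition, whereas the paper simply asserts the group axioms are immediate from the explicit formula.
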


\begin{proof}
The composition law in $\G$ is defined by the requirement
$\G_{(u_2,\gauge_2) \circ (u_1,\gauge_1)} = \G_{(u_2,\gauge_2)} \circ
\G_{(u_1,\gauge_1)}$, provided this condition defines
a unique element  $(u_2,\gauge_2) \circ (u_1,\gauge_1)
\in {\mathcal F}^{\star}(\N) \times \X(\N)$. We compute the composition
of transformations when acting on $\ellc$:
\begin{align*}
\G_{(u_2,\gauge_2)} (\G_{(u_1,\gauge_1)} (\ellc)) & =
                                                    u_2 \left ( \G_{(u_1,\gauge_1)} (\ellc) + \gamma(\gauge_2, \cdot) \right )
                                                    = u_2 \left ( u_1
                                                    \left ( \ellc + \gamma(\gauge_1, \cdot ) \right ) + \gamma(\gauge_2, \cdot ) \right ) \\
                                                  &  = u_2 u_1 \left ( \ellc + \gamma( \gauge_1 + u_1^{-1} \gauge_2, \cdot ) \right )
\end{align*}
which shows that necessarily  
$(u_2,\gauge_2) \circ (u_1, \gauge_1) = \left ( u_2 u_1, \gauge_1 + u_1^{-1} \gauge_2
\right ) := (u_3,\gauge_3)$. It remains to show that the
equality $G_{(u_3,\gauge_3)} = \G_{(u_2,\gauge_2)} \circ \G_{(u_1,\gauge_1)}$ still holds
when applied to $\gamma, \ll$ and $Y$. $\gamma$ is 
gauge invariant so there is nothing to be checked. Acting on $\ll$:
\begin{align*}
  \G_{(u_2,\gauge_2)} \circ & \, \G_{(u_1,\gauge_1)} (\ll)  =
                              u_2^2 \left ( \G_{(u_1,\gauge_1)} (\ll)   + 2 \G_{(u_1,\gauge_1)} (\ellc) (\gauge_2) + \gamma(\gauge_2,\gauge_2 ) \right )   
                                \\
& = u_2^2 \left ( u_1^2 \left ( \ll + 2 \ellc (\gauge_1)  
+ \gamma(\gauge_1,\gauge_1 ) \right ) 
+ 2 u_1 ( \ellc + \gamma(\gauge_1, \cdot )) (\gauge_2) 
+ \gamma(\gauge_2,\gauge_2) \right ) \\
& = ( u_1 u_2)^2 \left ( \ll +
2 \ellc (\gauge_1 + u_1^{-1} \gauge_2 ) 
+ \gamma ( \gauge_1 + u_1^{-1} \gauge_2, \gauge_1 + u_1^{-1} \gauge_2) \right )=
\G_{(u_3,\gauge_3)} (\ll).
\end{align*}
Concerning $Y$,
\begin{align*}
  \G_{(u_2,\gauge_2)} \circ & \, \G_{(u_1,\gauge_1)} (Y)  =
                              u_2 \G_{(u_1,\gauge_1)} (Y) +
                              \G_{(u_1,\gauge_1)} (\ellc)  \otimes_s du_2 
                              +\frac{1}{2} \pounds_{u_2 \gauge_2} \gamma \\
& =
u_2 \left ( u_1 Y + \ellc \otimes_s du_1 + \frac{1}{2} \pounds_{u_1 \gauge_1} \gamma
\right ) + u_1 \left ( \ellc + \gamma(\gauge_1,\cdot) \right ) \otimes_s du_2
+\frac{1}{2} \pounds_{u_2 \gauge_2} \gamma \\
& = u_1 u_2  Y + 
\ellc \otimes_s \left ( u_2 d u_1 + u_1 d u_2 \right )
+ \frac{1}{2} \big ( u_2 \pounds_{u_1 \gauge_1}  + \pounds_{u_2 \gauge_2} \big )
\gamma + \gamma( u_1 \gauge_1, \cdot ) \otimes_s du_2.
 \end{align*}
We now use the standard (and easy to prove) 
identity on symmetric two-covariant
tensors $T$,
$f \pounds_{v} T =
\pounds_{f v} T - 2 T(v,\cdot) \otimes_s df$
and find
\begin{align*}
\G_{(u_2,\gauge_2)} \circ \G_{(u_1,\gauge_1)} (Y)  & =
u_1 u_2 Y + \ellc \otimes_s d (u_1 u_2) 
+ \frac{1}{2} \pounds_{ u_1 u_2 \gauge_1 + u_2 \gauge_2} \gamma \\
& = u_3 Y + \ellc \otimes_s d u_3 
+ \frac{1}{2} \pounds_{ u_3 \gauge_3 } \gamma
= \G_{(u_3,\gauge_3)} (Y).
\end{align*}
The fact that the composition law (\ref{comp}) is a group operation
is immediate. In particular, the inverse of $(u,\gauge)$ is given by
\begin{align*}
  (u,\gauge)^{-1} = (u^{-1}, -u \gauge) 
\end{align*}
and the identity element is $e = (1,0)$.
\end{proof}

\begin{remark}
  At any given point $p \in \N$, the gauge transformation restricts to the finite dimensional Lie group $\G_p := \mathbb{R}^{+} \times T_p \N$, with group law given by  (\ref{comp}). This Lie group is isomorphic (with the inversion isomorphism $g \rightarrow g^{-1}$) to the semidirect product $\mathbb{R}^+ \ltimes_{\varphi}T_p \N$ (each factor endowed with the standard group operation) where the homomorphism $\varphi : \mathbb{R}^+
  \rightarrow \mbox{Aut} (T_p \N)$ is given by $\varphi(u) (\gauge) =  u^{-1} \gauge$.
  It is worth writing  down a basis of left-invariant vector fields on $\G_p$ and compute the corresponding Lie algebra. The result is
  \begin{align*}
   & X^L_u = u \partial_u - \zeta^b \partial_{\zeta^b}, \quad \quad
    X^L_{\zeta^a} = \partial_{\zeta^a} \\
    & \left  [X^L_u , X^L_{\zeta^a} \right ] = X^L_{\zeta^a},  
    \quad \quad [ X^L_u, X^L_u ] =  [X^L_{\zeta^a}, X^L_{\zeta^b} ] =0.
  \end{align*}
\end{remark}

Our next result gives the connection between 
gauge transformations and the
freedom in the choice of rigging in the case that 
metric hypersurface
data is embedded. The effect of change of rigging
on embedded  data was obtained in  \cite{MarsGRG}
and, indeed motivated the abstract definition of gauge transformation. The
following result shows the converse, namely that
under an arbitrary  change of gauge, embedded hypersurface data remains
embedded.
\begin{proposition}
\label{changerig}
Let $\hypdata$ be  hypersurface data with embedding
$\Phi$, ambient space $(\M,g)$ and rigging vector $\rigging$. For 
any gauge parameters $(u,\gauge)$,  
$\{ \N$, $\G_{(u,\gauge)} (\gamma)$, $\G_{(u,\gauge)}(\ellc)$, $\G_{(u,\gauge)}(\ll)$,
$\G_{(u,\gauge)}(\Y)\}$
is embedded hypersurface data, with the same ambient space
$(\M,g)$, same embedding $\Phi$, and rigging $\G_{(u,\gauge)} (\rigging)$ given
by
\begin{align*}
\G_{(u,\gauge)} (\rigging) 
\defi u \left ( \rigging + \Phi_{\star} (\gauge) \right ).
\end{align*}
\end{proposition}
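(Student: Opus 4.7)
The plan is to verify the four embedding conditions \eqref{condembedded}–\eqref{EmbHyp} one by one for the new rigging $\rigging' := \G_{(u,\gauge)}(\rigging) = u(\rigging + \Phi_\star(\gauge))$, since the embedding $\Phi$ and the ambient metric $g$ are kept fixed.

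First I would check transversality: $\Phi_\star(\gauge)$ is tangent to $\Phi(\N)$ and $\rigging$ is transversal, so $\rigging + \Phi_\star(\gauge)$ is transversal, and hence $\rigging' = u(\rigging + \Phi_\star(\gauge))$ is also transversal since $u \in {\mathcal F}^{\star}(\N)$. The condition $\Phi^{\star}g = \G_{(u,\gauge)}(\gamma) = \gamma$ is immediate since $\gamma$ is gauge invariant. For the second and third conditions in \eqref{condembedded}, I would expand
\be
g(\rigging',\cdot) = u\,g(\rigging,\cdot) + u\, g(\Phi_\star(\gauge),\cdot), \quad \quad g(\rigging',\rigging') = u^2 \big(g(\rigging,\rigging) + 2 g(\rigging,\Phi_\star(\gauge)) + g(\Phi_\star(\gauge),\Phi_\star(\gauge))\big),
\en
and pull back to $\N$. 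Using $\Phi^{\star}(g(\rigging,\cdot)) = \ellc$, $\Phi^{\star}(g(\rigging,\rigging)) = \ll$, together with $\Phi^{\star}g = \gamma$ (which yields $\Phi^\star(g(\Phi_\star \gauge, \cdot)) = \gamma(\gauge,\cdot)$ and $\Phi^\star(g(\Phi_\star \gauge,\Phi_\star \gauge)) = \gamma(\gauge,\gauge)$), I recover exactly the formulas $\G_{(u,\gauge)}(\ellc)$ and $\G_{(u,\gauge)}(\ll)$ of Definition~\ref{gauge}.

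The main effort is the $\Y$-condition. Choose any smooth extension $W$ of $\Phi_\star(\gauge)$ to an open neighbourhood of $\Phi(\N)$ in $\M$, and extend $u$ off $\Phi(\N)$ arbitrarily (the pullback is independent of these extensions). The identity $\pounds_{fV}g = f\pounds_V g + 2 \, df \otimes_s g(V,\cdot)$ gives
\begin{align*}
\tfrac{1}{2} \pounds_{\rigging'} g = \tfrac{u}{2} \pounds_{\rigging} g + \tfrac{u}{2} \pounds_{W} g + du \otimes_s g(\rigging,\cdot) + du \otimes_s g(W,\cdot).
\end{align*}
Pulling back, the first term becomes $u\Y$ by the embedding hypothesis \eqref{EmbHyp}. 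Because $W$ and $\gauge$ are $\Phi$-related along $\Phi(\N)$ (i.e.\ $W|_{\Phi(p)} = \Phi_\star(\gauge|_p)$), the naturality identity $\Phi^{\star}(\pounds_W g) = \pounds_{\gauge}(\Phi^\star g) = \pounds_\gauge \gamma$ applies, so the second term pulls back to $\tfrac{u}{2}\pounds_\gauge \gamma$. The last two pull back to $du \otimes_s (\ellc + \gamma(\gauge,\cdot))$. Hence
\begin{align*}
\tfrac{1}{2}\Phi^{\star}(\pounds_{\rigging'} g) = u \Y + \tfrac{u}{2}\pounds_{\gauge} \gamma + du \otimes_s \gamma(\gauge,\cdot) + \ellc \otimes_s du.
\end{align*}
Applying the same $f\pounds_V$-identity on $\N$ backwards, $\tfrac{u}{2}\pounds_\gauge \gamma + du \otimes_s \gamma(\gauge,\cdot) = \tfrac{1}{2}\pounds_{u\gauge}\gamma$, so the right-hand side coincides with $\G_{(u,\gauge)}(\Y)$ as given in Definition~\ref{gauge}.

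The only delicate point is the interchange of pullback and Lie derivative in the $W$-term, which requires that the extension $W$ be $\Phi$-related to $\gauge$ pointwise on $\Phi(\N)$; this is automatic from the choice $W|_{\Phi(\N)} = \Phi_\star(\gauge)$, and the extension off $\Phi(\N)$ is irrelevant because, as noted in the remark following Definition~2.7, $\Phi^\star(\pounds_V g)$ depends only on the values of $V$ along $\Phi(\N)$.
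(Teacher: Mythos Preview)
Your proof is correct and follows essentially the same approach as the paper's own proof: both verify the embedding conditions \eqref{condembedded}--\eqref{EmbHyp} directly for the new rigging $\rigging' = u(\rigging + \Phi_\star(\gauge))$ by expanding and pulling back. The only cosmetic difference is in the $\Y$-computation: the paper keeps the term $u\Phi_\star(\gauge)$ together and applies the naturality $\Phi^\star(\pounds_{u\Phi_\star(\gauge)} g) = \pounds_{u\gauge}\gamma$ in one step, whereas you first strip off $u$ from both summands (producing four terms) and then recombine on $\N$; you also add an explicit transversality check, which the paper omits.
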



\begin{proof}
For simplicity we write $\rigging^{\, \prime} := \G_{(u,\gauge)} (\rigging)$. 
From the definition of embedded hypersurface data we need to check 
that 
\begin{align}
\Phi^{\star}(g (\rigging^{\, \prime}, \cdot)) 
& = u \left ( \ellc + \gamma(\gauge,\cdot) \right ), \label{id1} \\ 
g \left ( \rigging^{\, \prime},  \rigging^{\, \prime} \right ) 
& = u^2 \left ( \ll + 2 \ellc(\gauge) + \gamma(\gauge,\gauge) \right ),
\label{id2} \\
\Phi^{\star} \left (  \frac{1}{2} 
\pounds_{\G_{(u,\gauge)}(\rigging)} g 
\right )  & = \G_{(u,\gauge)} (\Y).
\label{id3}
\end{align}
Since 
\begin{align*}
g ( \rigging^{\, \prime}, \cdot )
= g( \G_{(u,\gauge)} (\rigging), \cdot) 
= g ( u \left ( \rigging + \Phi_{\star} (\gauge) \right ), \cdot )
= u g (\rigging, \cdot) + u g ( \Phi_{\star} (\gauge), \cdot),
\end{align*}
its pull-back is, using that $\{ \N,\gamma,\ellc,\ll \}$ is embedded with
rigging $\rigging$,
\begin{align*}
\Phi^{\star} ( g(\rigging^{\, \prime}, \cdot) )
 = u \ellc + u \Phi^{\star} ( g (\Phi_{\star} (\gauge), \cdot )
= u \ellc + u \Phi^{\star}(g) \left ( \gauge, \cdot \right ) =
u \ellc + u \gamma ( \gauge, \cdot ) 
\end{align*}
which establishes (\ref{id1}). On the other hand
\begin{align*}
g ( \rigging^{\, \prime}, \rigging^{\, \prime})  =
u^2 \left ( g (\rigging, \rigging) +
2 g(\rigging,\Phi_{\star} (\gauge)) + 
g(\Phi_{\star}(\gauge),\Phi_{\star}(\gauge)) \right ).
\end{align*}
Taking the pull-back yields (\ref{id2}) immediately. Finally 
(\ref{id3}) is checked as follows
\begin{align*}
\frac{1}{2}
\Phi^{\star} \left (  \pounds_{u(\rigging + \Phi_{\star} (\gauge))} g 
\right )
& = \frac{1}{2} u \Phi^{\star} \left (  
\pounds_{\rigging} g \right )
+ \Phi^{\star} \left ( g(\rigging, \cdot) \otimes_s  du
\right )  
+ \frac{1}{2} \Phi^{\star} \left ( \pounds_{u\Phi_{\star} (\gauge))} g 
\right ) \\
& = u \Y 
+ \Phi^{\star} \left ( g(\rigging, \cdot) \right )
\otimes_s  du + \frac{1}{2} \pounds_{u \gauge} \gamma 
\\
& = u \Y 
+ \ellc \otimes_s du + \frac{1}{2} \pounds_{u \gauge} \gamma 
= \G_{(u,\zeta)} (\Y).
\end{align*}
\end{proof}

A gauge transformation obviously induces a  transformation
on the tensor fields $P,n, \nn$,  on the volume form 
and on the ``second fundamental form'' $\Kn$. The explicit transformation
laws for $P$, $\n$, $\nn$ and $\Kn$ were obtained in \cite{MarsGRG} (and 
for embedded data in \cite{MarsSenovilla1993}). We recall the result and 
add the gauge behaviour of the volume form.
\begin{lemma}
\label{TransOther}
Let $\metdata$ be metric hypersurface data. Under a gauge
transformation with gauge parameters $(u,\gauge)$, the tensors
$P$, $n$, $\nn$ and volume form $\etaell$ transform as
 \begin{align}
 \G_{(u,\gauge)} (P)  & =   P  + \nn \gauge \otimes \gauge - 2 \gauge \otimes_s n, 
\label{gaugeP} \\
  \G_{(u,\gauge)} (n)  & =   u^{-1} (  n - \nn \gauge ), \label{Gaugen} \\
  \G_{(u,\gauge)} (\nn)  & =  u^{-2} \nn,  \label{gaugenn}  \\
\G_{(u,\gauge)} (\Kn) & = u^{-1} \Kn,  \label{gaugeKn} \\
\G_{(u,\gauge)} (\etaell) & = |u| \, \etaell  \label{volformgauge}
\end{align}
where $|u|$ denotes absolute value of $u$.
\end{lemma}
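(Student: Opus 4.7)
\medskip

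\noindent\textbf{Proof plan.}

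The cleanest way to obtain the transformation laws for $P$, $n$ and $\nn$ is to view the gauge transformation at a point as a linear change of frame on $T_p\N\oplus\mathbb{R}$. Define $\Psi_{(u,\gauge)}:T_p\N\oplus\mathbb{R}\to T_p\N\oplus\mathbb{R}$ by $\Psi_{(u,\gauge)}(W,a)=(W+au\gauge,\,au)$. A direct expansion using the definition of $\A$ shows that
\begin{equation*}
\A\bigl(\Psi_{(u,\gauge)}(W,a),\,\Psi_{(u,\gauge)}(Z,b)\bigr)
= \gamma(W,Z) + a\,\G_{(u,\gauge)}(\ellc)(Z) + b\,\G_{(u,\gauge)}(\ellc)(W) + ab\,\G_{(u,\gauge)}(\ll),
\end{equation*}
i.e. $\G_{(u,\gauge)}(\A)=\Psi_{(u,\gauge)}^{T}\,\A\,\Psi_{(u,\gauge)}$ in matrix form. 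Since $\G_{(u,\gauge)}(\A)$ is the object to be inverted, we deduce $\G_{(u,\gauge)}(\A^{\sharp})=\Psi_{(u,\gauge)}^{-1}\,\A^{\sharp}\,(\Psi_{(u,\gauge)}^{-1})^{T}$. The inverse of $\Psi_{(u,\gauge)}$ is readily computed to be $(Z,b)\mapsto(Z-b\gauge,\,b/u)$, i.e. the block matrix with diagonal $(I,1/u)$ and upper-right block $-\gauge$. Performing the triple matrix product and reading off the three blocks of $\A^{\sharp}$ (cf.\ (\ref{matrixA})) immediately yields (\ref{gaugeP})--(\ref{gaugenn}). An alternative verification, if preferred, is to plug the announced expressions into (\ref{EqP4})--(\ref{EqP1}) with $\gamma,\ellc,\ll$ replaced by their transforms and invoke the uniqueness of $P,n,\nn$ established after those equations.

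For the volume form (\ref{volformgauge}) I would read off from the above that $\det\Psi_{(u,\gauge)}=u$, so $\det\G_{(u,\gauge)}(\A)=u^{2}\det\A$. Substituting into the defining formula (\ref{volformdef}) gives $\sqrt{|\det\G_{(u,\gauge)}(\A)|}=|u|\sqrt{|\det\A|}$ and hence $\G_{(u,\gauge)}(\etaell)=|u|\,\etaell$ (the factor $\sigma$ depends only on the chart, not on the data).

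For $\Kn$ the strategy is a direct substitution into its definition using the already established laws for $n$ and $\nn$ together with the ``multiplication by a function'' identity $f\pounds_{v}T = \pounds_{fv}T - 2T(v,\cdot)\otimes_{s}df$ recalled in the proof of the composition law. Concretely, I would compute
\begin{equation*}
\G_{(u,\gauge)}(\Kn) = u^{-2}\nn\,\G_{(u,\gauge)}(\Y) + \tfrac{1}{2}\pounds_{u^{-1}(n-\nn\gauge)}\gamma + u(\ellc+\gamma(\gauge,\cdot))\otimes_{s}d(u^{-2}\nn),
\end{equation*}
expand $\G_{(u,\gauge)}(\Y)$ using Definition \ref{gauge}, and apply the Leibniz-type identity above twice (first to rewrite $\pounds_{u^{-1}(n-\nn\gauge)}\gamma$ in terms of $\pounds_{n}\gamma$ and $\pounds_{\gauge}\gamma$, and then to convert $\pounds_{u\gauge}\gamma$ appearing in $\G_{(u,\gauge)}(\Y)$ into $\pounds_{\gauge}\gamma$ plus a $\gamma(\gauge,\cdot)\otimes_{s}du$ term). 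Using also $\gamma(n,\cdot)=-\nn\ellc$ from (\ref{EqP1bis}), the $\pounds_{\gauge}\gamma$ contributions cancel in pairs, all the terms involving $du$ collapse to zero, and the $d\nn$ contributions combine to $u^{-1}\ellc\otimes_{s}d\nn$, leaving precisely $u^{-1}\Kn$.

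The main obstacle is the bookkeeping in the $\Kn$ computation: there are several terms carrying $\pounds_{\gauge}\gamma$, $\otimes_{s}du$ and $\otimes_{s}d\nn$ and one must track the numerical coefficients carefully to see the cancellations. The part involving $\A^{\sharp}$ is by contrast essentially a one-line matrix inversion once $\Psi_{(u,\gauge)}$ is introduced, and the volume-form statement is then a trivial corollary.
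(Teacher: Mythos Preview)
Your proposal is correct. Note, however, that the paper itself does not prove (\ref{gaugeP})--(\ref{gaugeKn}) here at all: it simply cites \cite{MarsGRG} (and \cite{MarsSenovilla1993} for the embedded case) and proves only the volume-form statement (\ref{volformgauge}), which it does by explicit row and column operations on the $(m+1)\times(m+1)$ matrix of $\G_{(u,\gauge)}(\A)$ to reach $\det\G_{(u,\gauge)}(\A)=u^{2}\det\A$.

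Your route via the linear map $\Psi_{(u,\gauge)}$ is a genuinely different and cleaner packaging: once you observe $\G_{(u,\gauge)}(\A)=\Psi^{T}\A\,\Psi$, the transformation of $\A^{\sharp}$ follows by a single conjugation and the determinant identity is immediate from $\det\Psi_{(u,\gauge)}=u$, with no need for the paper's row/column manipulations. This gives (\ref{gaugeP})--(\ref{gaugenn}) and (\ref{volformgauge}) in one stroke and makes the argument self-contained rather than deferred to another reference. Your outlined computation for $\Kn$ is also correct (I checked the cancellations: the $\pounds_{\gauge}\gamma$, $\ellc\otimes_{s}du$, $\gamma(\gauge,\cdot)\otimes_{s}du$ and $\gamma(\gauge,\cdot)\otimes_{s}d\nn$ contributions each cancel in pairs, leaving exactly $u^{-1}\Kn$); again the paper does not carry this out here.
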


\begin{proof}
Only (\ref{volformgauge}) remains to be shown.
We need
to compute the determinant 
\begin{align}
\mbox{det} & (\G_{(u,\gauge)} (\A) )
= \mbox{det} \left ( \begin{array}{cc}
\G_{(u,\gauge)} (\gamma) & \G_{(u,\gauge)} (\ellc)^T  \\
(\G_{(u,\gauge)} (\ellc)) & \G_{(u,\gauge)} (\ll ) 
\end{array}
\right ) \nonumber \\
& =
\mbox{det} \left ( \begin{array}{cc}
\gamma & u ( \ellc + \gamma(\gauge,\cdot) )^T \\
u ( \ellc + \gamma(\gauge,\cdot) ) 
& u^2 \left ( \ll + 2 \ellc(\gauge) + \gamma(\gauge,\gauge) \right )
\end{array}
\right )  \nonumber \\
& 
= \mbox{det} \left ( \begin{array}{cc}
\gamma & u \ellc^T  \\
u ( \ellc + \gamma(\gauge,\cdot) )^{T} 
& u^2 \left ( \ll + \ellc(\gauge) \right )
\end{array}
\right )
= \mbox{det} \left ( \begin{array}{cc}
\gamma & u \ellc^T  \\
u  \ellc  
& u^2  \ll  
\end{array}
\right ) = u^2 \det(\A) \label{changeA}
\end{align}
where in the third equality we added to the last column 
the first three columns applied to $u \gauge$ (hence without changing
the determinant)  and in the fourth one we applied a similar procedure, this
time acting on rows. (\ref{volformgauge}) then follows
directly from the definition (\ref{volformdef}).

\end{proof}

A natural question concerning gauge theories is to determine
which gauge parameters leave the data invariant, and provide an interpretation
for them. This is achieved in the following Lemma (see also the Remark
that follows).

\begin{lemma}
\label{invariance}
Let $\{\N,\gamma,\ellc,\ll\}$ 
be metric hypersurface data and
$(u,\gauge)$ a gauge parameter. $(u,\gauge)$ leaves invariant the data at $p \in \N$
if and only if the following happens
\begin{itemize}
\item[(i)] If $p$ is a null point, then  $u|_p =1$ and $\gauge|_p =0$.
\item[(ii)] If $p$ is non-null point then either $(u,\gauge)|_p = (1,0)$
or $(u,\gauge) |_p = (-1, - 2 \ell^{\sharp} |_p)$.
\end{itemize}
\end{lemma}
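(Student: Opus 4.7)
\bigskip

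\noindent\textbf{Proof proposal.} The plan is to unfold the gauge invariance condition into three equations, one for each of $\gamma$, $\ellc$, $\ll$, discard the $\gamma$ equation (trivially satisfied), and then analyze the system at $p$ depending on whether the radical of $\gamma|_p$ is trivial or not.

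From Definition \ref{gauge}, the condition $\G_{(u,\gauge)}(\text{data})|_p = (\text{data})|_p$ amounts to the two equations
\begin{align*}
(u-1)\ellc|_p + u\, \gamma(\gauge,\cdot)|_p &= 0, \\
(u^2-1)\ll|_p + 2u^2 \ellc(\gauge)|_p + u^2 \gamma(\gauge,\gauge)|_p &= 0,
\end{align*}
which I will call $(\star)$ and $(\star\star)$. Everything below is evaluated at $p$, which I will drop from the notation.

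For case (i), suppose $p$ is null. Then $\nn=0$ and, by the remark following the definition of null points, $\mbox{Rad}(\gamma) = \langle n \rangle$. Contract $(\star)$ with $n$: using $\gamma(n,\cdot) = -\nn\,\ellc = 0$ and $\ellc(n) = 1 - \nn\,\ll = 1$ (from \eqref{EqP4bis}-\eqref{EqP2bis}), one obtains $u=1$. Feeding $u=1$ back into $(\star)$ forces $\gamma(\gauge,\cdot)=0$, i.e.\ $\gauge = \lambda n$ for some $\lambda\in\mathbb{R}$. Substituting this into $(\star\star)$ gives $2\lambda\,\ellc(n) + \lambda^2 \gamma(n,n) = 2\lambda = 0$, hence $\gauge = 0$.

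For case (ii), suppose $p$ is non-null. Then $\gamma$ is invertible, so $(\star)$ uniquely solves for $\gauge$ in terms of $u$:
\begin{equation*}
\gauge = \tfrac{1-u}{u}\, \ell^{\sharp}, \qquad \ell^{\sharp}:=\gamma^{\sharp}(\ellc,\cdot).
\end{equation*}
Plugging this into $(\star\star)$ and simplifying gives the factored equation $(u-1)(u+1)\bigl(\ll - \gamma^{\sharp}(\ellc,\ellc)\bigr)=0$; by Lemma \ref{Pnnn_non-null} the last factor does not vanish at a non-null point, so $u=\pm 1$. The two signs produce respectively $\gauge=0$ and $\gauge = -2\ell^{\sharp}$, matching the statement.

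The main potential obstacle is the null case: one needs the contraction of $(\star)$ with $n$ to be informative, and this hinges on the two identities $\gamma(n,\cdot)=0$ and $\ellc(n)=1$ at a null point, both of which are immediate from the defining relations \eqref{EqP1bis}-\eqref{EqP5bis} once $\nn=0$ is used. The algebraic manipulation in (ii) is straightforward polynomial expansion, and the reason the non-null case admits a second fixed point while the null case does not reduces precisely to the non-vanishing of $\ll - \gamma^{\sharp}(\ellc,\ellc)$ guaranteed by Lemma \ref{Pnnn_non-null}.
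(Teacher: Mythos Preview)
Your proof is correct and follows essentially the same route as the paper's: you set up the same two invariance equations (the paper's (\ref{inv1})--(\ref{inv2}) are your $(\star)$--$(\star\star)$ after dividing by $u$ and $u^2$), contract with $n$ in the null case, and solve algebraically via $\gamma^{\sharp}$ in the non-null case. The only cosmetic difference is that the paper separately verifies that $(-1,-2\ell^{\sharp})$ preserves $\ll$, whereas in your formulation this is automatic because $(\star\star)$ is already equivalent to the $\ll$-invariance.
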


\begin{proof}
 The gauge transformation (\ref{gauge}) implies that
$\{\N,\gamma,\ellc,\ll\}$ stays invariant at $p$ under 
the gauge
transformation of $(u,\gauge)$ if and only if (we drop
the reference to $p$ for simplicity).
\begin{eqnarray}
\gamma (\gauge,\cdot) = \frac{1- u}{u} \ellc, \label{inv1}\\
2 \ellc(\gauge) + \gamma (\gauge,\gauge) = \frac{1 - u^2}{u^2} \ll.
\label{inv2}
\end{eqnarray}
If $p$ is a null point, then applying (\ref{inv1}) to $n$
gives $0 = (1-u) \ellc(n) = (1-u)$. Thus $u=1$ and (\ref{inv1})
implies $\gauge = \alpha n$ for some constant $\alpha$.
Equation (\ref{inv2}) and $\ellc(n)  = 1$  forces
$\alpha=0$ and item (i) is proved. 

If $p$ is a non-null point then (\ref{inv1}) yields
\begin{eqnarray}
\gauge  = \frac{1- u}{u} \ell^{\sharp},
\label{expV}
\end{eqnarray}
which inserted into (\ref{inv2}) gives
\begin{eqnarray*}
( 1 -u ) (1+ u) \left ( (\gamma^{\sharp}(\ellc,\ellc) - \ll \right ) = 0.
\end{eqnarray*}
From Lemma \ref{Pnnn_non-null} we know that 
$(\ll - \gamma^{\sharp}
(\ellc,\ellc) )\neq 0$, so either $u=1$ (and then (\ref{expV}) gives 
$\gauge=0$ ) 
or else
$u=-1$ (and then $\gauge =  - 2 \ell^{\sharp} := \gauge_0$). 
It only remains to show that in the latter case
$\ll$ is also gauge invariant:
\begin{align*}
\G_{(-1,\gauge_0)} (\ll) = \ll + 2 \ellc(\gauge_0) + \gamma(\gauge_0,\gauge_0)
= \ll - 4 \gamma^{\sharp} (\ellc,\ellc) + 4 \gamma^{\sharp} (\ellc,\ellc)
= \ll.
\end{align*}
This concludes the proof of  item (ii). 
\end{proof}

\begin{remark}
In the context of the matching problem between spacetimes 
the fact that  the hypersurface data remains invariant under two different
gauge transformation in the case of non-null data is related to the fact that any matching always admits a ``complementary matching'' \cite{FayosSenovilla}.
More specifically, consider two spacetimes $(M_1, g_1)$ and 
$(M_2,g_2)$ and two respective hypersurfaces $\Sigma_1$, $\Sigma_2$ 
that separate each spacetime in two domains $V_1^{\pm} \subset M_1$ and
$V_2^{\pm} \subset \M_2$.  Assume that neither $\Sigma_1$ nor
$\Sigma_2$ have any null points. If $V_1^{+}$ can be matched to 
$V_2^{-}$ then automatically the ``complementary'' matching
of $V_1^{-}$ with $V_2^+$ is also possible. This is because the matching of two
spacetimes only depends on the continuity of the hypersurface data, and the
two gauge transformations that preserve the data correspond to each one of the
two choices. When the data has null points, this ``complementary'' matching ceases to exists (except possibly in highly symmetric situations). This was first discussed in \cite[Proposition 1]{MarsSenovillaVera}. At the 
level of the hypersurface data, the reason behind this non-existence
of the complementary matching  is that the presence of null points
reduces the set of gauge transformations that leave the data invariant 
to the identity element, so one no-longer can change the orientation of the
rigging without altering the data.
\end{remark}

\begin{remark}
\label{interRem}
A geometric interpretation of this lemma can be given in the case
of embedded hypersurface data. Let $\metdata$ be  embedded
with rigging $\rigging$ and assume $p$ is a non-null point.
In order to apply Proposition \ref{changerig} we need
to determine $\Phi_{\star} (\ell^{\sharp})$. 
Lemma \ref{Pnnn_non-null} gives 
$\ell^{\sharp} = - ( \ll - \gamma^{\sharp} (\ellc,\ellc)) n$ so that
%
\begin{align*}
\Phi_{\star} (\ell^{\sharp}) & =
- \left ( \ll - \gamma^{\sharp} (\ellc,\ellc) \right ) \Phi_{\star} (n) 
=  - \left (\ll - \gamma^{\sharp} (\ellc, \ellc) \right ) 
\left ( \normal - \nn \rigging
\right ) \\
& = - \frac{1}{\nn} \normal + \rigging,
\end{align*}
where in the second equality we used (\ref{n})
in Proposition \ref{propembedded} and in the last one 
Lemma \ref{Pnnn_non-null} in order to replace
$\ll - \gamma^{\sharp}(\ellc,\ellc)$ in terms of $\nn$. 
Then,  Proposition \ref{changerig} provides
the gauge transformed rigging vector
\begin{align*}
\G_{(-1,\gauge_0)} (\rigging) = - \left ( \rigging -2 
\Phi_{\star} (\ell^{\sharp}) \right )
= \rigging - \frac{2}{\nn}   \normal.
\end{align*}
We finally use Remark \ref{nn} to write everything in terms
of ambient objects.
\begin{align*}
\G_{(-1,\gauge_0)} (\rigging) = 
\rigging - \frac{2}{g(\normal, \normal)} \normal.
\end{align*}
Thus, the gauge transformation turns out to be
a reflection with respect to the
hyperplane normal to $\normal$ (i.e. with respect to the
tangent plane $T_p \Phi(\N)$). So, we obtain a geometrically neat 
interpretation of the gauge transformation that leaves the metric
hypersurface data invariant, and we may understand why null points behave
very differently compared to  non-null points. The fact that at
null points the data remains invariant only under the identity element  
indicates, in particular, that the reflection along a non-null 
hyperplane does not admit any limit when the hyperplane approaches a null
one. It is of interest to see explicitly  what goes wrong along the limit.
We consider the following very simple situation. Let 
$o$ be the origin of the two dimensional Minkowski spacetime
$(\mathbb{M},g_{\mathbb{M}})$
and choose $\rigging = \partial_t$ in Minkowskian coordinates
$\{ t, x\}$. Consider the line $t = a x$ with $a \in \mathbb{R}$,
$a^2 \neq 1$. The normal vector satisfying $g_{\mathbb{M}} (\normal,
\rigging ) = 1$ is $\normal = - ( \partial_t + a \partial_x)$ and the
reflection of $\rigging$ is
\begin{align*}
\G_{(-1,\gauge_0)} (\rigging) = 
\rigging - \frac{2}{g(\normal, \normal)} \normal
= \partial_t + \frac{2}{a^2 -1} \left ( \partial_t + a \partial_x
\right ) = \frac{1}{a^2-1} \left ( (a^2+1) \partial_t + 2 a \partial_x \right ).
\end{align*}
Observe that the limit $a \rightarrow \pm 1$ is singular, as expected.
Even more, after removing the divergent factor $(1-a^2)^{-1}$ the
resulting vector is  $(a^2+1) \partial_t + 2a \partial_x$ which admits
a smooth limit as $a \rightarrow \pm 1$. However, at the limit, this vector
is $2( \partial_t \pm \partial_x)$ which is {\it tangent} to the corresponding
null hyperplane $t = \pm x$ and hence does not define a rigging.
In summary, we find (at least in this example) that 
the reason why the invariance gauge parameter 
$(-1,\gauge_0)$ does not survive at null points is that 
that rigging transformed  vector
$\G_{(-1,\gauge_0)} (\rigging)$  not only diverges but, in addition,
becomes tangent to the hyperplane  when the
hyperplane becomes null. 
\end{remark}

Lemma \ref{invariance} and Remark \ref{interRem} justify the following definition
\begin{definition}
The gauge parameter $(u,\gauge)$ is called {\bf orientation preserving} if $u>0$
everywhere.
\end{definition}

\begin{corollary}
The only orientation preserving gauge parameter $(u,\gauge)$ leaving invariant 
metric hypersurface data 
$\{\N,\gamma,\ellc,\ll\}$ is the trivial one $(u=1, \gauge=0)$.
\end{corollary}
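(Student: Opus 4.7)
The plan is to derive the corollary as an immediate pointwise consequence of Lemma \ref{invariance}. The key observation is that the orientation preserving condition $u > 0$ rules out the non-trivial invariance option at non-null points.

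More specifically, I would first let $(u,\gauge)$ be an orientation preserving gauge parameter leaving the data $\{\N,\gamma,\ellc,\ll\}$ invariant and fix an arbitrary point $p \in \N$. I would then split into two cases according to whether $p$ is null or non-null. If $p$ is a null point, Lemma \ref{invariance}(i) directly yields $u|_p = 1$ and $\gauge|_p = 0$, so there is nothing more to check. If $p$ is a non-null point, Lemma \ref{invariance}(ii) leaves two possibilities: either $(u,\gauge)|_p = (1,0)$ or $(u,\gauge)|_p = (-1,-2\ell^{\sharp}|_p)$. The second alternative is incompatible with the orientation preserving hypothesis $u|_p > 0$, and hence must be discarded, forcing $(u,\gauge)|_p = (1,0)$ at $p$ as well.

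Since $p \in \N$ was arbitrary, the conclusion $u \equiv 1$ and $\gauge \equiv 0$ follows on all of $\N$. There is essentially no obstacle in this argument: all the technical work has already been done in Lemma \ref{invariance}, and the corollary just exploits the sign condition $u > 0$ to eliminate the reflection branch of invariant gauge parameters. I do not anticipate any smoothness subtlety either, because the argument is purely pointwise and does not rely on piecing together local choices.
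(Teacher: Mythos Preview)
Your proposal is correct and matches the paper's approach: the corollary is stated there without proof, as an immediate consequence of Lemma~\ref{invariance}, and your argument is precisely the obvious pointwise application of that lemma together with the observation that $u>0$ excludes the branch $(u,\gauge)|_p=(-1,-2\ell^{\sharp}|_p)$.
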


\section{Metric hypersurface connection}
\label{Sect:Connection}
In \cite{MarsGRG} a torsion-free connection $\nablao$  
was introduced for embedded
hypersurface data. The expression of its Christoffel
symbols made sense for arbitrary metric hypersurface data (i.e.
they involved only $\gamma$, $\ellc$ and $\ll$, but not $\Y$).
This motivated the name {\bf metric hypersurface connection} 
for $\nablao$. However, in \cite{MarsGRG} 
it was not proved 
that such Christoffel symbols
indeed define a connection when the metric hypersurface data is not necessarily
embedded.
In this section, we define this connection at a fully abstract
level, following a strategy similar than for the standard proof
of existence and uniqueness of the Levi-Civita connection.
More specifically we prove that 
for any metric hypersurface data $\metdata$ 
there exists a unique connection $\nablao$ fulfilling certain identities 
involving $\nablao \gamma$ and $\nablao \ellc$
(see Proposition \ref{ExistConn}), which can be regarded as the replacement
to our setting of the metricity condition that characterizes the Levi-Civita connection.

First we 
introduce the following 2-covariant tensors on $\N$,
\begin{align}
\U  & \defi  \frac{1}{2} \pounds_{n} \gamma
 + \ellc \otimes_s d \nn  \label{defU} \\
F & \defi \frac{1}{2} d \ellc. \label{defF}
\end{align}
It is also useful to
define a one-form $\bm{\sone}$ by
\begin{align}
\bm{\sone} =  i_{n} F
\label{sone}
\end{align}
or, in coordinates, $\sone_b = F_{ab} n^a$. The following expressions
will be used often.
\begin{lemma}
\label{SomeIden}
Let $\hypdata$ be hypersurface data. Then
\begin{align}
\pounds_{n} \ellc & = 2 \bm{\sone} - d (\nn \ll) \label{poundsellc} \\
U (n, \cdot ) & = - \nn \bm{\sone} + \frac{1}{2} d\nn 
+ \frac{1}{2} (\nn)^2 d \ll \label{Un} \\
d \bm{\sone} & = \pounds_{n} F. \label{poundsF}
\end{align}
\end{lemma}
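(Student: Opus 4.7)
My plan is to derive all three identities by direct computation using Cartan's magic formula $\pounds_X = d \circ i_X + i_X \circ d$ together with the algebraic relations (\ref{EqP1bis})--(\ref{EqP2bis}) that define $P$, $n$, $\nn$.

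For (\ref{poundsellc}), the plan is to apply Cartan's formula to the one-form $\ellc$: $\pounds_n \ellc = i_n(d\ellc) + d(i_n \ellc) = 2\, i_n F + d(\ellc(n))$, where I used the definition (\ref{defF}) in the first term. Then (\ref{sone}) identifies the first term with $2\bm{\sone}$, and (\ref{EqP2bis}) gives $\ellc(n) = 1 - \nn\ll$, so $d(\ellc(n)) = -d(\nn\ll)$, which completes the identity.

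For (\ref{poundsF}), the argument is even shorter: $d\bm{\sone} = d(i_n F)$, and Cartan's formula yields $\pounds_n F = d(i_n F) + i_n(dF)$; since $F = \frac12 d\ellc$ is exact (up to the factor), $dF = 0$, and the identity follows immediately.

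The main computation is (\ref{Un}). I will split $U(n,\cdot) = \frac12 (\pounds_n \gamma)(n,\cdot) + (\ellc \otimes_s d\nn)(n,\cdot)$ according to (\ref{defU}). The second piece expands directly to $\frac12(1-\nn\ll)\, d\nn + \frac12 n(\nn)\, \ellc$ using (\ref{EqP2bis}). The first piece I handle by observing that since $[n,n]=0$, one has the identity $(\pounds_n \gamma)(n,\cdot) = \pounds_n(\gamma(n,\cdot))$ as one-forms; then by (\ref{EqP1bis}), $\gamma(n,\cdot) = -\nn\, \ellc$, so $(\pounds_n\gamma)(n,\cdot) = -n(\nn)\, \ellc - \nn \pounds_n \ellc$, and I substitute (\ref{poundsellc}) which was just proved. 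After multiplying by $\tfrac12$ and adding the two pieces, the $\pm \frac12 n(\nn)\, \ellc$ terms cancel, and expanding $\tfrac12\nn\, d(\nn\ll) = \tfrac12\nn\ll\, d\nn + \tfrac12(\nn)^2\, d\ll$ cancels against $-\tfrac12\nn\ll\, d\nn$ from the second piece, leaving exactly $-\nn\bm{\sone} + \tfrac12 d\nn + \tfrac12(\nn)^2 d\ll$.

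There is no real obstacle; the only point requiring care is the identity $(\pounds_n \gamma)(n,\cdot) = \pounds_n(\gamma(n,\cdot))$, which follows from the general formula $(\pounds_X T)(X,\cdot) = \pounds_X(T(X,\cdot)) - T([X,X],\cdot)$ for a covariant $2$-tensor $T$, combined with $[n,n]=0$. Everything else is bookkeeping with Leibniz rules and the previously established algebraic identities, so the proof should be short.
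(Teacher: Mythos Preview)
Your proposal is correct and follows essentially the same approach as the paper: Cartan's formula for (\ref{poundsellc}) and (\ref{poundsF}), and for (\ref{Un}) the same splitting via (\ref{defU}), the same key observation $(\pounds_n\gamma)(n,\cdot)=\pounds_n(\gamma(n,\cdot))$, the substitution $\gamma(n,\cdot)=-\nn\ellc$ from (\ref{EqP1bis}), and then insertion of the already-proved identity (\ref{poundsellc}). The only difference is the order in which you treat (\ref{Un}) and (\ref{poundsF}), which is immaterial.
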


\begin{proof} 
Recall the Cartan identity $\pounds_{X} \omega = i_X d \omega 
+ d (i_X \omega)$ for any $p$-form. Applying this to $\ellc$
and using (\ref{EqP2bis}),
\begin{align*}
\pounds_{n} \ellc = i_n d \ellc + d ( \ellc(n))
= 2 i_n F - d (\nn \ll),
\end{align*}
which is (\ref{poundsellc}). Moreover,
\begin{align*}
\U(n, \cdot ) & = \frac{1}{2} \pounds_{n} ( \gamma(n,\cdot))
+ \frac{1}{2}  \ellc(n)  d \nn + \frac{1}{2} n(\nn) \ellc
= \frac{1}{2} \pounds_{n} (-\nn \ellc)
+ \frac{1}{2}  \ellc(n) d \nn + \frac{1}{2} n (\nn) \ellc \\
& = - \frac{1}{2} \nn \pounds_{n} \ellc 
+ \frac{1}{2} \ellc(n) d \nn 
 = - \nn \bm{\sone} + \frac{1}{2} \left ( \nn d (\nn\ll) +
(1- \nn\ll) d \nn \right ),
 \end{align*}
where in the second equality we used (\ref{EqP1bis})
and in the last equality (\ref{poundsellc}). This proves (\ref{Un}).
The third is a direct consequence of 
the Cartan identity applied to $F$ after using $dF =0$
and $\bm{\sone} = i_n F$.
\end{proof}

We also need to recall
the following result proved in 
\cite{MarsGRG} (Lemma 3).
\begin{lemma}
\label{Lemma3}
Given
a one-form field $\omega \in \X^{\star}(\N)$ and a scalar
$f \in {\mathcal F}(\N)$, there exists a vector field $V \in \X(\N)$ satisfying
$\gamma(V,\cdot)= \omega$ and $\ellc(V)=f$, if and only if 
\begin{align}
\omega(\n) + \nn f =0.
\label{compaCon}
\end{align}
Moreover, such $V$ is unique and given by
$V = P(\omega,\cdot) + f \n$. 
\end{lemma}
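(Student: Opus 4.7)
The plan is to split the statement into three parts: necessity of the compatibility condition, existence of the proposed $V$, and uniqueness.

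For the necessity direction, I would assume $V$ exists and evaluate $\gamma(V,\cdot)$ on the vector $n$: this yields $\omega(n)=\gamma(V,n)=\gamma(n,V)$, and then apply identity (\ref{EqP1bis}), which gives $\gamma(n,\cdot)=-\nn\,\ellc$. Contracting with $V$ and using $\ellc(V)=f$ produces $\omega(n)=-\nn f$, which is precisely (\ref{compaCon}).

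For sufficiency, I would simply define $V\defi P(\omega,\cdot)+f\,n$ and check that it solves both equations. For the first, apply (\ref{EqP5bis}) to get $\gamma(\cdot,P(\cdot,\omega))=\omega-\omega(n)\,\ellc$, then add $f\,\gamma(n,\cdot)=-f\nn\,\ellc$ via (\ref{EqP1bis}), obtaining $\gamma(V,\cdot)=\omega-(\omega(n)+\nn f)\,\ellc=\omega$ by the compatibility hypothesis. For the second, compute $\ellc(V)=\omega_a P^{ab}\ell_b+f\ellc(n)$, then use (\ref{EqP2}) to replace $P^{ab}\ell_b=-\ll n^a$ and (\ref{EqP2bis}) to write $\ellc(n)=1-\nn\ll$, giving $\ellc(V)=-\ll\,\omega(n)+f(1-\nn\ll)=-\ll(\omega(n)+\nn f)+f=f$, again by compatibility.

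For uniqueness, let $V_1,V_2$ both be solutions and set $W\defi V_1-V_2$, so that $\gamma(W,\cdot)=0$ and $\ellc(W)=0$. Rather than splitting into null/non-null cases, I would apply identity (\ref{EqP1}) in the uniform form $P^{ab}\gamma_{bc}+n^a\ell_c=\delta^a_c$, contract with $W^c$, and observe that the first term vanishes by $\gamma(W,\cdot)=0$ and the second by $\ellc(W)=0$, yielding $W^a=0$. This argument simultaneously handles null and non-null points and therefore proves uniqueness globally.

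No step really poses a substantial obstacle, since everything follows from the algebraic identities (\ref{EqP1bis})--(\ref{EqP5bis}) already established for $\A^{\sharp}$; the mild subtlety is to notice that (\ref{EqP1}) is exactly what is needed for a clean uniqueness argument that avoids treating null points separately.
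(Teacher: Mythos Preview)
Your proof is correct: each of the three parts (necessity, existence via the explicit formula, and uniqueness via (\ref{EqP1})) is cleanly argued from the algebraic identities (\ref{EqP1bis})--(\ref{EqP5bis}). Note that the paper itself does not prove this lemma but merely recalls it from \cite{MarsGRG}, so there is no in-paper proof to compare against; your argument is exactly the natural one and would serve as a self-contained replacement for that citation.
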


As stated, the {\bf metric hypersurface
connection} is introduced in the next Proposition.
\begin{proposition}
\label{ExistConn}
Let $\{ \N,\gamma,\ellc,\ll\}$ be metric hypersurface data. 
There exists a unique torsion-free connection $\nablao$, called
{\bf metric hypersurface connection} satisfying the equations
\begin{align}
(\nablao_X \gamma) (Z,W) & =
- \U(X,Z)\ellc(W) 
-  U(X,W)\ellc(Z), 
  \label{Cond1} \\
(\nablao_X \ellc) (Z) + (\nablao_Z \ellc) (X) & = - 2 \ll \U(X,Z),
\label{Cond2}
\end{align}
for all $X,Z,W \in \X(\N)$.
\end{proposition}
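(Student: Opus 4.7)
My plan is to mimic the Koszul argument used for the existence and uniqueness of the Levi-Civita connection, with the role of the metric split between $\gamma$ and $\ellc$, and with Lemma \ref{Lemma3} playing the role of the inversion of $\gamma$ (which is not available in general because $\gamma$ may be degenerate at null points).

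First, assuming a torsion-free $\nablao$ satisfying (\ref{Cond1}) and (\ref{Cond2}) exists, I would work in a coordinate chart and derive the candidate Christoffel symbols. Writing $\nablao_a \gamma_{bc} = \partial_a \gamma_{bc} - \Gamo^d_{ab} \gamma_{dc} - \Gamo^d_{ac} \gamma_{bd}$ and forming the cyclic combination $\nablao_a \gamma_{bc} + \nablao_b \gamma_{ca} - \nablao_c \gamma_{ab}$, the torsion-free hypothesis collapses the right-hand side to $-2 \Gamo^d_{ab} \gamma_{dc}$; using (\ref{Cond1}) and the symmetry of $U$ the left-hand side becomes $-2 U_{ab} \ell_c$. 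This yields
\begin{equation*}
2 \Gamo^d_{ab} \gamma_{dc} = \partial_a \gamma_{bc} + \partial_b \gamma_{ca} - \partial_c \gamma_{ab} + 2 U_{ab} \ell_c.
\end{equation*}
Analogously, combining (\ref{Cond2}) with the torsion-free condition produces
\begin{equation*}
2 \Gamo^c_{ab} \ell_c = \partial_a \ell_b + \partial_b \ell_a + 2 \ll U_{ab}.
\end{equation*}
Thus, for each fixed pair $(a,b)$, the vector $V^c := \Gamo^c_{ab}$ satisfies $\gamma(V,\cdot) = \omega$ and $\ellc(V) = f$, where $\omega$ and $f$ are the explicit symmetric-in-$(a,b)$ expressions read off from the two displays above. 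Lemma \ref{Lemma3} then determines $\Gamo^c_{ab}$ uniquely as $V^c = P^{cd}\omega_d + f n^c$, \emph{provided} the compatibility condition $\omega(n) + \nn f = 0$ holds.

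The heart of the proof—and the only step requiring a genuine computation—is checking this compatibility condition. Contracting $\omega_c$ with $n^c$ and using $\ellc(n) = 1 - \nn\ll$ together with the Leibniz rule applied to $\gamma_{bc} n^c = -\nn \ell_b$ (and its $(b \leftrightarrow a)$ partner) converts the $\partial_a \gamma_{bc}$ and $\partial_b \gamma_{ca}$ terms into $\nn$- and $\partial\ell$-terms plus a piece involving $\partial_a n^c$ that cancels against the analogous piece coming from $n^c \partial_c \gamma_{ab}$ after rewriting this last contraction via $(\pounds_n \gamma)_{ab}$. After the dust settles, one recognizes $\frac{1}{2}(\pounds_n \gamma)_{ab} + (\ellc \otimes_s d\nn)_{ab} = U_{ab}$ (the very definition (\ref{defU})), and all remaining terms collapse to $0$. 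Lemma \ref{Lemma3} therefore applies and yields a unique candidate $\Gamo^c_{ab}$, which is automatically symmetric in $(a,b)$ because $\omega$ and $f$ are.

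For existence, I would show that the objects $\Gamo^c_{ab}$ built this way in any coordinate chart do transform as connection coefficients under changes of coordinates. The cleanest route is to invoke uniqueness: on the overlap of two charts both sets of coefficients satisfy (\ref{Cond1}), (\ref{Cond2}) and the torsion-free condition for the same abstract operator, so the uniqueness part of Lemma \ref{Lemma3} forces them to agree, and hence they patch into a global torsion-free connection $\nablao$. By construction this $\nablao$ satisfies both (\ref{Cond1}) and (\ref{Cond2}): indeed, the derivation above is reversible, as $\gamma(\Gamo_{ab},\cdot) = \omega$ immediately gives back (\ref{Cond1}) on symmetric and antisymmetric parts in $(bc)$ (using the symmetry of $U$), and $\ellc(\Gamo_{ab}) = f$ gives back (\ref{Cond2}). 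The main obstacle is thus purely the compatibility identity $\omega(n) + \nn f = 0$; everything else is the standard Koszul-type bookkeeping adapted to the pair $(\gamma,\ellc)$.
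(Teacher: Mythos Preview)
Your proposal is correct and follows essentially the same Koszul-type strategy as the paper: derive $\gamma(\Gamo_{ab},\cdot)$ and $\ellc(\Gamo_{ab})$ from the cyclic combination of (\ref{Cond1}) and from (\ref{Cond2}), verify the compatibility condition $\omega(n)+\nn f=0$ using the definition (\ref{defU}) of $U$, and invoke Lemma~\ref{Lemma3}. The only presentational difference is that the paper works invariantly (defining $\omega_{X,Z}$ and $f_{X,Z}$ for arbitrary vector fields and checking directly that $V_{X,Z}:=P(\omega_{X,Z},\cdot)+f_{X,Z}\,n$ is ${\mathcal F}$-linear in $X$ and Leibniz in $Z$), whereas you define the Christoffel symbols chart-by-chart and patch via uniqueness; your phrasing ``for the same abstract operator'' is slightly premature, but the argument is easily repaired by noting that each chart's symbols define a local torsion-free connection satisfying (\ref{Cond1})--(\ref{Cond2}), and then the uniqueness already established forces agreement on overlaps.
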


\begin{proof}

We prove the Proposition by providing an explicit expression for
$\nablao_X Z$. More specifically,
we provide expressions for
$\gamma(\nablao_X Z, \cdot)$ and $\ellc(\nablao_X Z)$ that
can be considered as a generalization of the Koszul formula for
the Levi-Civita connection (they reduce to it when
$\ellc= \ll=0$). Given $X,Z \in \X(\N)$ consider the one-form and scalar
\begin{align}
\omega_{X,Z} (W)  \defi 
&
U(X,Z) \ellc(W)  + \frac{1}{2} \left ( \frac{}{} \gamma([W,X],Z)
+ \gamma([W,Z],X) + \gamma(W,[X,Z] ) \right ) \nonumber \\
& + \frac{1}{2} \left ( \frac{}{} X (\gamma(W,Z)) +
Z (\gamma(W,X)) - W (\gamma(X,Z)) \right ), \label{defomega} \\
f_{X,Z}  \defi & X ( \ellc(Z)) - F(X,Z) + \ll \U(X,Z). \label{deff}
\end{align}
It is immediate that the first expression is indeed a one-form (i.e. 
satisfies $\omega_{X,Z} (h W) = h \omega_{X,Z}(W)$ for any scalar $h$).
By the previous lemma, these objects define a (unique) vector field $V_{X,Y}$ provided
\begin{align*}
\omega_{X,Z}(\n) + \nn f_{X,Z} =0,
\end{align*}
which we show next. Inserting (\ref{EqP1bis}) and
(\ref{EqP2bis})
\begin{align*}
\omega_{X,Z}(\n) 
 = & \, 
U(X,Z) \left ( 1 - \nn \ll \right ) 
+ \frac{1}{2} \left ( \frac{}{}
\gamma(\pounds_{\n} X,Z)
+ \gamma(\pounds_{\n} Z,X) - \nn \ellc([X,Z]) \right ) \\
& + \frac{1}{2} \left ( \frac{}{}  X ( - \nn \ellc (Z)  )  + 
Z  ( - \nn \ellc (X)  ) - \n (\gamma(X,Z)) \right ) \\
 = &  \, 
U(X,Z) \left ( 1 - \nn \ll \right ) 
+ \frac{1}{2} \left ( \frac{}{} - (\pounds_{\n} \gamma) (X,Z)
- \nn \ellc([X,Z]) - \nn X (\ellc(Z)) \right . \\
& \left . \frac{}{}- \nn Z (\ellc(X)) 
 - (d\nn \otimes \ellc) (X,Z) 
 -  (\ellc \otimes d \nn) (X,Z) \right ) \\
= & \,  - \nn \left ( \ll U(X,Z)
+ \frac{1}{2} \left ( \frac{}{} \ellc([X,Z]) + X(\ellc(Z))
+ Z(\ellc(X)) \right ) \right ) \\
= & \, - \nn \left ( \ll U(X,Z)  + X(\ellc(Z)) -  F(X,Z)
\right ) = - \nn f_{X,Z}
\end{align*}
where in the third equality we inserted $\pounds_{\n} \gamma$ in terms
of $\U$ using (\ref{defU}) 
and in the fourth one, we applied (to $\ellc$) the well-known general
identity  for one-forms
\begin{align}
\omega([X,Z]) = X(\omega(Z)) - Z(\omega(X)) - (d\omega) (X,Z), 
\label{idenw} 
\end{align}
and used (\ref{defF}).
We can therefore define the vector field 
 $V_{X,Z} := P(\omega_{X,Z},\cdot) + f_{X,Z} \n$
and note that
\begin{align*}
 \gamma(V_{X,Z}, \cdot) = \omega_{X,Z}, \quad
\quad
\ellc(V_{X,Z}) = f_{X,Z}
\end{align*}
hold as a consequence of Lemma \ref{Lemma3}.
It turns out that this 
operation defines a covariant derivative, which we denote by
$\nablao$, i.e.
\begin{align*}
\nablao_{X} Z := V_{X,Z}.
\end{align*}
Indeed $\omega_{h X,Z} = h \omega_{X,Z}$ and $f_{hX,Z} = h f_{X,Z}$ are 
easy 
consequences of their definition. Thus $V_{hX,Z} = h V_{X,Z}$ and the
property of linearity of the covariant derivative holds. Similarly
\begin{align*}
  \omega_{X,h W} = X(h) \gamma(W,\cdot) + h \omega_{X,W}, \quad \quad
f_{X,hW} = X(h) \ellc(W)+ h f_{X,W}
\end{align*}
are also direct consequences of the definitions. Since, by
(\ref{EqP4bis}),
\begin{align*}
P(\cdot,\gamma(W,\cdot)) +  \ellc(W) \n = W 
\end{align*}
we conclude that $V_{X,hW} = X(h) W + h V_{X,W}$, and the Leibniz rule
of the covariant derivative is proved. At this point, we have defined a
covariant derivative on $\N$. It remains to show that it is torsion-free,
satisfies (\ref{Cond1})-(\ref{Cond2}) and no other covariant derivative
has these properties. For the torsion-free condition we compute
\begin{align*}
\nablao_X Z - \nablao_Z X & = 
P(\omega_{X,Z} - \omega_{Z,X}, \cdot)  + \left ( f_{X,Z} - f_{Z,X} \right )
\n  \\
& =  P(\gamma([X,Z],\cdot),\cdot)
+ \left ( X(\ellc(Z)) - Z (\ellc(X)) - 2 F(X,Z) \right ) \n \\
& = 
[X,Z]   + \left ( - \ellc([X,Z]) + X(\ellc(Z)) - Z (\ellc(X)) - 2 F(X,Z) \right ) \n \\
& =
[X,Z]
\end{align*}
where in the third equality we used (\ref{EqP4bis})
and in the fourth one (\ref{idenw}). To show (\ref{Cond1})  we compute
\begin{align*}
(\nablao_X \gamma ) (Z,W) & = X (\gamma(Z,W)) - 
\gamma(\nablao_X Z, W) - \gamma(Z,\nablao_X W) \\
& =
X( \gamma(Z,W)) - \omega_{X,Z} (W) - \omega_{X,W} (Z) \\
& = 
- U(X,Z) \ellc(W) - U(X,W) \ellc(Z)
\end{align*}
after inserting the expression for $\omega_{X,Z}$. Concerning (\ref{Cond2})
\begin{align}
(\nablao_X \ellc) (Z) = X( \ellc(Z)) - \ellc(\nablao_X Z)
= X ( \ellc (Z)) -  f_{X,Z} = 
F(X,Z) - \ll U(X,Z)
\label{Cond3}
\end{align}
from which both (\ref{Cond2}) and the consistency condition
$(\nablao_X \ellc)(Z) - (\nablao_Z \ellc)(X) = 2 F(X,Z)=  d \ellc (X,Z)$ 
follows. Finally, we need to show uniqueness. Let $\nablao{}'$ be another torsion-free
connection satisfying (\ref{Cond1})-(\ref{Cond2}). Define the difference
tensor $S(X,Z) := \nablao_X Z - \nablao{}'_X Z$.  From
\begin{align*}
\gamma(\nablao_X Z,W) + \gamma(Z,\nablao_X W) = 
X( \gamma(Z,W) ) - (\nablao_X \gamma) (Z,W)
\end{align*}
and the similar equation for $\nablao{}'$ one finds, after subtraction,
\begin{align*}
\gamma(S(X,Z),W) + \gamma(Z,S(X,W)) =0.
\end{align*}
Writing three copies of this
\begin{align*}
\gamma(S(X,Z),W) + \gamma(Z,S(X,W)) & =0, \\
\gamma(S(Z,W),X) + \gamma(W,S(Z,X)) & =0, \\
\gamma(S(W,X),Z) + \gamma(X,S(W,Z)) & =0, 
\end{align*}
we can apply the standard argument of 
adding the first two and subtracting the third. The result is,
using  the
symmetry of $S(X,Z)$ and the fact that $W$ is arbitrary
\begin{align}
\gamma(S(X,Z),\cdot ) = 0.
\label{ort1}
\end{align}
On the other hand, from
\begin{align*}
\ellc(\nablao_X Z + \nablao_Z X) & =
X( \ellc (Z)) + Z ( \ellc(X)) - (\nablao_X \ellc)(Z) -
(\nablao_Z \ellc ) (X) \\
& = 
X( \ellc (Z)) + Z ( \ellc(X)) + 2 \ll U(X,Z)
\end{align*}
and the same expression for $\nablao{}'$, it follows, after subtraction 
\begin{align}
\ellc(S(X,Z) ) =0. \label{ort2}
\end{align}
By Lemma \ref{Lemma3}, the only vector $S(X,Z)$ satisfying (\ref{ort1})-(\ref{ort2})
is the zero vector, which concludes the uniqueness part.
\end{proof}

The connection coefficients of the connection $\nablao$ in a local coordinate
system $\{ x^a \}$ of $\N$, defined as usual by
$\nablao_{\partial_{b}} \partial_{a} \defi \Gamo{}^c{}_{ab} \partial_c$ read
\begin{align}
\Gamo{}^{c}{}_{ab} = 
\frac{1}{2} P^{cd} \left ( \partial_a \gamma_{bd} 
+ \partial_b \gamma_{ad} - \partial_d \gamma_{ab} \right )
+ \frac{1}{2}  n^c  \left (\partial_a \ell_b
+ \partial_b \ell_a \right ). \label{GambPropo}
\end{align}
Indeed, 
\begin{align*}
\gamma(\nablao_{\partial_b} \partial_a, \partial_d) & = 
\omega_{\partial_b,\partial_a}
(\partial_d)  =
U_{ba} \ell_d + \frac{1}{2} \left ( \partial_b \gamma_{ad}
+ \partial_{a} \gamma_{bd} - \partial_{d} \gamma_{ba} \right ), \\
 \ellc(\nablao_{\partial_b} \partial_a)  & = f_{\partial_b,\partial_a}
 = \partial_b \ell_a - 
\frac{1}{2} \left ( \partial_b \ell_a - \partial_a \ell_b \right ) 
+ \ll U_{ba} = 
\frac{1}{2} \left ( \partial_b \ell_a + \partial_a \ell_b \right )
+ \ll U_{ab},
\end{align*}
after using the definitions of $\omega_{XZ}$
and $f_{XZ}$  in (\ref{defomega})-(\ref{deff})
and the fact that $F_{ba} = \frac{1}{2} ( \partial_b \ell_a -
\partial_a \ell_b)$. Applying Lemma \ref{Lemma3}, expression
(\ref{GambPropo}) follows, after taking into account
(\ref{EqP2}),  i.e. that
$P^{cd} \ell_d + \ll n^c =0$.

As already mentioned, the explicit form of $\Gamo{}^c{}_{ab}$ appeared  in \cite{MarsGRG}, but no proof that this defines a connection at the fully abstract
level was given. Proposition \ref{ExistConn} accomplishes this.

An identity involving the 
contraction $\Gamo^{c}_{ca}$ was used in \cite{MarsGRG}.   
We write such identity in the form of a lemma for later use, and add
the proof for completeness.
\begin{lemma}
\label{traceGamo}
Let $\metdata$ be metric hypersurface data. In any coordinate system
$\{ x^a \}$ it holds
\begin{align*}
\Gamo{}^{c}_{ca} = - \frac{1}{2} \nn \partial_a \ll + \sone_a 
+ \frac{1}{2 (\mbox{det} (\A))} \partial_a \left ( \mbox{det} (\A)
\right ).
\end{align*}
\end{lemma}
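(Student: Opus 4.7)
The plan is to start from the explicit Christoffel-symbol formula (\ref{GambPropo}) and directly contract $b=c$. Since $P^{cd}$ is symmetric, two of the three $\gamma$-derivative terms cancel against each other, leaving
\be
\Gamo{}^{c}{}_{ca} = \frac{1}{2} P^{cd} \partial_a \gamma_{cd} + \frac{1}{2} n^c \left( \partial_a \ell_c + \partial_c \ell_a \right).
\en
Next I would rewrite the symmetrized derivative of $\ellc$ using $F_{ca} = \frac{1}{2}(\partial_c \ell_a - \partial_a \ell_c)$ and $\sone_a = n^c F_{ca}$, which yields $n^c \partial_c \ell_a = 2\sone_a + n^c \partial_a \ell_c$ and therefore
\be
\Gamo{}^{c}{}_{ca} = \frac{1}{2} P^{cd} \partial_a \gamma_{cd} + n^c \partial_a \ell_c + \sone_a.
\en

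The remaining step is to identify the first two terms with a logarithmic derivative of $\det(\A)$. I would invoke Jacobi's formula $\partial_a \det(\A) = \det(\A)\, (\A^{\sharp})^{IJ} \partial_a \A_{IJ}$ (valid for any non-degenerate matrix), where $I,J$ run over the $(m+1)$ block indices. Using the block form (\ref{matrixA}) for both $\A$ and $\A^{\sharp}$, the contraction expands as
\begin{align*}
\frac{1}{\det \A}\, \partial_a \det \A = P^{cd} \partial_a \gamma_{cd} + 2 n^c \partial_a \ell_c + \nn\, \partial_a \ll,
\end{align*}
the factor $2$ arising from the off-diagonal blocks $\ell_a$ and $\ell_b$ contributing symmetrically. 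Dividing by $2$ and substituting into the previous display cleanly yields
\be
\Gamo{}^{c}{}_{ca} = \frac{1}{2 \det(\A)} \partial_a \det(\A) - \frac{1}{2}\nn\, \partial_a \ll + \sone_a,
\en
which is the claimed identity. No obstacle is expected: the only mild point is keeping track of the factor $2$ from the symmetric off-diagonal contribution in Jacobi's formula, but this is routine once the block decomposition of $\A^{\sharp}$ is used. The sign and absolute-value subtlety in $\etaell = \sigma \sqrt{|\det(\A)|}\,\E$ plays no role here since only the logarithmic derivative of $\det(\A)$ appears, which is insensitive to the sign of $\det(\A)$.
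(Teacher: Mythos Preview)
Your proof is correct and follows essentially the same approach as the paper's own proof: contract the Christoffel formula (\ref{GambPropo}), use the symmetry of $P^{cd}$ to reduce the $\gamma$-terms to $\tfrac{1}{2}P^{cd}\partial_a\gamma_{cd}$, rewrite the symmetrized $\ellc$-derivative via $F$ and $\bm{\sone}$, and then invoke Jacobi's formula on the block matrix $\A$ to absorb $\tfrac{1}{2}P^{cd}\partial_a\gamma_{cd}+n^c\partial_a\ell_c$ into $\tfrac{1}{2\det\A}\partial_a\det\A$ minus the $\nn\partial_a\ll$ correction. The paper's argument is line-by-line the same, and your closing remark about the sign of $\det(\A)$ being irrelevant for the logarithmic derivative is a nice clarification.
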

\begin{proof}
Directly from (\ref{GambPropo})
\begin{align*}
\Gamo{}^c_{ca} &= 
\frac{1}{2} P^{cd} \partial_a \gamma_{cd}
+ \frac{1}{2} n^c \partial_a \ell_c + \frac{1}{2} n^c \partial_c \ell_a
= 
\frac{1}{2} P^{cd} \partial_a \gamma_{cd}
+ n^c \partial_a \ell_c + \frac{1}{2} n^c \left ( \partial_c \ell_a
- \partial_a \ell_c \right ) \\
& = \frac{1}{2} P^{cd} \partial_a \gamma_{cd}
+ n^c \partial_a \ell_c + n^c F_{ca}
\end{align*}
Now use
\begin{align*}
\frac{1}{2 \, \mbox{det}( \A)} \partial_a (\mbox{det} (\A)) = 
\frac{1}{2} (\A^{\sharp})^{\alpha\beta} \partial_a \A_{\alpha\beta}
= \frac{1}{2} P^{cd} \partial_a \gamma_{cd} 
+ n^c \partial_a \ell_c + \frac{1}{2} \nn \partial_a \ll 
\end{align*}
where the indices $\alpha, \beta$ take values in $0,1, \cdots, m$. Hence
\begin{align*}
\Gamo{}^c_{ca} &= \frac{1}{2 \, \mbox{det}( \A)} \partial_a (\mbox{det} (\A)) 
- \frac{1}{2} \nn \partial_a \ll + \sone_a
\end{align*}
as claimed.
\end{proof}

The gauge freedom of metric hypersurface data implies that $\N$
is endowed  not just with one connection, but with a family of connections,
each one attached to a representative of the gauge class. It is therefore
necessary to determine the relation between two such connections. The
computation is most easily done in coordinates. We first start with 
the following simple identities.
\begin{lemma}
\label{SomeIden2}
Let $\metdata$ be metric hypersurface data and $W \in \X(\N)$ be any vector field
and $f \in {\mathcal F}(\N)$.
Then, 
\begin{align}
\pounds_{f W} \gamma = f \pounds_{W} \gamma 
+ 2 \left ( W^{\flat} \otimes_s d f \right )
\label{Lief}
\end{align}
where $W^{\flat} := \gamma(W, \cdot)$.
Moreover, in any any local coordinate system of $\N$ it holds
\begin{align}
W^d \left ( \partial_a \gamma_{bd} 
+ \partial_{b} \gamma_{ad} - \partial_d \gamma_{ab} \right )
= -\pounds_{W} \gamma_{ab} + \partial_a W^{\flat}{}_b + \partial_b W^{\flat}{}_a.
\label{ConvLie}
\end{align}
\end{lemma}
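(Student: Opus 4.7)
The plan is to establish both identities by essentially direct computation, relying on the Leibniz rule for the Lie derivative and the coordinate expression $\pounds_W \gamma_{ab} = W^c \partial_c \gamma_{ab} + \gamma_{cb} \partial_a W^c + \gamma_{ac} \partial_b W^c$. Neither identity uses the metric hypersurface data structure in any essential way beyond the fact that $\gamma$ is a symmetric two-covariant tensor, so the arguments are effectively general Lie-algebraic computations.

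For the first identity (\ref{Lief}), I would simply invoke the standard formula
\be
\pounds_{fV} T = f \pounds_V T + 2\, T(V,\cdot) \otimes_s df
\en
valid for any symmetric two-covariant tensor $T$ and vector $V$. This is precisely the identity that was already used (written in the equivalent form $f \pounds_V T = \pounds_{fV} T - 2 T(V,\cdot) \otimes_s df$) in the proof that gauge transformations form a group. Applying it to $T=\gamma$ and $V=W$ and recalling $W^{\flat} := \gamma(W,\cdot)$ yields (\ref{Lief}). For completeness one could verify the general formula by evaluating both sides on arbitrary vector fields $(X,Y)$: the Leibniz rule for the Lie derivative and $\pounds_{fV} X = f \pounds_V X - X(f) V$ produce the extra contribution $X(f) T(V,Y) + Y(f) T(V,X)$, which is exactly $2 T(V,\cdot) \otimes_s df$ evaluated on $(X,Y)$.

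For the coordinate identity (\ref{ConvLie}), I would expand both $\partial_a W^{\flat}_{\,b}$ and $\partial_b W^{\flat}_{\,a}$ via the Leibniz rule,
\be
\partial_a W^{\flat}_{\,b} = W^d \partial_a \gamma_{bd} + \gamma_{bd} \partial_a W^d,
\qquad
\partial_b W^{\flat}_{\,a} = W^d \partial_b \gamma_{ad} + \gamma_{ad} \partial_b W^d,
\en
and compare with the coordinate form of $\pounds_W \gamma_{ab}$ given above. After substitution, the three terms containing partial derivatives of the components of $W$ match up and cancel (upon relabelling dummy indices), while the term $-W^c \partial_c \gamma_{ab}$ coming from $-\pounds_W \gamma_{ab}$ supplies the missing $-W^d \partial_d \gamma_{ab}$ in the combination on the right-hand side of (\ref{ConvLie}).

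There is no genuine obstacle in either step; the statement is really a bookkeeping lemma whose purpose is to record two identities used later. If one wished to be more elegant, (\ref{ConvLie}) could be viewed as the coordinate translation of (\ref{Lief}) together with the identification of the tensor $W^{\flat} \otimes_s df$ with its components, but the direct coordinate calculation is the shortest route.
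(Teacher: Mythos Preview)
Your proposal is correct and follows essentially the same approach as the paper, which simply states that both identities follow directly from the explicit coordinate expression for the Lie derivative and the Leibniz rule. You have spelled out in detail exactly what the paper leaves implicit.
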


\begin{proof}
Both follow directly by using the explicit expression for the Lie derivative
in coordinates and the Leibniz rule. 
\end{proof}

\begin{proposition}
\label{changeGam}
  Let $\metdata$ be metric hypersurface data. Under a gauge transformation
  with gauge parameters $(u,\gauge)$, the difference tensor between the connections
  $\nablao$ and $\G_{(u,\gauge)}(\nablao)$ is
  \begin{align*}
    \left ( \G_{(u,\gauge)}(\nablao) - \nablao \right )   = & \,  
    \frac{1}{2u} \gauge \otimes \left ( \pounds_{u n} \gamma
- \nn \pounds_{u \gauge} \gamma + 2 u \ellc \otimes_s d \nn \right )   + \frac{1}{2u } n \otimes \left ( \pounds_{u \gauge} \gamma
    + 2 \ellc \otimes_s du \right ).
      \end{align*}
\end{proposition}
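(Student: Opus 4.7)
The plan is to compute the difference tensor $\G_{(u,\gauge)}(\nablao) - \nablao$ using the coordinate formula (\ref{GambPropo}) for both connections. Since the difference of two torsion-free connections is a symmetric $(1,2)$-tensor, it suffices to carry out the computation in any one coordinate chart of $\N$ and then recognise the outcome as a covariant expression.

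First I would write down the Christoffel symbols $\Gamo{}^{c}{}_{ab}$ and $\Gamo'{}^{c}{}_{ab}$ of $\nablao$ and $\G_{(u,\gauge)}(\nablao)$ respectively, both via (\ref{GambPropo}). Because $\G_{(u,\gauge)}(\gamma)=\gamma$, the $\gamma$-derivatives appear unchanged in both expressions; only $P, \n, \ellc$ get replaced by their gauge-transformed versions from Lemma \ref{TransOther} and Definition \ref{gauge}. The difference splits naturally into two contributions: one involving $\Delta P^{cd} := P'{}^{cd} - P^{cd} = \nn\gauge^c\gauge^d - 2\gauge^{(c}\n^{d)}$ contracted with $\partial_a\gamma_{bd}+\partial_b\gamma_{ad}-\partial_d\gamma_{ab}$, and a second one coming from $\n^c\partial_{(a}\ell_{b)}\to \n'{}^c\partial_{(a}\ell'_{b)}$.

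For the first piece I would apply identity (\ref{ConvLie}) of Lemma \ref{SomeIden2} separately to each of the three summands of $\Delta P^{cd}$, treating in each case the uncontracted factor as carrying the free index $c$; this converts each piece into $-\pounds\gamma$ plus symmetrized partial derivatives of $\gauge^\flat := \gamma(\gauge,\cdot)$ or of $\n^\flat = -\nn\ellc$ (the latter using (\ref{EqP1bis})). For the second piece I would expand $\ell'_a = u(\ell_a+\gauge^\flat_a)$ and $\n'{}^c = u^{-1}(\n^c-\nn\gauge^c)$ via the Leibniz rule, which generates additional tails proportional to $du$.

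The final step is to collect everything according to the prefactor $\gauge^c$ or $\n^c$. It is convenient to first rewrite the target expression using $\pounds_{fW}\gamma = f\pounds_W\gamma + 2W^\flat \otimes_s df$ (formula (\ref{Lief})), so that $\pounds_{u\n}\gamma = u\pounds_\n\gamma - 2\nn\ellc\otimes_s du$ and $\pounds_{u\gauge}\gamma = u\pounds_\gauge\gamma + 2\gauge^\flat\otimes_s du$; this supplies a direct template for term-by-term matching. The main obstacle is the bookkeeping: numerous $du$ and $d\nn$ tails arise from the Leibniz expansions in both contributions, and one must verify that they recombine precisely with the $\pounds_\n\gamma$ and $\pounds_\gauge\gamma$ terms into the compact Lie-derivative form $\pounds_{u\n}\gamma$ and $\pounds_{u\gauge}\gamma$ of the proposition, with the residual $\ellc\otimes_s d\nn$ surviving as an independent summand.
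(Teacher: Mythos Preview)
Your proposal is correct and follows essentially the same route as the paper: compute the difference of Christoffel symbols via (\ref{GambPropo}), use the gauge transformations of $P$, $\n$, $\ellc$ from Lemma \ref{TransOther} and Definition \ref{gauge}, apply (\ref{ConvLie}) together with $\n^\flat=-\nn\ellc$ to convert the $\Delta P$ piece into Lie derivatives, expand the $\ell'$--$\n'$ piece by Leibniz, and finally repackage using (\ref{Lief}). The bookkeeping concern you flag is exactly the content of the paper's computation, and no additional idea is required.
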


\begin{proof}
  We use a coordinate system with associated basis $\partial_a$.
We use Lemma \ref{TransOther} to evaluate the components of
  $\left ( \G_{(u,\gauge)}(\nablao) - \nablao \right )$ as
  \begin{align*}
     \Big (  \G_{(u,\gauge)}& (\nablao) - \nablao \Big ){}^c{}_{ab}
     = (\G_{(u,\gauge)}(\Gamo)){}^{c}{}_{ab} - \Gamo{}^c{}_{ab}
    = \frac{1}{2} \left ( \G_{(u,\gauge)} (P)^{cd}  - P^{cd} \right )
    \left ( \partial_a \gamma_{bd}
    + \partial_{b} \gamma_{ad} - \partial_d \gamma_{ab} \right ) \\
    &
    + \frac{1}{2 u } ( n^c- \nn \gauge^c) \left (
\partial_a \G_{(u,\gauge)}(\ell)_b +
\partial_a \G_{(u,\gauge)}(\ell)_b \right )
    - \frac{1}{2} n^c \left (
\partial_a \ell_b +
\partial_b \ell_a \right )
\\
 = & \, \frac{1}{2} \gauge^c \left ( - \nn \pounds_{\gauge} \gamma_{ab}
+ \nn \partial_a \gauge^{\flat}_b + \nn \partial_b \gauge^{\flat}_a
+ \pounds_{n} \gamma_{ab} -
\partial_a n^{\flat}_b - \partial_b n^{\flat}_a \right ) \\
& + \frac{1}{2} n^c \left (
 \pounds_{\gauge} \gamma_{ab} -
\partial_a \gauge^{\flat}_b - \partial_b \gauge^{\flat}_a \right )
+ \frac{1}{2 u } ( n^c- \nn \gauge^c) \left (
\partial_a \G_{(u,\gauge)}(\ell)_b +
\partial_b\G_{(u,\gauge)}(\ell)_a \right ) 
  - \frac{1}{2} n^c \left (
\partial_a \ell_b +
\partial_b \ell_a \right ) \\
 = &\,  \frac{1}{2} \gauge^c \left (
- \nn \pounds_{\gauge} \gamma_{ab} + \pounds_{n} \gamma_{ab}
- \frac{\nn}{u} \left ( (\ell_b + \gauge^{\flat}_b) \partial_a u 
+ (\ell_a + \gauge^{\flat}_a) ) \partial_b u  \right )
+  \ell_a \partial_b  \nn  + \ell_b \partial_a \nn  ) \right ) \\
& + \frac{1}{2} n^c \left ( \pounds_{\gauge} \gamma_{ab}
+ \frac{1}{u} \left ( (\ell_a + \gauge^{\flat}_a) \partial_b u
+ (\ell_b + \gauge^{\flat}_b) \partial_a u \right ) \right ) \\
 = & \, \frac{1}{2} \gauge^c \left ( - \frac{\nn}{u} \pounds_{u \gauge} \gamma_{ab}
+ \frac{1}{u} \pounds_{u n} \gamma_{ab} 
+ \ell_a \partial_b \nn + \ell_b \partial_a \nn 
\right )  + \frac{1}{2} n^c \left ( \frac{1}{u} \pounds_{u \gauge} \gamma_{ab}
+ \frac{1}{u} \left ( \ell_a \partial_b u
+ \ell_b \partial_a u \right ) \right ) 
  \end{align*}
  where in the second equality we used  (\ref{ConvLie})
and in the third one $n^{\flat}_a = - \nn \ell_a$. The fourth
one follows from (\ref{Lief}) in Lemma \ref{SomeIden2}.
  \end{proof}

Consider a non-null point $p \in \N$ in a metric hypersurface
data
$\metdata$. This condition is open, so there is an  open neighbourhood 
$U_p \subset \N$ of $p$ where all points are non-null. On this set
$\gamma$ is a metric, so we can consider its corresponding Levi-Civita
connection $\nabgam$. From the previous Lemma and exploiting the gauge
structure, we can easily obtain the relationship between
the metric hypersurface connection and $\nabgam$.
\begin{proposition}
\label{DifCon}
Let $\metdata$ be metric hypersurface data, $p \in \N$ a non-null point and
$U_{p}$ an open neighbourhood of $p$ where $\gamma$ is a metric. On
$U_p$, the metric hypersurface connection $\nablao$ and
the Levi-Civita connection $\nabgam$ of $\gamma$ are related by
\begin{align*}
\nablao = \nabgam  - \frac{1}{2(\ll - \gamma(\ell^{\sharp},\ell^{\sharp}))}
\ell^{\sharp} \otimes \pounds_{\ell^{\sharp}} \gamma
\end{align*}
where $\ell^{\sharp} := \gamma^{\sharp}(\ellc,\cdot)$.
\end{proposition}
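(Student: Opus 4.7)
The plan is to exploit the gauge freedom of the construction to reach a distinguished gauge in which the metric hypersurface connection coincides with the Levi-Civita connection of $\gamma$, and then read off the difference tensor by means of Proposition \ref{changeGam}.

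First, I would apply the gauge transformation with parameters $(u,\gauge) = (1, -\ell^{\sharp})$, which is well-defined on $U_p$ since non-nullness is an open condition and $\ell^{\sharp}$ is then smooth. Using the transformation laws (\ref{gaugetrans}) together with the identity $\ellc(\ell^{\sharp}) = \gamma^{\sharp}(\ellc,\ellc) = \gamma(\ell^{\sharp},\ell^{\sharp})$, one directly verifies
\begin{equation*}
\G_{(1,-\ell^{\sharp})}(\gamma) = \gamma, \qquad \G_{(1,-\ell^{\sharp})}(\ellc) = 0, \qquad \G_{(1,-\ell^{\sharp})}(\ll) = \ll - \gamma(\ell^{\sharp},\ell^{\sharp}),
\end{equation*}
and the last quantity is nonzero on $U_p$ by Lemma \ref{Pnnn_non-null}.

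Next I would observe that since the transformed one-form vanishes, Lemma \ref{Pnnn_non-null} applied to the gauge-transformed data gives $\G_{(1,-\ell^{\sharp})}(P) = \gamma^{\sharp}$, $\G_{(1,-\ell^{\sharp})}(n) = 0$ and $\G_{(1,-\ell^{\sharp})}(\nn) = (\ll - \gamma(\ell^{\sharp},\ell^{\sharp}))^{-1}$. Substituting into the explicit formula (\ref{GambPropo}) for the Christoffel symbols, the $n$-term drops out identically and the remaining expression reduces to the standard Koszul formula for the Levi-Civita connection of $\gamma$. Hence $\G_{(1,-\ell^{\sharp})}(\nablao) = \nabgam$ on $U_p$.

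Finally I would invoke Proposition \ref{changeGam} with the same parameters $(u,\gauge) = (1,-\ell^{\sharp})$ to compute $\nabgam - \nablao$. The terms proportional to $du$ vanish since $u=1$. Using that at non-null points $n = -\ell^{\sharp}/L$ and $\nn = 1/L$, with $L := \ll - \gamma(\ell^{\sharp},\ell^{\sharp})$, together with the identity (\ref{Lief}) from Lemma \ref{SomeIden2} and $(\ell^{\sharp})^{\flat} = \ellc$, a short calculation shows that the combination $\pounds_n \gamma - \nn \pounds_{-\ell^{\sharp}} \gamma + 2\ellc \otimes_s d\nn$ cancels identically. What survives is $\tfrac{1}{2}\, n \otimes \pounds_{-\ell^{\sharp}}\gamma = \tfrac{1}{2L}\, \ell^{\sharp} \otimes \pounds_{\ell^{\sharp}}\gamma$, which is exactly the claimed difference tensor. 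The only nontrivial bookkeeping is in this final cancellation, which hinges on the Leibniz rule for $\pounds_{fW}\gamma$; everything else is a direct substitution into formulas already established.
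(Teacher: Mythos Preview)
Your argument is correct and follows essentially the same strategy as the paper: gauge-transform by $(1,-\ell^{\sharp})$ to kill $\ellc$, identify the resulting connection with $\nabgam$, and then read off the difference tensor via Proposition~\ref{changeGam}. The only tactical difference is that the paper applies Proposition~\ref{changeGam} in the \emph{inverse} direction---starting from the transformed data (where $n'=0$, $\ellc'=0$) and gauging back by $(1,\ell^{\sharp})$---which makes most terms in the formula vanish on sight and avoids the cancellation you verify by hand; your route works equally well but is slightly less economical.
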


\begin{proof}
Consider the gauge parameters $(u=1, \gauge = - \ell^{\sharp})$. The gauge transformed
data are
\begin{align*}
\G_{(1,-\ell^{\sharp})}(\ellc) =0, \quad \quad \G_{(1,-\ell^{\sharp})} (\ll)
= \ll - 2 \ellc( \ell^{\sharp}) + \gamma(\ell^{\sharp},\ell^{\sharp}) =
\ll - \gamma(\ell^{\sharp}, \ell^{\sharp} ).
\end{align*}
Thus, the connection $\G_{(1,-\ell^{\sharp})}(\nablao)$ is metric
(because of its defining property \ref{Cond1}) and hence agrees with
$\nabgam$. So,  $\nabgam$ is the metric hypersurface connection of
the data
\begin{align*}
(\N, \gamma, \ellc^{\prime}=0, 
\ll{}^{\prime} = \ll - \gamma(\ell^{\sharp}, \ell^{\sharp}) \neq 0 )
\end{align*}
for which, in addition $n'=0$, $\nn{}^{\prime} = \frac{1}{\ll{}^{\prime}}$.
Using the group structure and the fact that 
$(1,-\ell^{\sharp})^{-1} = (1, \ell^{\sharp})$ we have
\begin{align*}
\nablao = \G_{(1,-\ell^{\sharp})^{-1}}  (
\G_{(1,-\ell^{\sharp})} \nablao ) =
\G_{(1,\ell^{\sharp})}  ( \nabgam) =
\nabgam - \frac{\nn{}'}{2} \ell^{\sharp} \otimes \pounds_{\ell^{\sharp}} \gamma 
\end{align*}
where in the last equality we used Proposition (\ref{changeGam}) with 
$(u,\gauge)= (1, \ell^{\sharp})$, 
$\nablao = \nabgam$, $\ellc = 0$ and  $n =0$. Inserting $\nn{}' = \frac{1}{\ll{}'}
= \frac{1}{\ll - \gamma(\ell^{\sharp},\ell^{\sharp})}$, the result follows.
\end{proof}

Returning to the general case, the connection $\nablao$ is defined by its action on $\gamma$ and 
$\ellc$. It is necessary to know also how it acts on 
the associated tensors $P$ and $n$. The expressions were derived in
\cite{MarsGRG}, but using an indirect argument that exploited yet
another connection $\overline{\nabla}$ on codimension
one submanifolds called ``rigging connection'' \cite{MarsSenovilla1993, MarsGRG}. We include a self-contained 
derivation using only $\nablao$ for completeness.
\begin{lemma}
\label{identitiesnablao}
Let $\{\N,\gamma,\ellc,\ll\}$ be metric hypersurface data, 
then
the following identities hold.
\begin{align}
\nablao_{a} \gamma_{bc} & = - \ell_b \U_{ac} - \ell_c \U_{ab}, \label{nablaogamma} \\
\nablao_a \ell_b & = F_{ab} - \ll \U_{ab}, \label{nablaoll}\\
\nablao_a n^b & = - \nn n^b (d \ll)_a - \left( \nn P^{bf} + n^b n^f \right ) F_{af} + P^{bf} \U_{af},
\label{nablaon}  \\
\nablao_{a} P^{bc} & = - \left ( n^b P^{cf} + n^c P^{bf} \right ) F_{af} - n^b n^c \partial_a \ll, \label{nablaoP} 
\end{align}
\end{lemma}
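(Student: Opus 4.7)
The plan is organized in three parts. First, identities (\ref{nablaogamma}) and (\ref{nablaoll}) come essentially free from Proposition \ref{ExistConn}: the former is the defining condition (\ref{Cond1}) written in abstract-index notation, and the latter is precisely equation (\ref{Cond3}) derived in the existence part of that proof, which established the stronger statement $(\nablao_X \ellc)(Z) = F(X,Z) - \ll U(X,Z)$. So no new work is needed for these two.

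For (\ref{nablaon}), the key tool is Lemma \ref{Lemma3}, which uniquely reconstructs any vector from its $\gamma$- and $\ellc$-contractions, provided these satisfy the compatibility (\ref{compaCon}). For each fixed $a$, I regard $\nablao_a n^{b}$ as a vector in $b$ and compute the two contractions by differentiating the algebraic identities that define $n$ and $\nn$. Applying $\nablao_a$ to (\ref{EqP4}), $\gamma_{bc} n^c + \nn \ell_b = 0$, and inserting (\ref{nablaogamma}), (\ref{nablaoll}), the expression for $U(n,\cdot)$ given by (\ref{Un}), and $\ell_c n^c = 1 - \nn\ll$ from (\ref{EqP3}), yields an explicit closed form for $\gamma_{bc} \nablao_a n^c$. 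Similarly, differentiating (\ref{EqP3}), $n^c \ell_c + \nn \ll = 1$, and using the antisymmetry identity $F_{ac}n^c = -\sone_a$, gives $\ell_c \nablao_a n^c$ explicitly. Checking the compatibility (\ref{compaCon}) holds (which it must, since $\nablao_a n^b$ exists as a tensor), Lemma \ref{Lemma3} then reconstructs $\nablao_a n^b$, and algebraic simplification matches the right-hand side of (\ref{nablaon}).

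For (\ref{nablaoP}) I proceed identically, treating $c$ as a spectator index so that $\nablao_a P^{bc}$ is, for fixed $a,c$, a vector in $b$. I apply $\nablao_a$ to (\ref{EqP1}) to obtain $\gamma_{be}\nablao_a P^{ec}$ using the already-known $\nablao_a \gamma$ and $\nablao_a n$, and to (\ref{EqP2}) to obtain $\ell_e \nablao_a P^{ec}$ using $\nablao_a \ell$ and $\nablao_a n$. Lemma \ref{Lemma3} reconstructs $\nablao_a P^{bc}$ and simplification yields (\ref{nablaoP}). The main obstacle throughout is the algebraic bookkeeping: in the contractions of $\nablao_a n^b$ (and likewise $\nablao_a P^{bc}$) with $\ellc$, several terms involving $\partial_a \nn$, $\sone_a$ and $\partial_a \ll$ appear and must conspire so that, after collecting, only the $\partial_a \ll$ contribution survives in the $n$-component of the final formulas, with $\sone$ and $\partial \nn$ cancelling cleanly. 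The identities in Lemma \ref{SomeIden}, especially (\ref{Un}), make these cancellations automatic provided one tracks symmetric and antisymmetric parts carefully.
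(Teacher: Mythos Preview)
Your proposal is correct and follows essentially the same approach as the paper: (\ref{nablaogamma}) and (\ref{nablaoll}) are read off from Proposition~\ref{ExistConn}, and (\ref{nablaon}), (\ref{nablaoP}) are obtained by differentiating the algebraic relations (\ref{EqP4})--(\ref{EqP1}) and reconstructing the vector via Lemma~\ref{Lemma3}, exactly as the paper does. The only minor difference is that the paper observes, in the computation of $\nablao_a P^{bc}$, that the $\nablao_a n^b$ contributions cancel before one needs to substitute their explicit form, which slightly shortens the algebra.
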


\begin{proof}
The first one is just the expression of 
(\ref{Cond1})  in abstract index notation. The second has already been
obtained in \ref{Cond3}. In order to prove (\ref{nablaon}) we compute
\begin{align*}
0  = \nablao_{a} ( n^b \ell_b  + \nn \ll) &=
\ell_b \nablao_a n^b  + n^b \left ( F_{ab} - \ll \U_{ab}
\right ) + \nablao_a (\nn \ll) :=
\ell_b \nablao_a n^b  + q_a \\
0  = \nablao_a ( n^b \gamma_{bc} + \nn \ell_c ) 
 & =
\gamma_{bc} \nablao_a n^b + n^b \left ( - \ell_b \U_{ac} - \ell_c \U_{ab}
\right ) + \ell_c \nablao_a \nn + \nn F_{ac} - \nn \ll \U_{ac} \\
& = \gamma_{bc} \nablao_a n^b - \U_{ac} + \ell_c \left ( \nablao_a \nn
- n^b \U_{ab}  \right ) + \nn F_{ac} :=
\gamma_{bc} \nablao_a n^b + Q_{ac}.
\end{align*}
According to Lemma \ref{Lemma3}, the expression for $\nablao_a n^b$ must
be (there is no need to check explicitly that the compatibility condition
(\ref{compaCon}) holds true, since $\nablao_a n^b$ obviously  exists)
\begin{align*}
\nablao_a n^b &=  -P^{bf} Q_{af} - q_a n^b \\
& =  P^{bf} \left ( U_{af} - \ell_f \left ( \nablao_a \nn -  n^d \U_{da} 
  \right ) - \nn F_{af} \right ) 
+ n^b \left ( - n^f F_{af} + \ll n^f \U_{af} - \nablao_a (\nn \ll) \right ) \\
& = P^{bf} \left ( \U_{af} - \nn  F_{af} \right )
+ n^{b} \left (  \ll  \nablao_a \nn 
- n^f F_{af} - \ll \nablao_a \nn- \nn \nablao_a \ll \right )
\end{align*}
which is (\ref{nablaon}).
Similarly, 
\begin{align*}
0 = \nablao_a \left ( P^{bc} \gamma_{fc} + n^b \ell_f \right ) & =
\gamma_{fc} \nablao_{a} P^{bc}  + P^{bc} \nablao_a \gamma_{fc}
+ \nablao_a (n^b \ell_f) := \gamma_{fc}  \nablao_a P^{bc} 
+ Q^{b}{}_{af} \\
0 = \nablao_a \left ( P^{bc} \ell_c + \ll n^b \right ) & =
\ell_c \nablao_a P^{bc}  + P^{bc} \nablao_a \ell_c 
+ \nablao_a (\ll n^b) := \ell_c \nablao_a P^{bc} 
+ q^b{}_a.
\end{align*}
Applying Lemma (\ref{Lemma3}) it must be
\begin{align}
\nablao_a P^{bc} =  -P^{cf} Q^{b}{}_{af}  - q^b{}_a n^c.
\label{fromlemma}
\end{align}
Now,
\begin{align*}
P^{cf} Q^{b}{}_{af} & =
P^{cf} \left ( - P^{bd} \ell_f \U_{ad} - P^{bd} \ell_d \U_{af} 
+ \ell_f \nablao_a  n^b + n^b \nablao_a \ell_f \right ) \\
& = \ll n^c \left ( P^{bd} U_{ad} - \nablao_a n^b \right )
+ P^{cf} n^b \left ( \nablao_a \ell_f  + \ll U_{af} \right ) \\
& = \ll n^c \left ( P^{bd} U_{ad} - \nablao_a n^b \right )
+ P^{cf} n^b F_{af}
\end{align*}
and
\begin{align*}
q^b{}_a n^c
= n^c \left ( P^{bd} F_{ad} + \ll \left ( - P^{bd} \U_{ad} + \nablao_a n^b \right )+ n^b \nablao_a \ll 
\right ).
\end{align*}
Inserting into (\ref{fromlemma}) the terms $(-P^{bd} \U_{ad} + \nablao_a n^d)$
cancel out (this is why the explicit
expression of $\nablao_a n^b$ was not inserted), leaving
\begin{align*}
\nablao_a P^{bc} = - P^{cf} n^b F_{af}  - P^{bd} n^c F_{ad}
- n^b n^c \nablao_a \ll,
\end{align*}
which is (\ref{nablaoP}).
\end{proof}

\begin{corollary}
With the same assumptions
\begin{align}
(\nablao_n n)^b & =  P^{bf} \left ( - 2 \nn \sone_f + \frac{1}{2} (d \nn)_f
+ \frac{1}{2} (\nn)^2 (d \ll)_f  \right ) - \nn n(\ll) n^b, \label{ndern} \\
\ell_b \nablao_a n^b & = (1 -  \nn \ll) \left (
\sone_a  - \frac{1}{2} \nn  (d \ll)_a \right ) - \frac{1}{2}
d ( \ll \nn)_a   \label{ldern}
\end{align}
\end{corollary}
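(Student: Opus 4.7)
Both identities follow directly by contracting equation (\ref{nablaon}) with appropriate objects and substituting known identities. My plan is as follows.

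For the first identity, I would contract (\ref{nablaon}) with $n^a$. The term $-\nn n^b (d\ll)_a n^a$ produces $-\nn n(\ll) n^b$. For the middle term, I use $F_{af} n^a = \sone_f$ from the definition (\ref{sone}); this immediately kills the piece $n^b n^f F_{af} n^a = n^b n^f \sone_f$ by antisymmetry (equivalently $\sone_f n^f = F_{af} n^a n^f = 0$), and leaves $-\nn P^{bf} \sone_f$. The remaining term $P^{bf} \U_{af} n^a = P^{bf} \U(n,\cdot)_f$ is evaluated using identity (\ref{Un}) of Lemma \ref{SomeIden}, giving an extra $-\nn P^{bf}\sone_f$ together with the $\frac{1}{2}P^{bf}(d\nn)_f + \frac{1}{2}(\nn)^2 P^{bf}(d\ll)_f$ contributions. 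Adding the two copies of $-\nn P^{bf}\sone_f$ produces $-2\nn P^{bf}\sone_f$, which is exactly (\ref{ndern}).

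For the second identity, I would contract (\ref{nablaon}) with $\ell_b$. The contractions $n^b \ell_b = 1-\nn\ll$ and $P^{bf}\ell_b = -\ll n^f$ from (\ref{EqP3})--(\ref{EqP2}) reduce the right-hand side to
\[
-\nn(1-\nn\ll)(d\ll)_a - \bigl(1 - 2\nn\ll\bigr) n^f F_{af} - \ll n^f \U_{af}.
\]
Using $n^f F_{af} = -\sone_a$ (antisymmetry of $F$) and substituting $n^f \U_{af} = \U(n,\cdot)_a$ by means of (\ref{Un}), I collect the $\sone_a$ coefficient as $(1-2\nn\ll)+\nn\ll = 1-\nn\ll$, while the $(d\ll)_a$ and $(d\nn)_a$ coefficients combine to $-\nn + \tfrac{1}{2}\ll(\nn)^2$ and $-\tfrac{\ll}{2}$ respectively. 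A short rearrangement, expanding $d(\ll\nn)_a = \nn (d\ll)_a + \ll (d\nn)_a$, matches the claimed right-hand side of (\ref{ldern}).

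The calculation is essentially mechanical substitution; no step is a genuine obstacle. The only points requiring a little care are the correct use of the symmetry of $\U$ (so that $n^f\U_{af}$ equals $\U(n,\cdot)_a$), the antisymmetry of $F$ (for the sign of $\sone$), and keeping track of the factors $\nn$ and $\ll$ when regrouping terms into the compact form $(1-\nn\ll)(\sone_a - \tfrac{1}{2}\nn(d\ll)_a) - \tfrac{1}{2}d(\ll\nn)_a$. No further machinery beyond Lemmas \ref{SomeIden} and \ref{identitiesnablao} and the algebraic identities (\ref{EqP2})--(\ref{EqP3}) is needed.
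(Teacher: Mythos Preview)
Your proposal is correct and follows essentially the same approach as the paper: contract (\ref{nablaon}) with $n^a$ for the first identity and with $\ell_b$ for the second, then substitute (\ref{Un}) and the algebraic relations (\ref{EqP2})--(\ref{EqP3}). The paper merely organizes the intermediate expression for (\ref{ldern}) slightly differently, writing it as $(1-\nn\ll)(\sone_a - \nn(d\ll)_a) + \ll n^f(-\U_{af} + \nn F_{af})$ before inserting (\ref{Un}), but the content is identical to your computation.
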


\begin{proof}
For the first, contract (\ref{nablaon}) with $n^a$ and use (\ref{Un}), as
well as the definition of 
$\sone$. For the second contract (\ref{nablaon}) with $\ell_b$
and use $\ellc(n) = 1 - \nn \ll$ and $P(\ellc,\cdot) = - \ll n$ to obtain
\begin{align*}
\ell_b \nablao_a n^b & =  ( 1- \nn \ll) ( \sone_a - \nn (d \ll)_a )
+ \ll n^f ( - U_{af} + \nn F_{af} ).
\end{align*}
Inserting (\ref{Un}) and simplifying yields (\ref{ldern}).
\end{proof}

\begin{remark}
When $\metdata$ is null (so that $\nn =0$ everywhere), the vector field $n$ is automatically
autoparallel and affinely parametrized with respect to the
connection $\nablao$.
\end{remark}

It remains to compute the covariant derivative of the volume form.
\begin{lemma}
\label{dervolume}
\label{deretaell}
Let $\{\N,\gamma,\ellc,\ll\}$ be metric hypersurface data. Then  
\be
\nablao_X \etaell = \bm{\omega}(X) \etaell, \quad \quad
\mbox{where} \quad \quad
\bm{\omega}  \defi \frac{1}{2} \nn d \ll - \bm{\sone}.
\en
\end{lemma}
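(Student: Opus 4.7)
The plan is to compute $\nablao_X \etaell$ directly in coordinates and recognize the right-hand side using Lemma \ref{traceGamo}. The key observation is that for any torsion-free connection, the covariant derivative of a top form has a particularly simple expression in terms of the trace of the Christoffel symbols.

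First, I would fix a (positively oriented) coordinate chart $\{x^a\}$ and write $\etaell = h\,\E$ with $h=\sqrt{|\det\A|}$. For an arbitrary top form $h\,\E$ on an $m$-dimensional manifold, a standard calculation using the antisymmetry of $\E$ shows that for any torsion-free connection with Christoffel symbols $\Gamo{}^c{}_{ab}$,
\begin{align*}
\nablao_b(h\,\E)_{a_1\cdots a_m}
= \bigl(\partial_b h - h\,\Gamo{}^c{}_{cb}\bigr)\E_{a_1\cdots a_m},
\end{align*}
because in the sum $\sum_i \Gamo{}^c{}_{b a_i}\E_{a_1\cdots c\cdots a_m}$ only the contribution $c=a_i$ survives by antisymmetry. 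I would state this briefly (or view it as the standard formula for the divergence of a volume-form).

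Next, I would compute $\partial_b h/h=\tfrac{1}{2\det\A}\partial_b(\det\A)$, so that
\begin{align*}
\nablao_b\etaell=\Bigl(\tfrac{1}{2\det\A}\partial_b(\det\A)-\Gamo{}^c{}_{cb}\Bigr)\etaell.
\end{align*}
Now invoke Lemma \ref{traceGamo}, which states
\[
\Gamo{}^c{}_{cb}=-\tfrac{1}{2}\nn\,\partial_b\ll+\sone_b+\tfrac{1}{2\det\A}\partial_b(\det\A).
\]
The $(\det\A)$-derivative terms cancel exactly, leaving $\nablao_b\etaell=\bigl(\tfrac{1}{2}\nn\,\partial_b\ll-\sone_b\bigr)\etaell=\omega_b\,\etaell$, which is the claimed identity.

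There is no real obstacle; the argument is a one-line computation once Lemma \ref{traceGamo} is in place. The only minor care needed is the sign/absolute value in differentiating $\sqrt{|\det\A|}$ (which produces $\tfrac{1}{2\det\A}\partial_b(\det\A)$ regardless of the sign of $\det\A$) and the observation that $\sigma$ is locally constant so it drops out. The result is chart-independent because both sides are tensorial, so working in one positively oriented chart suffices.
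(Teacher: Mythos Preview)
Your proof is correct and follows essentially the same route as the paper: compute the covariant derivative of the top form in coordinates to get $\bigl(\tfrac{1}{2\det\A}\partial_b(\det\A)-\Gamo{}^c{}_{cb}\bigr)\etaell$, then invoke Lemma \ref{traceGamo} to cancel the determinant term and obtain $\omega_b$. The paper's argument is identical in structure, including working in a positively oriented chart and using the standard trace formula for the derivative of a volume form.
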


\begin{proof}
We work in coordinates. The components of the volume form are
(we assume without loss of generality a positively oriented chart)
\begin{align*}
\etaell_{b_1 \cdots b_m} = \sqrt{|\mbox{det} (A)|}  \epsilon_{b_1
\cdots b_m}. 
\end{align*}
Its covariant derivative in terms of the connection symbols
$\Gamo{}^{a}_{bc}$ of $\nablao$ has the standard form
\be
\nablao_a \etaell_{b_1 \cdots b_m} = \partial_{a}
\etaell_{b_1 \cdots b_m} - \Gamo{}^{c}_{ac} \etaell_{b_1 \cdots b_m}
= \left ( 
\frac{1}{2 \, \mbox{det}(\A)} \partial_a (\mbox{det} (\A))
- \Gamo{}^c_{ca} \right ) \etaell_{b_1 \cdots b_m}.
\en
Using Lemma \ref{traceGamo} yields
\begin{align*}
\nablao_a \etaell_{b_1 \cdots b_m} = 
\left (\frac{1}{2} \nn \partial_a \ll - \sone_a\right )
\etaell_{b_1 \cdots b_m}.
\end{align*}
\end{proof}

\section{Metric hypersurface curvature}
\label{Sect:Curvature}

In this section we explore a number of general properties of the
curvature of the metric hypersurface connection
$\nablao$.  
Our convention for the curvature operator $R^{\nabla}$
of any (torsion-free) connection $\nabla$ is
\be
R^{\nabla}(X,Y) Z \defi
\left ( \nabla_X \nabla_Y - \nabla_Y \nabla_X - \nabla_{[X,Y]} \right ) Z,
\en
and the curvature tensor is defined by
$R^{\nabla}(\bm{\alpha},Z,X,Y)$ (where $\bm{\alpha}$ is any one-form)
\be
\mbox{Riem}^{\nabla} ( \bm{\alpha},Z,X,Y) ) \defi
\bm{\alpha} \left ( R^{\nabla}(X,Y) Z \right ).
\en
The Ricci tensor  $\Ricc^{\nabla}$ 
is defined by contracting the first and third indices.
The curvature tensor of the metric hypersurface connection
$\nablao$ is denoted by  $\Riemo$ and
its Ricci tensor by 
$\Ricco$. When using abstract
index notation we write simply
$\Riemoin{}^{a}{}_{bcd}$ and
$\Riemoin_{ab}$. The trace of the curvature tensor
$\Riemo$ tensor with respect to the
first two indices (which does not vanish in general for torsion-free but
non-metric connections) is denoted by $\Vo$.
In the next proposition we find its explicit expression.

\begin{proposition}
\label{antisymRiemoProp}
Let $\metdata$ be metric hypersurface data. The traces of curvature tensor
of the metric hypersurface connection $\nablao$ satisfy
the following identities.
\begin{equation}
  \Ricco(X,Y) - \Ricco(Y,X) = \Vo(X,Y) =  \left ( d \bm{\sone} -
 \frac{1}{2} d \nn \wedge d \ll \right )(X,Y),
\quad  X,Y \in \X(\N).
\label{antisymRiemo}
\end{equation}
\end{proposition}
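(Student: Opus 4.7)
The strategy is to split the statement into two equalities and prove each separately. The first equality $\Ricco(X,Y)-\Ricco(Y,X)=\Vo(X,Y)$ is a purely algebraic consequence of the first Bianchi identity for torsion-free connections, which holds for $\nablao$ by Proposition \ref{ExistConn}. The second equality, identifying $\Vo$ with $d\bm{\sone}-\tfrac{1}{2}d\nn\wedge d\ll$, will be obtained by applying the Ricci identity to the canonical volume form $\etaell$ and using the formula for $\nablao\etaell$ from Lemma \ref{deretaell}.

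For the first equality, I would start from the first Bianchi identity (valid because $\nablao$ is torsion-free) written in abstract index notation as $\Riemoin{}^{a}{}_{bcd}+\Riemoin{}^{a}{}_{cdb}+\Riemoin{}^{a}{}_{dbc}=0$, and then contract the index $a$ with $b$. Using the antisymmetry of $\Riemo$ in its last two indices, two of the three resulting terms combine into $\Ricco_{cd}-\Ricco_{dc}$, and the remaining one is by definition $\Vo_{cd}=\Riemoin{}^{a}{}_{acd}$. This gives the first equality at once.

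For the second equality, I would apply the Ricci identity to $\etaell$. Since $\etaell$ is a top-degree totally antisymmetric tensor on the $m$-dimensional manifold $\N$, the standard identity $\sum_i M^{e}{}_{b_i}\etaell_{b_1\cdots e\cdots b_m}=M^{e}{}_{e}\etaell_{b_1\cdots b_m}$ (valid for any endomorphism $M$, as is verified immediately by writing $\etaell$ as $f\epsilon$) collapses the sum coming from the Ricci identity to a single trace and yields
\begin{equation*}
[\nablao_c,\nablao_d]\,\etaell_{b_1\cdots b_m}=-\Vo_{cd}\,\etaell_{b_1\cdots b_m}.
\end{equation*}
On the other hand, using Lemma \ref{deretaell}, which states $\nablao_c\etaell_{b_1\cdots b_m}=\omega_c\etaell_{b_1\cdots b_m}$ with $\bm{\omega}=\tfrac{1}{2}\nn\,d\ll-\bm{\sone}$, a direct computation gives $[\nablao_c,\nablao_d]\etaell_{b_1\cdots b_m}=(\nablao_c\omega_d-\nablao_d\omega_c)\etaell_{b_1\cdots b_m}$, which by the torsion-free property equals $(d\bm{\omega})_{cd}\etaell_{b_1\cdots b_m}$. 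Equating and using $d\bm{\omega}=\tfrac{1}{2}d\nn\wedge d\ll-d\bm{\sone}$ gives $\Vo=d\bm{\sone}-\tfrac{1}{2}d\nn\wedge d\ll$, as required.

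I do not expect any real obstacle here: both steps are essentially mechanical once the correct identities are assembled. The only point that requires some care is keeping track of sign and factor conventions (in particular the paper's convention $F=\tfrac{1}{2}d\ellc$ means components of $d$ on $1$-forms are written without a $1/2$), so that the contraction of the first Bianchi identity and the computation of $d\bm{\omega}$ produce exactly the tensor $\Vo$ with the stated sign.
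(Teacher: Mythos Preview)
Your proposal is correct and follows essentially the same route as the paper: the first equality is obtained by contracting the first Bianchi identity, and the second by applying the Ricci identity to the volume form $\etaell$ together with Lemma \ref{deretaell} to express $\nablao\etaell$ in terms of $\bm{\omega}=\tfrac{1}{2}\nn\,d\ll-\bm{\sone}$. The only cosmetic difference is that the paper leaves the symmetric $\omega_a\omega_b$ terms explicit before antisymmetrizing, whereas you pass directly to $d\bm{\omega}$; the content is identical.
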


\begin{proof}
The well-known relationship $\Vo(X,Y)= \Ricco(X,Y) - \Ricco(Y,X)$
follows directly from the first Bianchi identity, so we only need to
show the second equality. 
In abstract index notation, the Ricci identity applied to 
the volume form $\etaell$ gives
\begin{align*}
\left ( \nablao_a \nablao_b - \nablao_b \nablao_a \right ) \etaell_{c_1
\cdots c_m} & = - \sum_{i=1}^m \Riemoin{}^{d_i}_{\,\,\,c_i a b}
\etaell_{c_1  \cdots d_i \cdots c_m} = 
- \Riemoin{}^{d}_{\,\,\,dab} \etaell_{c_1 \cdots c_m} \\
& = - \Vo_{ab} \etaell_{c_1 \cdots c_m}
= \Vo_{ba} \etaell_{c_1 \cdots c_m}
\end{align*}
where in the second equality we used the antisymmetry of $\etaell$ 
and in the last one the fact that $\Vo$ is antisymmetric.
We apply lemma \ref{deretaell}
\be
\nablao_b \etaell_{c_1 \cdots c_m} = \omega_b \etaell_{c_1 \cdots c_m}
\quad \quad  \Longrightarrow \quad \quad
\nabla_a \nablao_b \etaell_{c_1 \cdots c_m} = \left ( \nablao_a \omega_b 
+ \omega_a \omega_b \right ) \etaell_{c_1 \cdots c_m}
\en
and we conclude
\be
\Vo_{ab} = \nablao_b \omega_a - \nablao_a \omega_b =
\frac{1}{2} \left ( \nablao_b \nn \nablao_a \ll - \nablao_a \nn \nablao_b
\ll \right )
- \nablao_b \sone_a + \nablao_a \sone_b 
\en
after using the explicit form of $\omega_a$.
Since in abstract index notation $(d \bm{\omega})_{ab} = \nablao_a \omega_b
- \nablao_b \omega_a$ the result can be  written as
 (\ref{antisymRiemo}).
\end{proof}

In  the next Proposition we derive a several identities satisfied
by the curvature tensor of $\nablao$. 
\begin{proposition}
Let $\{\N,\gamma,\ellc,\ll\}$ be 
metric hypersurface data. Then, the 
curvature tensor $\Riemo$ satisfies the following identities:
\begin{align}
\ell_d \Riemoin{}^d{}_{acb}  =  & \nablao_a F_{cb} 
+ \nablao_c \left ( \ll \U_{ba} \right )-
\nablao_b \left ( \ll \U_{ca} \right ) \label{first} \\
\ell_d \Riemoin{}^d{}_{acb} n^c  = &  
\nablao_a \sone_b + \pounds_{n}  (\ll \U)_{ba} 
+  ( F_{bc} - \ll \U_{bc} ) P^{cf} (\U_{af} - \nn F_{af} ) \nonumber\\
& + \nn \ll \nablao_b \sone_a - \frac{1}{2} \nablao_b \nablao_a (\nn \ll) 
+  \frac{1}{2} \nablao_b (\nn \ll)  \sone_a \nonumber \\
&+ (\nn \ll -1 ) \left ( \left (  \sone_b - \frac{1}{2} \nn \nablao_b \ll
\right ) \left ( \sone_a - \nn \nablao_a \ll \right )
- \frac{1}{2} \nablao_b \left ( \nn \nablao_a \ll  \right ) \right )  
\label{second} \\
\Riemoin{}^d{}_{acb} n^a = &
n^d \left ( 2 \nablao_{[c} \sone_{b]}
+ \nablao_{[c} \ll \nablao_{b]}  \nn
+ 2 P^{af} F_{a[c} \U_{b]f} \right ) + P^{df} \left [ 
2 \nablao_{[c} \U_{b]f} + \nn \nablao_f F_{cb}  
\right . \nonumber \\
& \left . 
+  2 U_{f[c} \left ( \sone_{b]} 
- \nn \nablao_{b]} \ll \right ) 
+ F_{f[c} \left ( 2 \nn \sone_{b]} - \nablao_{b]} \nn -  
\nn{}^2 \nablao_{b]} \ll \right ) 
\right ] \label{third} \\
\ell_d \Riemoin{}^d{}_{acb} n^a = &  \left ( 1- \ll \nn \right )
\left ( 2 \nablao_{[c}  \sone_{b]}
+ 2 P^{af} F_{a[c} U_{b]f}   + \nablao_{[c} \ll \nablao_{b]} \nn
\right ) \nonumber \\
& + \ll \sone_{[c} 
\left ( \nablao_{b]} \nn + \nn{}^2 \nablao_{b]} \ll \right )
\label{fourth} \\
\ell_d \Riemoin{}^d{}_{acb} n^a n^c = &  \left ( 1- \ll \nn \right )
\left ( \pounds_{n} \sone_b - \sone_a P^{af} 
\left (\U_{bf} + \nn F_{bf} \right )  
+ \frac{1}{2} P^{af} F_{ba} \left ( \nablao_f \nn + \nn{}^2 \nablao_f \ll
\right ) \right ) \nonumber \\
& + \frac{1}{2} n(\ll) \left [ 
(1 - \ll \nn ) 
\nablao_b \nn - \ll \nn{}^2 \sone_{b} \right ]
- \frac{1}{2} n(\nn) \left [
 (1-\nn\ll)  \nablao_{b} \ll  + \ll \sone_b \right ].
\label{fifth}
\end{align}
\end{proposition}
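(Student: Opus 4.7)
\medskip

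\noindent\textbf{Proof proposal.} The overarching idea is to obtain each identity from the Ricci identity applied to a suitable tensor and then substitute the derivative formulas for $\ellc$, $n$, $P$, $\gamma$ collected in Lemma~\ref{identitiesnablao}, together with the algebraic relations (\ref{EqP1bis})--(\ref{EqP5bis}) and the auxiliary identities of Lemma~\ref{SomeIden}. All five formulas should cascade from the first, so I would treat them in the order they are stated.

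\smallskip

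For (\ref{first}), apply the Ricci identity to the one-form $\ellc$:
\begin{equation*}
(\nablao_b \nablao_c - \nablao_c \nablao_b)\ell_a \;=\; \Riemoin{}^{d}{}_{acb}\,\ell_d .
\end{equation*}
Substituting $\nablao_c \ell_a = F_{ca} - \ll\U_{ca}$ from (\ref{nablaoll}) produces $\nablao_b F_{ca} - \nablao_c F_{ba} + \nablao_c(\ll \U_{ba}) - \nablao_b(\ll \U_{ca})$. Since $F = \tfrac12 d\ellc$ is closed, the torsion-free connection $\nablao$ obeys $\nablao_{[a}F_{bc]}=0$, which rewrites the first two terms as $\nablao_a F_{cb}$. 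This yields (\ref{first}).

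\smallskip

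For (\ref{second}), contract (\ref{first}) with $n^c$. The leftmost term becomes $n^c\nablao_a F_{cb} = \nablao_a(n^c F_{cb}) - F_{cb}\nablao_a n^c = \nablao_a \sone_b - F_{cb}\nablao_a n^c$, where (\ref{sone}) enters. The middle term $n^c\nablao_c(\ll\U_{ba})$ is converted into a Lie derivative $\pounds_n(\ll\U)_{ba}$ minus two correction terms of the form $(\ll\U)\cdot \nablao n$, using the standard identity relating $\pounds_n$ and $\nablao$ for a torsion-free connection. The last term $n^c\nablao_b(\ll\U_{ca})$ is reorganised via Leibniz so that $n^c \U_{ca}$ and $n^c \nablao_b \nn$ appear, at which stage Lemma~\ref{SomeIden} (in particular the expression (\ref{Un}) for $\U(n,\cdot)$ and (\ref{poundsellc})) replaces these quantities by $\sone$, $\nn$ and $\ll$. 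Substituting (\ref{nablaon}) for every remaining $\nablao_a n^c$ and collecting terms should reproduce (\ref{second}).

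\smallskip

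For (\ref{third}), apply the Ricci identity to the vector field $n$:
\begin{equation*}
(\nablao_c \nablao_b - \nablao_b \nablao_c) n^d \;=\; \Riemoin{}^{d}{}_{acb}\, n^a .
\end{equation*}
The right-hand side is what we want; on the left, insert (\ref{nablaon}) and differentiate. The resulting terms split naturally along the basis $\{n,P(\cdot,\mathrm{d}x)\}$ of $T\N$: the coefficient of $n^d$ collects the antisymmetric combinations $\nablao_{[c}\sone_{b]}$ and $\nablao_{[c}\ll\,\nablao_{b]}\nn$, while the $P^{df}$ coefficient gathers the remaining pieces. To keep the computation tractable I would use (\ref{nablaoP}) to rewrite $\nablao P$ terms, and repeatedly substitute the contractions $P^{af}\ell_f = -\ll n^a$ and $\gamma_{bf}n^f = -\nn\ell_b$ to eliminate unwanted mixed terms.

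\smallskip

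Identities (\ref{fourth}) and (\ref{fifth}) are then obtained purely by contraction: (\ref{fourth}) follows from (\ref{first}) contracted with $n^a$ (or equivalently (\ref{third}) with $\ell_d$), using $\ell_d n^d = 1-\nn\ll$ and $\ell_d P^{df} = -\ll n^f$, while (\ref{fifth}) results from further contracting (\ref{fourth}) with $n^c$ and converting the $n^c\nablao_c$ pieces into Lie derivatives of $\sone$, $F$, and products involving $\nn$, $\ll$ via Lemma~\ref{SomeIden}.

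\smallskip

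\noindent\textbf{Main obstacle.} The conceptual steps (Ricci identity + substitution of $\nablao\ellc$, $\nablao n$, $\nablao P$, $\nablao\gamma$) are straightforward; the real difficulty is the sheer bookkeeping. The explicit form of $\nablao_a n^b$ in (\ref{nablaon}) contains five different scalar/one-form coefficients, and each is differentiated again when we apply a second covariant derivative. Showing that the massive collection of terms telescopes into the compact form on the right-hand sides of (\ref{second})--(\ref{fifth}) requires systematic use of the closedness of $F$, the Cartan-type relation $d\bm{\sone}=\pounds_n F$ from (\ref{poundsF}), and the trace identity from Proposition~\ref{antisymRiemoProp} to rewrite antisymmetric second derivatives. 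Organising the calculation so that the coefficient of $n^d$ and the coefficient of $P^{df}$ are isolated at each stage (rather than mixing them through $\gamma$-contractions) is in my view the key to keeping the derivation manageable.
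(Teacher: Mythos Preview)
Your proposal is correct and follows essentially the same route as the paper: Ricci identity for $\ellc$ gives (\ref{first}); contraction of (\ref{first}) with $n^c$ together with the Lie-derivative/covariant-derivative swap gives (\ref{second}); the Ricci identity for $n$ gives (\ref{third}); and (\ref{fourth}), (\ref{fifth}) follow by the successive contractions you describe, using $\ell_d n^d=1-\nn\ll$, $\ell_d P^{df}=-\ll n^f$ and the Cartan identity. The paper streamlines the computation for (\ref{third}) by introducing the abbreviations $Q_b:=\sone_b-\nn\nablao_b\ll$, $T_{bf}:=\U_{bf}-\nn F_{bf}$ and $L^f{}_c:=P^{fa}F_{ac}-\tfrac12 n^f\nablao_c\ll$, which implements exactly your advice of isolating the $n^d$ and $P^{df}$ coefficients; Proposition~\ref{antisymRiemoProp} is not actually needed anywhere.
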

\begin{proof}
Since $\nablao$ is torsion-free, the Ricci identity applied to $\ellc$
gives
\begin{align*}
\ell_d \Riemoin{}^d{}_{acb}
 & = \nablao_b \nablao_c \ell_a - \nablao_{c} 
\nablao_b \ell_a =
\nablao_b \left ( F_{ca} - \ll \U_{ca} \right ) -
\nablao_c \left ( F_{ba} - \ll \U_{ba} \right ) \\
& =
\nablao_a F_{cb} + \nablao_c \left ( \ll \U_{ba} \right )-
\nablao_b \left ( \ll \U_{ca} \right )
\end{align*}
where in the third equality we used the fact that $dF=0$, i.e.
$\nablao_{[a} F_{bc]} =0$. This shows (\ref{first}). 

For the second, we exploit the 
following general identity (easy to show)
valid  for any covariant two-tensor $V_{ab}$,
vector $W^c$ and any torsion-free covariant derivative 
(in particular $\nablao$).
\begin{align*}
W^{c} \left ( \nablao_{c} V_{ba} - \nablao_b V_{ca} \right )
= (\pounds_{W} V)_{ba} - \nablao_b \left ( V_{ca} W^c \right ) 
- V_{bc} \nablao_a W^c.
\end{align*}
Applying this to $V_{ba} = \ll \U_{ba}$ and to $W^c= n^c$ it follows
\begin{align*}
\ell_d \Riemoin{}^d{}_{acb} n^c & = 
\nablao_a ( n^c F_{cb} ) - F_{cb} \nablao_a n^c 
+ \pounds_{n}  (\ll \U)_{ba} - \nablao_b ( \ll \U_{ca} n^c )
- \ll U_{bc} \nablao_a n^c \\
& = \nablao_a \sone_b 
+ ( F_{bc} - \ll \U_{bc} ) \nablao_a n^c
+ \pounds_{n}  (\ll \U)_{ba} 
- \nablao_b ( \ll \U_{ca} n^c ).
\end{align*}
Inserting now (\ref{nablaon}) it follows
\begin{align}
\ell_d \Riemoin{}^d{}_{acb} n^c = &
\nablao_a \sone_b + \pounds_{n}  (\ll \U)_{ba} 
+ ( F_{bc} - \ll \U_{bc} ) 
( \sone_a - \nn \nablao_a \ll ) n^c \nonumber \\
& + ( F_{bc} - \ll \U_{bc} ) P^{cf} (\U_{af} - \nn F_{af} )
- \nablao_b ( \ll \U_{ca} n^c ) \nonumber \\
= & 
\nablao_a \sone_b + \pounds_{n}  (\ll \U)_{ba} 
+  ( F_{bc} - \ll \U_{bc} ) P^{cf} (\U_{af} - \nn F_{af} ) \nonumber\\
& 
- (\sone_b + \ll  \U_{bc} n^c) ( \sone_a - \nn \nablao_a \ll) 
- \nablao_b ( \ll \U_{ac} n^c ) \label{intermediate}
\end{align}
The elaborate the terms in the last line, both of which involve
$\ll \U_{ac}  n^c$. Recalling (\ref{Un}), we have
\begin{align*}
\ll U_{ac} n^c & = - \nn \ll \sone_a + \frac{1}{2} \ll \nablao_a \nn 
+ \frac{1}{2} (\nn)^2 \ll \nablao_a \ll \\
& = - \nn \ll \sone_a + \frac{1}{2} \nablao_a ( \nn \ll )
+ \frac{1}{2} \nn ( \nn \ll -1  ) \nablao_a \ll
\end{align*}
Taking $\nablao_b$ of this expression, a straightforward computation
yields
%
%
\begin{align*}
-  (\sone_b & + \ll  \U_{bc} n^c)( \sone_a - \nn \nablao_a \ll) 
- \nablao_b ( \ll \U_{ac} n^c )
=  \\
= \, &
\nn \ll \nablao_b \sone_a - \frac{1}{2} \nablao_b \nablao_a (\nn \ll) 
+  \frac{1}{2} \nablao_b (\nn \ll)  \sone_a \\
&+ (\nn \ll -1 ) \left ( \left (  \sone_b - \frac{1}{2} \nn \nablao_b \ll
\right ) \left ( \sone_a - \nn \nablao_a \ll \right )
- \frac{1}{2} \nablao_b \left ( \nn \nablao_a \ll  \right ) \right ). \\
\end{align*}
Inserting into (\ref{intermediate}) one obtains (\ref{second}).

We next prove  (\ref{third}). We use the
Ricci identity $\Riemoin{}^d{}_{acb} n^a =
2 \nablao_{[c} \nabla_{b]} n^d$, where brackets denote antisymmetrization.
The computation gets simplified if we define $Q_{b} := \sone_{b} - \nn \nablao_b
\ll$ and $T_{bf} := \U_{bf} - \nn F_{bf}$ so that 
(\ref{nablaon})  is written as
\begin{align*}
\nablao_b n^d = n^d Q_{b} + P^{df} T_{bf}
\end{align*}
and hence
\begin{align*}
\nablao_{[c} \nabla_{b]} n^d  = &
\nablao_{[c} \left ( n^d Q_{b]}  + P^{df} T_{b]f} \right ) \\
 = &  \left ( n^d Q_{[c} + P^{d f} T_{[c |f} \right ) Q_{b]}
+ n^d \nablao_{[c} Q_{b]} 
+ (\nablao_{[c} P^{df}) T_{b]f} + P^{df} \nablao_{[c} T_{b] f}.
\end{align*}
 Next we insert $\nablao_c P^{bf} =  n^b L^f{}_c + n^f L^{b}{}_c$
with $L^{f}{}_c :=  P^{fa} F_{ac} - \frac{1}{2} n^f \nablao_c \ll$ (see
\ref{nablaoP}) and 
obtain
\begin{align}
\nablao_{[c} \nabla_{b]} n^d  = & n^d \nablao_{[c} Q_{b]} 
+ P^{df} \left ( \nablao_{[c} T_{b] f} + T_{[c|f} Q_{b]} \right )
+ n^d L^{f}{}_{[c} T_{b] f}  
+ n^f L^{d}{}_{[c} T_{b]f} \nonumber \\
= & n^d \left ( \nablao_{[c} Q_{b]} +  L^{f}{}_{[c} T_{b] f}  \right )
+ P^{df} \left ( \nablao_{[c} T_{b] f} + T_{[c|f} Q_{b]} \right )
+ \frac{1}{2} L^{d}{}_{[c}  \left ( \nablao_{b]} \nn 
+ \nn{}^2 \nablao_{b]} \ll \right ) \nonumber \\
= & n^d \underbrace{\left ( 
\nablao_{[c} Q_{b]} +  L^{f}{}_{[c} T_{b] f} - \frac{1}{4} \nablao_{[c} \ll
\nablao_{b]} \nn \right )}_{{\mathcal B}_{cb}} \nonumber \\
& + P^{df} \underbrace{\left ( \nablao_{[c} T_{b] f} + T_{[c|f} Q_{b]} 
+ \frac{1}{2} F_{f[c} \left ( \nablao_{b]} \nn + \nn{}^2 \nablao_{b]} \ll
\right )
\right )}_{{\mathcal C}_{fcb}} \label{ExpBC}
\end{align}
where in the second equality we used
\begin{align}
n^f T_{bf} = n^f \left (\U_{bf} - \nn F_{bf} \right )
= \frac{1}{2} \left ( \nablao_{b} \nn + \nn{}^2
\nablao_b \ll \right )
\label{nT}
\end{align}
which follows directly from (\ref{Un}) and (\ref{sone}). We next compute
${\mathcal B}_{cb}$ and  ${\mathcal C}_{fcb}$. From the definitions   of $Q_{b}$,
$L^{f}{}_{c}$ and $T_{bf}$:
\begin{align}
& \nablao_{[c} Q_{b]} =  \nablao_{[c}  \sone_{b]} 
- \nablao_{[c} \nn \nabla_{b]} \ll \\
& L^{f}{}_{[c} T_{b]f} =   (P^{fa} F_{a[c} - \frac{1}{2} n^f \nablao_{[c} \ll 
)( U_{b]f} - \nn F_{b]f} ) 
= P^{fa} F_{a[c} U_{b]f} 
- \frac{1}{4} \nabla_{[c} \ll \nablao_{b]} \nn \\
& \Longrightarrow \quad \quad {\mathcal B}_{cb} =
\nablao_{[c}  \sone_{b]}
+ P^{af} F_{a[c} U_{b]f}   + \frac{1}{2} \nablao_{[c} \ll \nablao_{b]} \nn
\end{align}
where in the second line   (\ref{nT}) was used again. Concerning
${\mathcal C}_{fcb}$,
\begin{align}
  & \nablao_{[c} T_{b]f} = \nablao_{[c} \U_{b]f} - \nablao_{[c} \nn
  F_{b] f} - \nn \nabla_{[c} F_{b]f} = \nablao_{[c} \U_{b]f} -
  \nablao_{[c} \nn F_{b] f} + \frac{1}{2} \nn \nablao_f F_{cb}
  \nonumber \\
  & T_{[c|f} Q_{b]} = \U_{[f[c} (\sone_{b]} -\nn \nablao_{b]} \ll ) +
  \nn F_{f[c} (\sone_{b]} -\nn \nablao_{b]} \ll ) \quad \quad \quad
  \quad \quad \quad \quad \quad \quad \quad \quad
  \Longrightarrow \nonumber \\
  & 
  {\mathcal C}_{fcb} = \nablao_{[c} \U_{b]f} + \frac{1}{2} \nn
  \nablao_f F_{cb} + \U_{f[c} \left (\sone_{b]} -\nn \nablao_{b]} \ll
  \right )
+ F_{f[c} \left ( \nn \sone_{b]} 
- \frac{1}{2} \nablao_{b]} \nn - \frac{1}{2} \nn{}^2 \nablao_{b]} \ll
\right )
\label{mathcalC}
\end{align}
where in the second equality of the first line we used $d F=0$.
With these expressions for ${\mathcal B}_{cb}$ and ${\mathcal C}_{fcb}$, 
(\ref{ExpBC}) combined with  the Ricci identity for $\n$ yields (\ref{third}).

We turn our attention to (\ref{fourth}). This can be computed in two different
ways, namely contracting (\ref{first}) with $n^a$ or (\ref{third}) with
$\ell_d$. It is a non-trivial check that both methods
yield the same result. Here we only present the second method, which is
simpler. From $\ell_d n^d
= 1 - \nn \ll$ and $\ell_d P^{df} = -\ll n^f$ we have
\begin{align}
\frac{1}{2} \ell_d \Riemoin{}^d{}_{acb} n^a  = (1- \ll \nn ) {\mathcal B}_{cb}
- \ll n^f {\mathcal C}_{fbc}
\label{inter}
\end{align}
so we only need to compute the last contraction. From 
(\ref{mathcalC})
\begin{align}
n^f {\mathcal C}_{fcb} = & 
\nablao_{[c} \left ( U_{b]f} n^f \right )
+ U_{f[b} \left ( - \nablao_{c]} n^f - n^f \sone_{c]} + \nn n^f \nabla_{c]} \ll \right )
+ \frac{1}{2} \nn n^f \nablao_{f} F_{cb} \nonumber \\ 
& + \sone_{[c} \left (
\nn \sone_{b]} - \frac{1}{2} \nablao_{b]} \nn - \frac{1}{2} \nn{}^2 \nablao_{b]} \ll
\right ).
\label{nB}
\end{align}
Using respectively (\ref{Un}), (\ref{nablaon}) and (\ref{poundsF}),
the first three terms can be rewritten as
\begin{align*}
(a) \hspace{1cm} & \nablao_{[c} \left ( U_{b]f} n^f \right ) 
= - \nn \nablao_{[c} \sone_{b]} + \sone_{[c} \nablao_{b]} \nn 
+ \nn \nablao_{[c} \nn \nablao_{b]} \ll, \\
(b) \hspace{1cm} & \left ( - \nablao_{c} n^f - n^f \sone_{c} + \nn n^f \nablao_{c} \ll \right )  = 
- 2 n^f \sone_{c} 
+ 2 \nn n^f \nablao_{c} \ll -  P^{fa} \left ( \U_{ac} + \nn F_{ac} \right ), \\
(c) \hspace{1cm} & \frac{1}{2} n^f \nablao_f F_{cb}  = \frac{1}{2}
\pounds_{n} F_{cb} +  F_{f[c} \nablao_{b]} n^f
=  \nablao_{[c} \sone_{b]} 
+  F_{f [c} \left ( n^f Q_{b]} + P^{fa} (U_{b]a} - \nn F_{b] a} ) \right ) \\
& \quad \quad \quad \quad  \quad
=  \nablao_{[c} \sone_{b]} 
+  \sone_{[c} Q_{b]} +  P^{fa} F_{f[c} U_{b]a} \\
& \quad \quad \quad \quad  \quad 
=  \nablao_{[c} \sone_{b]} -  \nn \sone_{[c} \nablao_{b]} \ll 
+ P^{fa} F_{f[c} U_{b]a}  
\end{align*}
which inserted into (\ref{nB}) yields, after simplification,
\begin{align*}
n^f {\mathcal C}_{fcb} = 
-\frac{1}{2} \sone_{[c} \left ( 
\nablao_{b]} \nn + \nn{}^2 \nablao_{b]} \ll \right ),
\end{align*}
and (\ref{inter}) becomes (\ref{fourth}) once the explicit form of
${\mathcal B}_{cb}$ is used.

To show (\ref{fifth}) we again have two possible routes,
namely 
contract (\ref{second}) with $n^a$ or (\ref{fourth}) with $n^c$.
The fact that both agree is again a non-trivial check.
The second route turns out to be much simpler, so we take this.
Given that $\sone_{c} n^c =0$ one has, from the Cartan identity,
$2 n^c \nabla_{[c} \sone_{b]} = \pounds_{n} \sone_b$ so that the contraction
of (\ref{fourth}) with $n^c$ is
\begin{align*}
\ell_d \Riemoin{}^d{}_{acb} n^a n^c = &  \left ( 1- \ll \nn \right )
\left ( \pounds_{n} \sone_b 
- P^{af} \sone_a U_{bf} - P^{af} F_{ab} U_{cf} n^c + \frac{1}{2}
n( \ll) \nablao_b \nn \right . \\
& \quad \quad \quad \quad \quad \quad  \left . - \frac{1}{2} n(\nn) \nablao_{b} \ll \right )
- \frac{1}{2} \sone_b \left ( n(\nn) + \nn{}^2 n(\ll) \right )
\end{align*}
which is (\ref{fifth}) after inserting (\ref{Un}).

\end{proof}

As shown in the previous section, at non-null points
the connection $\nablao$ can be related to the Levi-Civita connection of 
$\gamma$. Thus, the curvature tensor of $\nablao$ can also
be related to the Riemann tensor of $\gamma$, which we
denote by $\Riemgam$.
\begin{proposition}
Let $\metdata$ by metric hypersurface data and $p \in \N$ a non-null
point. Consider a neighbourhood $U_p$ of $p$ consisting of non-null
points. Then
\begin{align*}
\Riemoin{}^{a}{}_{bcd} 
= \Riemgam{}^{a}{}_{bcd} & +
\ell^a  \left ( 
f \nabgam_c \Pi_{bd} + \Pi_{bd} \left ( \nabgam_c f 
+ \frac{1}{2} f^2 \nabgam_c \ellell 
+ f^2 \acc_c  \right) 
\right . \\
& \left .
\quad \quad - f \nabgam_d \Pi_{bc} - \Pi_{bc} \left ( \nabgam_d f 
+  \frac{1}{2} f^2 \nabgam_d \ellell + f^2 \acc_d 
\right) \right ) \\
& + \frac{f}{2}  \left ( \Pi^{a}{}_c \Pi_{bd} - \Pi^a{}_{d} \Pi_{bc} \right )
+ f \left ( F^a{}_d \Pi_{bc} - F^{a}{}_{c} \Pi_{bd} \right ),
\end{align*}
where indices are lowered an raised using $\gamma$ and $\gamma^{\sharp}$,
$\acc := \nabgam_{\ell^{\sharp}} \ell^{\sharp}$ is the acceleration 
of $\ell^{\sharp}$,  $\ellell := \gamma(\ell^{\sharp},\ell^{\sharp})$,
$f:= - \frac{1}{2 ( \ll - \ellell)}$ and
\begin{align*}
\Pi_{ab} := (\pounds_{\ell^{\sharp}} \gamma)_{ab} =
\nabgam_a \ell_b + \nabgam_b \ell_a 
\end{align*}
is the deformation tensor of $\gamma$ along $\ell^{\sharp}$.
\end{proposition}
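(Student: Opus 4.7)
The approach is to leverage Proposition \ref{DifCon}, which on the non-null neighbourhood $U_p$ expresses $\nablao = \nabgam + S$, where the $(1,2)$-tensor $S$ has components $S^a{}_{bc} = f\,\ell^a\,\Pi_{bc}$ with $f := -\tfrac{1}{2(\ll-\ellell)}$ and $\Pi_{bc} = (\pounds_{\ell^\sharp}\gamma)_{bc}$. Since both $\nablao$ and $\nabgam$ are torsion-free and $S$ is symmetric in its covariant indices, the well-known identity relating the curvature tensors of two such connections reads
\begin{equation*}
\Riemoin{}^a{}_{bcd} - \Riemgam{}^a{}_{bcd}
= \nabgam_c S^a{}_{db} - \nabgam_d S^a{}_{cb}
+ S^a{}_{ce}\,S^e{}_{db} - S^a{}_{de}\,S^e{}_{cb}.
\end{equation*}
The plan is to substitute the explicit form of $S$ into this identity and to collect terms.

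For the linear piece I would apply Leibniz to $\nabgam_c (f\,\ell^a\,\Pi_{db})$: the terms $(\nabgam_c f)\,\ell^a\,\Pi_{db}$ and $f\,\ell^a\,\nabgam_c \Pi_{db}$ appear directly, while the remaining term $f\,(\nabgam_c \ell^a)\,\Pi_{db}$ is handled via the splitting $\nabgam_c \ell_b = \tfrac{1}{2}\Pi_{cb} + F_{cb}$ into its symmetric and antisymmetric parts, the antisymmetric part being $F$ by (\ref{defF}) and the torsion-free property of $\nabgam$. Raising the index with $\gamma^{\sharp}$ gives $\nabgam_c \ell^a = \tfrac{1}{2}\Pi^a{}_c - F^a{}_c$. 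For the quadratic piece I would exploit the auxiliary identity $\ell^e \Pi_{ce} = \tfrac{1}{2}\nabgam_c\ellell + \acc_c$, which is immediate from the definitions of $\ellell$ and $\acc$ together with the metricity of $\nabgam$. This yields
\begin{equation*}
S^a{}_{ce}\,S^e{}_{db} - S^a{}_{de}\,S^e{}_{cb}
= f^{2}\,\ell^a\left[\bigl(\tfrac{1}{2}\nabgam_c \ellell + \acc_c\bigr)\Pi_{bd}
- \bigl(\tfrac{1}{2}\nabgam_d \ellell + \acc_d\bigr)\Pi_{bc}\right].
\end{equation*}

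Assembling everything, the $\ell^a$-aligned contributions reproduce the first long bracket of the statement (with the $\nabgam_c f$, $f\nabgam_c \Pi_{bd}$, $f^2\nabgam_c \ellell$ and $f^{2}\acc_c$ terms collected on $\Pi_{bd}$, and the analogous ones on $\Pi_{bc}$), while the terms not proportional to $\ell^a$ combine into $\tfrac{f}{2}(\Pi^a{}_c\Pi_{bd} - \Pi^a{}_d\Pi_{bc}) + f(F^a{}_d\Pi_{bc} - F^a{}_c\Pi_{bd})$. The single subtle point — which I expect to be the main source of arithmetic error — is a sign in the $F$-term: since $F$ is antisymmetric, raising the second index through $\gamma^{\sharp}$ yields $\gamma^{ae}F_{ce} = -F^a{}_c$, and it is precisely this sign flip that places the $F$-contribution in the form stated. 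Apart from this, every step is Leibniz, symmetry of $\Pi$, or direct substitution, so the proof reduces to a careful algebraic collection.
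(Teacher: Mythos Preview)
Your proposal is correct and follows essentially the same route as the paper: both start from the standard curvature-difference identity (\ref{IdenCurv}) for two torsion-free connections, insert the difference tensor $S^a{}_{bd}=f\,\ell^a\,\Pi_{bd}$ from Proposition~\ref{DifCon}, split $\nabgam_c\ell_a$ into its symmetric part $\tfrac12\Pi_{ca}$ and antisymmetric part $F_{ca}$, and use $\ell^e\Pi_{ce}=\tfrac12\nabgam_c\ellell+\acc_c$ to evaluate the quadratic term. The only cosmetic difference is that the paper keeps the index position $F_c{}^a$ throughout rather than performing the sign flip to $-F^a{}_c$ that you highlight, but the computations are otherwise line-by-line the same.
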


\begin{proof}
We use the standard relation (see e.g. \cite{Wald}) between the curvature tensors
of two torsion-free connections $\nabla^{(2)}$ and $\nabla^{(1)}$
with difference tensor $S:= \nabla^{(2)} - \nabla^{(1)}$
\begin{align}
R^{(2)}{}^{a}{}_{bcd} = 
R^{(1)}{}^a{}_{bcd} + \nabla^{(1)}_c S^{a}{}_{bd}
- \nabla^{(1)}_d S^{a}{}_{bc} + S^{a}{}_{cf} S^{f}{}_{bd}
- S^{a}{}_{df} S^{f}{}_{bc}.
\label{IdenCurv}
\end{align}
In our case, with $\nabla^{(1)} = \nabgam$ and
$\nabla^{(2)} = \nablao $ the difference tensor
is given in Proposition \ref{DifCon}, namely
\begin{align*}
S^{a}{}_{bd}   = -\frac{1}{2 ( \ll - \gamma(\ell^{\sharp},
\ell^{\sharp}))}  \ell^a \pounds_{\ell^{\sharp}} \gamma_{bd}=
f \ell^a \Pi_{bd}.
\end{align*}
Recall the definition of $F = \frac{1}{2} d \ellc$ so that
\begin{align*}
\nabgam_c \ell_a = 
\frac{1}{2} \left ( \nabgam_{c} \ell_a - \nabgam_a \ell_c \right )
+ \frac{1}{2} \left ( \nabgam_{c} \ell_a + \nabgam_a \ell_c \right )
= F_{ca} + \frac{1}{2} \Pi_{ac}
\end{align*}
and we obtain
\begin{align}
\nabgam_{c} S^a{}_{bd} = 
(\nabgam_c f ) \ell^a \Pi_{bd} + f
\left ( F_c{}^a + \frac{1}{2} \Pi_{c}{}^a \right ) \Pi_{bd} 
+ f \ell^a \nabgam_c \Pi_{bd}.
\label{firstStep}
\end{align}
On the other hand
\begin{align*}
S^{a}{}_{cf} \ell^f = 
f \ell^a \Pi_{cf} \ell^f =
f \ell^a \left ( \nabgam_c \ell_f + \nabgam_f \ell_c \right ) \ell^f =
f \ell^a \left ( \frac{1}{2} \nabgam_c \ellell + \acc_c \right )
\end{align*}
so that
\begin{align}
S^a{}_{cf} S^f{}_{bd} = f \Pi_{bd} S^{a}{}_{cf} \ell^f 
= f^2 \ell^a \Pi_{bd} 
\left ( \frac{1}{2} \nabgam_c \ellell + \acc_c \right ).
\label{secondStep}
\end{align}
Inserting (\ref{firstStep}) and (\ref{secondStep}) into (\ref{IdenCurv}) yields the result at once.
\end{proof}

\section{Matter-hypersurface data and constraint
equations}
\label{Sect:Matter}

In this section we summarize the main result in \cite{MarsGRG}, as this
plays a fundamental role in the following subsection.
We start by recalling the notion of matter-hypersurface data
\cite{MarsGRG}.

\begin{definition}[Matter-Hypersurface data]
\label{MatterHyp}
A seven-tuple $\mathypdata$ formed by
 hypersurface data $\{\N,\gamma$, $\ellc$, $\ll, \Y  \}$,
a scalar  $\rho_{\ell}$ and a one-form $\bm{J}$ on $\N$
is called {\bf matter-hypersurface data} provided $\rho_{\ell}$
and $\bm{J}$ have the gauge behaviour
\begin{eqnarray*}
\G_{(u,\gauge)} (\rho_{\ell}) =  \rho_{\ell} + \bm{J} (\gauge) , \quad \quad
\G_{(u,\gauge)} (\bm{J})  = \frac{1}{u} \bm{J}.
\end{eqnarray*}
and the
following  {\bf constraint field equations} hold.
\begin{align}
\rho_{\ell} & =
 \frac{1}{2} \Riemo{}^{c}_{\,\,\,bcd} P^{bd} + \frac{1}{2}
\ell_a \Riemo{}^{a}_{\,\,\,bcd} P^{bd} n^c+
\nablao_d \left ( \left ( P^{bd} n^c - P^{bc} n^d \right ) \Y_{bc}  \right )
 + \frac{1}{2} \nn P^{bd} P^{ac} \left ( \Y_{bc} \Y_{da} - \Y_{bd} \Y_{ca} \right )  
 \nonumber  \\
& +\frac{1}{2} \left (P^{bd} n^c - P^{bc} n^d \right )
\left [ \ll \nablao_d \U_{bc}+
   \left ( \U_{bc} + \nn \Y_{bc}  \right ) \nablao_d \ll 
+ 2 \Y_{bc} \left ( F_{df} - \Y_{df} \right ) n^f
\right ],
\label{rhol}  \\
J_c  & = 
   \ell_a \Riemo{}^a_{\,\,\,bcd} n^b n^d
- \nablao_f \left [ \left ( \nn P^{bd} - n^b n^d \right ) \left (
\delta^f_{d} \Y_{bc}  - \delta^f_c \Y_{bd} \right ) \right ] - 
 \left ( P^{bd} - \ll n^b n^d \right ) \left ( \nablao_d \U_{bc} -
\nablao_c \U_{bd} \right ) \nonumber \\ 
& 
- \left ( \nn P^{bd} - n^b n^d \right ) \left [
 \frac{1}{2} \left ( \U_{bc} + \nn \Y_{bc} \right ) \nablao_d \ll - \frac{1}{2}
\left ( \U_{bd} + \nn \Y_{bd} \right ) \nablao_c \ll 
+ \left ( \Y_{bc} F_{df}  - \Y_{bd} F_{cf} \right ) n^f \right ] \nonumber \\
& - \left ( P^{bd} n^f - P^{bf} n^d \right ) \Y_{bd} \U_{cf}
- P^{bd} n^f \left ( \U_{bc} F_{df} - \U_{bd} F_{cf} \right ).
\label{Jc} 
\end{align}
\end{definition}

The motivation behind this definition lies in the following result 
\cite{MarsGRG}, which in essence states that
the
notions of matter-hypersurface data and embedded
hypersurface data are fully compatible.

\begin{theorem}
\label{teor2}
Let $\{ \N,\gamma,\ellc,\ll,{\bm Y}, \rho_{\ell}, \bm{J}\}$ be
matter-hypersurface data and assume that
$\{ \N,\gamma,\ellc,\ll,{\bm Y}\}$ is
embedded  with embedding
$\Phi$, rigging $\rigging$ and ambient spacetime $(\M,\gM)$. Denote by 
$\Ein_g$ the Einstein tensor of $(\M,\gM)$. 

Let $\normal$ be the (unique) normal vector of $\Phi(\N)$
satisfying $g(\normal,\rigging) =1$ and $X \in \X(\N)$ any vector field.
 Then the following identities hold
\begin{align*}
& \Ein_g(\normal, \rigging ) |_{\Phi(\N)} = - \rho_{\ell},  \\
& \Ein_g( \normal,  \Phi_{\star}(X) ) |_{\Phi(\N)}   = -\bm{J} (X). 
\end{align*}
\end{theorem}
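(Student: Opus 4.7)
My plan is to reduce both identities to a Gauss--Codazzi--Mainardi computation adapted to the abstract rigging framework, using the decomposition of ambient tensors provided by Proposition \ref{propembedded}. First I would introduce the ambient frame $\{\rigging,\hat e_a := \Phi_\star(e_a)\}$ along $\Phi(\N)$ together with its dual $\{\bm{\normal},\hat{\bm{\omega}}^a\}$, and use the embedding conditions (\ref{condembedded}) together with (\ref{EmbHyp}) to compute the ambient Levi-Civita connection coefficients in this frame. The key claim to establish is that $\nabla^g_{\hat e_a}\hat e_b$ and $\nabla^g_{\hat e_a}\rigging$ decompose into a tangential part governed precisely by $\Gamo{}^c{}_{ab}$ of Proposition \ref{ExistConn} (this is really the embedded counterpart of existence and uniqueness of $\nablao$) and a transversal part whose $\rigging$- and $\bm\normal$-components are determined by $\Y_{ab}$, $F_{ab}$, $\U_{ab}$ and derivatives of $\ll$, $\nn$. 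These formulas are the ones used implicitly in \cite{MarsGRG} via the ``rigging connection''.

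Second, I would apply the Ricci identity of $\nabla^g$ to $\hat e_a$ and $\rigging$, obtaining a Gauss-type equation relating the tangential-tangential components of $\Riem_g$ to $\Riemo$ plus algebraic $\Y\otimes \Y$ and $\U\otimes\Y$ terms, and a Codazzi-type equation for the mixed components of $\Riem_g$ involving $\nablao$-derivatives of $\Y$, $\U$ and $F$. Third, I would use $\Ein_g = \Ricc_g - \tfrac12 R_g \, g$ together with $g(\bm\normal,\rigging)=1$ and $g(\bm\normal,\Phi_\star(X))=0$ to write
\begin{equation*}
\Ein_g(\normal,\rigging) = \Ricc_g(\normal,\rigging) - \tfrac12 R_g, \qquad
\Ein_g(\normal,\Phi_\star(X)) = \Ricc_g(\normal,\Phi_\star(X)).
\end{equation*}
The traces are expanded using the inverse-metric decomposition (\ref{omegaa})--(\ref{n}), namely $g^{\sharp} = P^{ab}\hat e_a\otimes\hat e_b + 2 n^a \hat e_a\otimes_s \rigging + \nn\,\rigging\otimes\rigging$. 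Substituting the Gauss and Codazzi identities and grouping terms should reproduce, term by term, the right-hand sides of (\ref{rhol}) and (\ref{Jc}); in particular, the $\nablao_d(\cdots)$ divergences on the right of (\ref{rhol})--(\ref{Jc}) arise from the Codazzi pieces and from the trace contractions applied to $Y$-terms.

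The main obstacle will be the bookkeeping in the Gauss--Codazzi step. Because $\nablao$ is not the Levi-Civita connection of $\gamma$ (it fails to annihilate $\gamma$ by (\ref{Cond1}) and fails to annihilate $\ellc$ by (\ref{Cond3})) and because the rigging is not orthogonal to $\Phi(\N)$, every standard manipulation produces correction terms proportional to $\U$, $F$, $d\nn$ and $d\ll$. Organising these so that they combine into the compact right-hand sides of (\ref{rhol})--(\ref{Jc}), in particular recognising the right total $\nablao$-divergences, will be the delicate part. As a consistency check at each intermediate stage, I would exploit the gauge covariance established in Section \ref{GaugeStruc}, particularly Lemma \ref{TransOther} and Proposition \ref{changeGam}: both sides of each identity must transform identically under $\G_{(u,\gauge)}$, which strongly constrains the allowed combinations of $\U$, $F$, $\Y$ and derivatives of $\nn,\ll$ that can appear, and thus provides a powerful cross-check that the bookkeeping has been done correctly.
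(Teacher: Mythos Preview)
Your approach is correct in spirit and is essentially the route taken in \cite{MarsGRG}, but note that the present paper does \emph{not} prove Theorem~\ref{teor2}: it is quoted as a result from \cite{MarsGRG} (see the sentence preceding the theorem in Section~\ref{Sect:Matter}), so there is no proof in this paper to compare against. What you outline---computing the ambient connection in the frame $\{\rigging,\hat e_a\}$, deriving the rigging-adapted Gauss and Codazzi identities, and then tracing $\Ein_g(\normal,\cdot)$ via the decomposition of $g^\sharp$ in Proposition~\ref{propembedded}---is precisely the derivation carried out in that reference, including the appearance of the ``rigging connection'' you allude to. Your identification of the bookkeeping of $\U$, $F$, $d\nn$, $d\ll$ corrections as the main labour, and of gauge covariance as a cross-check, is accurate; this is exactly how the compact forms (\ref{rhol})--(\ref{Jc}) are assembled in \cite{MarsGRG}.
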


\section{Spherically symmetric hypersurface data}
\label{Sect:Spher}

In the remaining of the paper 
we assume that the hypersurface data is spherically
symmetric. i.e. that there is an action 
$\Psi: SO(m) \times \N \longrightarrow  \N$ with 
$(m-1)$-dimensional orbits (which may degenerate to points) and
satisfying
\begin{align*}
\Psi_\a^{\star} (\gamma) = \gamma, \quad \quad
\Psi_\a^{\star}(\ellc)= \ellc,
\quad \quad
\Psi^{\star}_\a(\ll) = \ll,
\quad \quad \Psi^{\star}_\a(\Y) = \Y,
\end{align*}
where $\Psi_\a $ is the diffeomorphism $\Psi_\a := \Psi(\a, \cdot) : 
\N \rightarrow \N$, $\a \in SO(m)$. We restrict to $m>2$, as 
the case $m=2$ is special since $SO(2)$ is one-dimensional, hence
locally indistinguishable to any other local isometry group of the data. 
Denote by $\P$ the collection of centers of symmetry (i.e. the fixed
points of the $SO(m)$ action) and let $\lambda :
\N \setminus \P \rightarrow \mathbb{R}$ be a defining function
of the orbits of $SO(m)$. We denote by
$I$ the image of $\N \setminus \P$. Connectedness of $\N$ implies that $I$ is an interval.
It is clear that the symmetric
tensor
$\gamma$ can be decomposed as
\begin{align*}
  \gamma = \cN d \lambda^2 + d \lambda \otimes_s \bm{\alpha} (\lambda)
  + T(\lambda),
\end{align*}
where $\alpha(\lambda)$ is a one-form field
on the orbit ${\mathcal O}_{\lambda}$ defined by $\lambda$ and
$T(\lambda)$ is, for each $\lambda$, a symmetric
two-covariant tensor on ${\mathcal O}_{\lambda}$. 
An immediate application of the Schur lemma implies that
$\bm{\alpha}(\lambda)=0$ and that $T(\lambda)$ is necessarily
proportional to the standard metric on the $(m-1)$-dimensional
sphere, which we denote by $\gsph$.  Thus there exist two functions
$\cN, Q \in C^{\infty}(I,\mathbb{R})$ such that
\begin{align*}
  \gamma = (\cN \circ \lambda)  d \lambda^2 + (Q \circ \lambda)
  \gsph
\end{align*}
From now on and for simplicity, we drop the composition with
$\lambda$, so that e.g. we write $\cN$ instead of
$\cN \circ \lambda$. A similar argument for $\ellc$ and $\Y$
shows that the metric hypersurface data
tensors $\gamma, \ellc, \ll, \Y$ can be written
as
\begin{align}
\gamma &= \cN d\lambda^2 + Q \gsph, \quad \quad \quad \quad
\ellc = \ellr d \lambda, \label{sphFormmet}  \\
\Y & = \Yr d \lambda^2 + \YT \gsph \label{sphFormhyp}
\end{align}
where $\cN, Q, \Yr, \YT, \ellr, \ll  \in C^{\infty}(I, \mathbb{R})$.
The tensor $\A$ reads,
in matrix form,
\begin{align}
  \A =
  \left ( \begin{array}{ccc}
              Q \gsph{}_{AB} & 0^T & 0 \\
              0 & \cN & \ellr \\
              0 & \ellr & \ll
                           \end{array}
\right ),
\label{MatASph}
\end{align}
so the condition of non-degenerate requires
both $Q \neq 0$ and $\ellr^2 - \cN \ll \neq 0$ everywhere.
The corresponding
tensors $\{ P,\n,\nn\}$ are obtained by solving (\ref{EqP1})-(\ref{EqP4})
(or inverting \ref{MatASph}) 
take the form
\begin{align}
P = \frac{\ll}{\cN \ll - \ellr^2} \partial_{\lambda} \otimes
\partial_{\lambda} + \frac{1}{Q} \gsph^{\sharp}, \quad \quad \n = \frac{\ellr}{\ellr^2
- \ll \cN} \partial_{\lambda}, \quad \quad
\nn = -\frac{\cN}{\ellr^2 - \ll \cN},
\label{inver}
\end{align}
where $\gsph^{\sharp}$ is the contravariant metric associated to
$\gsph$. It is immediate that the signature of $\A$ is (compare
Lemma \ref{signa})
\begin{align*}
  &
  \mbox{sign} (\A|_p) = \{ \sign(\cN), \sign(\cN), \sign(Q), \cdots, \sign(Q) \}
\quad \quad & & & \mbox{if} \quad \quad    \cN \ll - \ellr^2 > 0, \\
& \mbox{sign} (\A|_p) = \{ -1 , +1 , \sign(Q), \cdots, \sign(Q) \}
\quad \quad & & & \mbox{if} \quad \quad   \cN \ll - \ellr^2 < 0.
\end{align*}
Since we are mainly interested in the case with the signature of $\A$
is Lorentzian $\{-,+ \cdots +\}$, we assume from now on that $Q := R^2$
with $R \in C^{\infty}(I,\mathbb{R}^+)$ and
$\cN \ll - \ellr^2 <0$.
The condition that $\ellc$ and $\ll$ are spherically
symmetric reduces the gauge freedom of the data. The residual gauge freedom
$(u,\zeta)$ consists of radially symmetric nowhere zero functions $u(\lambda)$
and vector fields of the form $\zeta = \zetar(\lambda) \partial_{\lambda}$. Under
such a gauge parameter, the functions $\{ \ll, \ellr \}$ transform as
\begin{align}
\G_{(u,\zeta)} (\ellr) = u \left ( \ellr + \cN \zetar \right ), 
\quad \quad
\G_{(u,\zeta)} (\ll) = u^2 \left ( \ll + 2 \ellr \zetar + \cN \zetar{}^2 \right ), 
\label{llprime}
\end{align}
which implies 
\begin{align}
\G_{(u,\zeta)} (\ellr^2 - \cN \ll) = u^2 (\ellr^2 - \cN \ll)
\label{GaugeS2}
\end{align}
as it must be from (\ref{changeA}). It is of interest to introduce gauge 
invariant quantities. This can be accomplished under the additional
condition that we work in a domain where $\ellr \neq 0$ and restrict 
the gauge
transformations to be sufficiently close to the identity. Note that
$\ellr \neq 0$ holds necessarily  at null points (characterized by the
vanishing of $\cN$). In order to construct gauge invariant quantities
we start with the following  result:
\begin{lemma}
\label{GaugeInv}
Let $\metdata$ be spherically symmetric metric hypersurface data, hence
given by (\ref{sphFormmet}).
Assume that $\ellr$ is nowhere zero on a domain $\UU \subset \N$. 
Then the vector field 
\begin{align}
  V := \left \{ \begin{array}{c}
    \frac{1}{\cN} \left ( \ellr - \sign(\ellr) 
    \sqrt{\ellr^2 - \ll \cN } \right ) \partial_{\lambda} \quad \quad N \neq 0 \\
    \frac{1}{2} \frac{\ll}{\ellr} \partial_{\lambda}   \quad \quad N=0
  \end{array}
  \right .
  \label{defV}
\end{align}
transforms as
\begin{align}
\G_{(u,\zeta)}(V) = u ( V + \zeta)
\label{GaugeV}
\end{align}
provided the gauge vector $\zeta = \zetar(\lambda) \partial_{\lambda}$ satisfies
\begin{align}
  \sign(\ellr + \cN \zetar ) = \sign (\ellr).
\label{sign}
\end{align}
\end{lemma}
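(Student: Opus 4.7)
The plan is a direct verification of the claimed transformation law, splitting the argument into the two cases in the definition (\ref{defV}) and being careful with the sign of the square root, which is where the hypothesis (\ref{sign}) enters.

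First I would treat the generic case $\cN\neq 0$. The gauge invariance of $\gamma$ gives $\G_{(u,\zeta)}(\cN)=\cN$, and (\ref{GaugeS2}) already tells us that
\begin{align*}
\G_{(u,\zeta)}(\ellr^2-\ll\cN)=u^2(\ellr^2-\ll\cN),
\end{align*}
so $\sqrt{\G(\ellr)^2-\G(\ll)\cN}=|u|\sqrt{\ellr^2-\ll\cN}$. The transformed radial component $\G(\ellr)=u(\ellr+\cN\zetar)$ from (\ref{llprime}) has sign $\sign(u)\sign(\ellr+\cN\zetar)$, and here (\ref{sign}) is exactly what guarantees
\begin{align*}
\sign\bigl(\G(\ellr)\bigr)\,\sqrt{\G(\ellr)^2-\G(\ll)\cN}=u\,\sign(\ellr)\sqrt{\ellr^2-\ll\cN}.
\end{align*}
Plugging these two identities into the definition (\ref{defV}) for the gauge-transformed $V$, the factor $u$ can be pulled out cleanly and one reads off $\G_{(u,\zeta)}(V)=u(V+\zeta)$. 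This is the main point of the lemma and also reveals why the sign condition is needed: without it the $|u|$ from the square root and the sign of $\G(\ellr)$ would fail to recombine into a single factor of $u$.

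Next I would treat the null case $\cN=0$ (where the orbit carries a null point in the sense of the paper). Then (\ref{sign}) reduces to the tautology $\sign(\ellr)=\sign(\ellr)$ and is automatic, and $\G(\ellr)=u\ellr$, $\G(\ll)=u^2(\ll+2\ellr\zetar)$, so
\begin{align*}
\G_{(u,\zeta)}(V)=\tfrac{1}{2}\frac{u^2(\ll+2\ellr\zetar)}{u\ellr}\,\partial_{\lambda}=u\Bigl(\tfrac{1}{2}\tfrac{\ll}{\ellr}+\zetar\Bigr)\partial_\lambda=u(V+\zeta),
\end{align*}
as required. Finally, although it is not strictly demanded by the statement, I would add a short remark that the two branches of (\ref{defV}) match smoothly at $\cN=0$: expanding $\sqrt{\ellr^2-\ll\cN}=|\ellr|-\ll\cN/(2|\ellr|)+O(\cN^2)$ and using $\sign(\ellr)|\ellr|=\ellr$ gives $V\to\tfrac{1}{2}(\ll/\ellr)\partial_\lambda$ as $\cN\to 0$, so $V$ is well defined and smooth on the whole domain $\UU$ where $\ellr\neq 0$.

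The only real subtlety is bookkeeping the signs; no deep ingredient is needed beyond the gauge formulas (\ref{llprime}) and the algebraic identity (\ref{GaugeS2}). The hypothesis $\ellr\neq 0$ on $\UU$ is used to make $\sign(\ellr)$ well defined and to make the denominator in the $\cN=0$ branch nonzero; the restriction (\ref{sign}) on how close $\zeta$ must be to the identity is precisely what is needed so that the square root's sign is not flipped by the transformation.
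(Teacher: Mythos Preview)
Your argument is correct and matches the paper's almost line for line: the same use of (\ref{llprime}) and (\ref{GaugeS2}), the same observation that $\sign(\G(\ellr))=\sign(u)\sign(\ellr)$ under (\ref{sign}) so that $|u|$ combines with the sign to give $u$, and the same Taylor expansion for smoothness across $\cN=0$.

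One point the paper adds that you omit: before checking the gauge behaviour, it verifies that the expression (\ref{defV}) is independent of the choice of defining function $\lambda$ for the orbits, i.e.\ that under $\lambda=\lambda(\lambda')$ the coefficient $V^{\lambda}$ transforms as the component of a vector field. Since $\cN$, $\ellr$, and $\partial_\lambda$ all depend on the choice of radial coordinate, this is needed to justify calling $V$ a vector field on $\UU$ rather than a coordinate-dependent expression. The check is routine (use $\cN'=\cN(d\lambda/d\lambda')^2$, $\ellr'=\ellr\,d\lambda/d\lambda'$, and $\sign(\ellr')|d\lambda/d\lambda'|=\sign(\ellr)\,d\lambda/d\lambda'$), but strictly speaking it belongs in the proof.
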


\begin{proof}
We first show that $V$ is well-defined on $\UU$,
 i.e. independent of the choice of radial coordinate $\lambda$ and smooth across
null points. The most general change of radial coordinate is given by
$\lambda = \lambda (\lambda')$ with $\frac{d \lambda'}{d \lambda} $ nowhere
zero. Under such coordinate change, the coefficients $\cN$ and 
$\ellr$ transform as
\begin{align*}
\cN' = \cN \left ( \frac{ d \lambda}{d \lambda'} \right )^2,
\quad \quad \ellr' = \ellr \frac{d \lambda}{d \lambda'}.
\end{align*}
Moreover
\begin{align}
\sqrt{\ellr{}'^2 - \cN' \ll } = 
\left | \frac{d \lambda}{d \lambda'} \right | 
\sqrt{\ellr{}^2 - \cN \ll } 
\label{TransS}
\end{align}
so that, at points where $N \neq 0$,
the corresponding $V'$, defined by 
\begin{align*}
  V' & := \frac{1}{\cN'} \left ( \ellr' - \sign(\ellr') 
\sqrt{\ellr'{}^2 - \ll \cN' } \right ) \partial_{\lambda'}  \\
& = \frac{1}{\cN} \left ( \frac{d \lambda'}{d \lambda} \right )^2
\left ( \ellr \frac{d \lambda}{d \lambda'}
- \sign(\ellr) \sign \left ( \frac{d \lambda}{d \lambda'} \right )
\left | \frac{d \lambda}{ d \lambda'} \right |
\sqrt{\ellr^2 - \ll \cN } \right ) \partial_{\lambda'} \\
& = 
\frac{1}{\cN} 
\left ( \ellr 
- \sign(\ellr)
\sqrt{\ellr^2 - \ll \cN } \right ) \frac{d \lambda'}{d \lambda}
\partial_{\lambda'}  
\end{align*} 
corresponds exactly to the vector field $V$ expresses in the
new coordinate $\lambda'$. A similar  argument applies to the points where
$N =0$.

Concerning smoothness. the only problematic points
are those where $\cN$ becomes zero. However, a simple Taylor
expansion shows (\ref{defV}) is actually smooth there, hence
everywhere. We now analyze the gauge behaviour of $V$. First  observe that
\begin{align*}
\sign(\G_{(u,\zeta)} (\ellr) ) = \sign \left
( u \left ( \ellr + \cN \zetar \right ) \right ) = \sign (u) \sign(\ellr)
\end{align*}
so that, using (\ref{GaugeS2}),
\begin{align*}
\G_{(u,\zeta)} (V) = \frac{1}{\cN} \left ( u ( \ellr + \cN \zetar )
 - \sign(u) \sign(\ellr) |u| \sqrt{\ellr^2 - \ll \cN} \right ) \partial_{\lambda}
= u V + u \zetar \partial_{\lambda}
\end{align*}
and (\ref{GaugeV}) is established at points where $N\neq 0$.
A similar (and simpler) argument applies at points where $N=0$.
\end{proof}

The vector field $V$ allows us to introduce quantities with very simple
gauge behaviour.  We first define the following
\begin{align*}
  \S := \sign(\ellr) \sqrt{\ellr^2 - \cN \ll}
\end{align*}
which by (\ref{GaugeS2}) satisfies 
$\G_{(u,\zeta)} (\S) = u \S$ whenever
 $(u,\zeta)$ is restricted to fulfill (\ref{sign})

\begin{lemma}
\label{GaugeObjects}
Under the hypotheses of Lemma \ref{GaugeInv}, the symmetric tensor
$\Z:=\frac{1}{2} \pounds_{V} \gamma- Y$ transforms under a gauge parameter $(u,\zeta)$
restricted to satisfy (\ref{sign}) as
\begin{align}
\G_{(u,\zeta)} (\Z) =
u \Z - \S du \otimes_s d \lambda.
\label{YLieg}
\end{align}
Decomposing this tensor as
\begin{align*}
\Z = \Zr d\lambda^2 + \ZT \gsph,
\end{align*}
the quantities
\begin{align*}
\Gamma := \frac{1}{S} \left ( \frac{dS}{d\lambda} + \Zr \right ), \quad \quad \quad 
h := \frac{\ZT}{\S},
\end{align*}
are gauge invariant, i.e. satisfy
$\G_{(u,\zeta)}(\Gamma) = \Gamma$ and $\G_{(u,\zeta)} (\ZT ) =  \ZT$.
Moreover, under a change of coordinates $\lambda(\lambda')$, they
transform as
\begin{align}
\Gamma' = \Gamma \frac{d \lambda}{d \lambda'} + \frac{d^2 \lambda}{d 
\lambda'{}^2} \frac{d \lambda'}{d \lambda}, \quad 
\quad \quad 
h' = h \frac{d \lambda'}{d \lambda}.
\label{weight0}
\end{align}
\end{lemma}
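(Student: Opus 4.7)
The plan is to first establish the transformation law \eqref{YLieg} for $\Z$ by direct computation using Lemma \ref{SomeIden2} and Definition \ref{gauge}, and then read off the transformation laws of $\Zr$ and $\ZT$ to verify gauge invariance of $\Gamma$ and $h$. The coordinate transformation laws \eqref{weight0} will follow from the chain rule together with the transformation of $\S$ established inside the proof of Lemma \ref{GaugeInv}.

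For the gauge transformation of $\Z$, I start from $\G_{(u,\zeta)}(Y) = uY + \ellc \otimes_s du + \tfrac{1}{2}\pounds_{u\zeta}\gamma$ and from $\G_{(u,\zeta)}(V) = u(V + \zeta)$, which is Lemma \ref{GaugeInv}. Using identity \eqref{Lief} in Lemma \ref{SomeIden2} twice, once for $fW = u(V+\zeta)$ and once for $fW = u\zeta$, the $\pounds_{u\zeta}\gamma$ terms cancel, and I obtain
\begin{equation*}
\G_{(u,\zeta)}(\Z) \;=\; u\Z \;+\; \bigl( V^{\flat} - \ellc \bigr) \otimes_s du.
\end{equation*}
The crucial algebraic step is to check that $V^\flat := \gamma(V,\cdot) = (\ellr - \S)\,d\lambda$, so that $V^\flat - \ellc = -\S\,d\lambda$. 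Away from null points ($\cN \neq 0$) this is a one-line calculation using \eqref{sphFormmet} and the definition \eqref{defV}; at null points both sides vanish by the smooth extension of $V$ established in Lemma \ref{GaugeInv}. This yields \eqref{YLieg}.

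Next, since $u$ is a function of $\lambda$ only, $du \otimes_s d\lambda = (du/d\lambda)\, d\lambda^2$, so the decomposition $\Z = \Zr d\lambda^2 + \ZT \gsph$ gives
\begin{equation*}
\G_{(u,\zeta)}(\Zr) = u\Zr - \S\, \frac{du}{d\lambda}, \qquad \G_{(u,\zeta)}(\ZT) = u\ZT.
\end{equation*}
Combined with $\G_{(u,\zeta)}(\S) = u\S$ (a consequence of \eqref{GaugeS2} and the sign condition \eqref{sign}), gauge invariance of $h = \ZT/\S$ is immediate, and gauge invariance of $\Gamma$ follows from a short computation: the Leibniz rule produces a term $(du/d\lambda)\S$ from $d(u\S)/d\lambda$ that exactly cancels the inhomogeneous piece $-\S\, du/d\lambda$ in $\G_{(u,\zeta)}(\Zr)$.

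Finally, under $\lambda = \lambda(\lambda')$ one has $\Zr' = \Zr (d\lambda/d\lambda')^2$, $\ZT' = \ZT$, and, from \eqref{TransS} together with $\sign(\ellr') = \sign(\ellr)\sign(d\lambda/d\lambda')$, $\S' = (d\lambda/d\lambda')\,\S$. The transformation $h' = h\, d\lambda'/d\lambda$ is then immediate, while for $\Gamma'$ the chain rule on $d\S'/d\lambda'$ generates precisely the affine-connection-like term $(d^2\lambda/d\lambda'^2)(d\lambda'/d\lambda)$ that appears in \eqref{weight0}. The only mild obstacle is keeping track of signs and of the behaviour at null points in the identity $V^\flat - \ellc = -\S\,d\lambda$; once this is in hand, the remainder of the proof is routine.
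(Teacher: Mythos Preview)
Your proposal is correct and follows essentially the same route as the paper's own proof: compute $\G_{(u,\zeta)}(\Z)$ using the gauge behaviour of $Y$ and of $V$, use identity \eqref{Lief} to reduce to $u\Z + (V^{\flat}-\ellc)\otimes_s du$, identify $V^{\flat}-\ellc = -\S\,d\lambda$, and then read off the transformations of $\Zr$, $\ZT$, $\Gamma$, $h$ and their behaviour under change of radial coordinate. The only cosmetic difference is that you apply \eqref{Lief} separately to $\pounds_{u(V+\zeta)}\gamma$ and $\pounds_{u\zeta}\gamma$, whereas the paper first uses linearity $\pounds_{u(V+\zeta)} = \pounds_{uV} + \pounds_{u\zeta}$ and then applies \eqref{Lief} once to $\pounds_{uV}\gamma$; the algebra is identical.
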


\begin{proof}
We start with (\ref{YLieg}). The gauge behaviour of $\Y$ is given by 
(\ref{gaugetrans}). Thus
\begin{align*}
\G_{(u,\zeta)} \left (\frac{1}{2} \pounds_{V} \gamma - \Y \right ) & =
\frac{1}{2} \pounds_{\G_{(u,\zeta)} (V)} \gamma 
- u \Y - \ellc \otimes_s du - \frac{1}{2} \pounds_{u \zeta} \gamma
\\
& =
\frac{1}{2} \pounds_{u V } \gamma + \frac{1}{2}\pounds_{u \zeta} \gamma 
- u \Y - \ellc \otimes_s du - \frac{1}{2} \pounds_{u \zeta} \gamma \\
& =
u  \left (  
\frac{1}{2} \pounds_{V} \gamma - \Y  \right )
+ du \otimes_s \left ( \gamma(V, \cdot) - \ellc \right )
\end{align*}
where in the last equality we used the identity
(\ref{Lief}).
Expression (\ref{YLieg})  follows
by noting
\begin{align*}
\gamma(V, \cdot) - \ellc =
N V^{\lambda} d \lambda - \ellr d\lambda = 
- \sign(\ellr) \sqrt{\ellr^2 - \ll \cN} d \lambda
= - \S d \lambda.
\end{align*}
From the definition of $\Zr$ and $\ZT$ and the gauge behaviour of $\Z$, it 
follows
\begin{align*}
  \G_{(u,\zeta)} (\Zr) = u \Zr - S \frac{du}{d \lambda}, \quad \quad
\G_{(u,\zeta)} (\ZT) = u \ZT
\end{align*}
so that
\begin{align*}
\G_{(u,\zeta)} (\Gamma) & = \frac{1}{\G_{(u,\zeta)}(S)}
\left (  \frac{d}{d\lambda}
\left ( \G_{(u,\zeta)}(\S) \right ) + \G_{(u,\zeta)}(\Zr) \right ) = 
\frac{1}{u \S} \left ( \frac{d (u \S) }{d\lambda} + u \Zr - \S \frac{du}{d \lambda}
\right) =
\frac{1}{S} \left ( \frac{d \S}{d\lambda} + \Zr \right )  = \Gamma, \\
\G_{(u,\zeta)} (h) & = \frac{\G_{(u,\zeta)}(\ZT)}{\G_{(u,\zeta)}(\S)} 
= \frac{u \ZT}{u \S} = h.
\end{align*}
It only remains to analyze the coordinate behaviour of these objects.
$\ZT$ is obviously
a scalar, while $\S$ transforms as (\ref{TransS}) 
\begin{align*}
\S' = \sign(\ellr') 
\left | \frac{d \lambda}{d \lambda'} \right |  \sqrt{\ellr^2 - \ll \cN} = 
\sign \left ( \frac{d \lambda}{d \lambda'} \ellr \right )
\left | \frac{d \lambda}{d \lambda'} \right |  \sqrt{\ellr^2 - \ll \cN} = 
\frac{d \lambda}{d \lambda'} \S.
\end{align*}
This establishes the second in (\ref{weight0}). For the first
\begin{align*}
\Gamma' = \frac{1}{S'} \left ( \frac{d S'}{d \lambda'} + \Zr' 
\right )
= \frac{1}{\S}  \frac{d \lambda'}{d \lambda} \left ( 
\frac{d}{d \lambda'} \left ( \frac{d \lambda}{d \lambda'}  \S \right )
+ \Zr \left ( \frac{d \lambda}{d \lambda'} \right )^2 \right )
= \frac{d \lambda'}{d \lambda} \frac{d^2 \lambda}{d \lambda'{}^2}
+ \frac{d \lambda}{d \lambda'} \left ( \frac{1}{\S} \frac{d \S}{
d \lambda} + \Zr \right ),
\end{align*}
where we used that $\Z$ is a tensor and hence its $(\lambda, \lambda)$ 
component transforms as $\Zr' = \Zr \left ( \frac{d \lambda}{d \lambda'} 
\right )^2$.
\end{proof}

Thus, from  the tensor field $Z$ we have built two gauge invariant
quantities $\Gamma$ and $h$ which, in addition, have useful transformation
properties under change of radial coordinate. Specifically, as we shall see
next, $\Gamma$
behaves as a connection in the radial direction and this
will  allow us to define a radial covariant
derivative on functions of $\lambda$ with a well-defined {\bf weight}.
\begin{definition}
A quantity $H(\lambda)$ is said to have {\bf weight} $r$, with $r \in
\mathbb{Z}$ provided its transformation under a coordinate
change $\lambda(\lambda')$ is
\begin{align*}
H'(\lambda') =  H(\lambda(\lambda')) \left ( \frac{d \lambda}{d 
\lambda'} \right )^r.
\end{align*}
\end{definition}
Examples of objects with well-defined weight are $\cN$ (weight $2$),
$R$ (weight $0$) or $h$ (weight $-1$).  
The role of $\Gamma$ as a connection is shown in the following result.
\begin{lemma}
\label{defdeltalambda}
Let $H$ be of weight $r$ and
define the derivative
\begin{align*}
\delta_{\lambda} H := \frac{d H}{d \lambda} - r \Gamma H.
\end{align*}
Then $\delta_{\lambda} H$ is of weight $r+1$.
\end{lemma}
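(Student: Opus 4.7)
The plan is a direct computation using only the two transformation laws already at our disposal: the weight-$r$ transformation of $H$ and the inhomogeneous transformation of $\Gamma$ recorded in Lemma \ref{GaugeObjects}. The whole content of the statement is that the affine piece in the transformation of $\Gamma$ is precisely engineered to cancel the second-derivative term that the chain rule introduces when we differentiate $H'$, so that $\delta_\lambda H$ inherits the tensorial behaviour of weight $r+1$.

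Concretely, I would fix a coordinate change $\lambda = \lambda(\lambda')$ and write, by definition of weight,
\begin{align*}
H'(\lambda') = H(\lambda(\lambda')) \left( \frac{d\lambda}{d\lambda'} \right)^{r}.
\end{align*}
Differentiating with the chain and Leibniz rules gives
\begin{align*}
\frac{dH'}{d\lambda'} = \frac{dH}{d\lambda} \left( \frac{d\lambda}{d\lambda'} \right)^{r+1} + r\, H \left( \frac{d\lambda}{d\lambda'} \right)^{r-1} \frac{d^2\lambda}{d\lambda'^2}.
\end{align*}
Next I would substitute the transformation law (\ref{weight0}) for $\Gamma$ into $r\, \Gamma' H'$:
\begin{align*}
r\, \Gamma' H' = r\, \Gamma\, H \left( \frac{d\lambda}{d\lambda'} \right)^{r+1} + r\, H \left( \frac{d\lambda}{d\lambda'} \right)^{r-1} \frac{d^2\lambda}{d\lambda'^2}.
\end{align*}

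The two second-derivative contributions cancel upon subtraction, leaving
\begin{align*}
\delta_{\lambda'} H' = \frac{dH'}{d\lambda'} - r\,\Gamma' H' = \left( \frac{dH}{d\lambda} - r\,\Gamma\, H \right) \left( \frac{d\lambda}{d\lambda'} \right)^{r+1} = (\delta_\lambda H)\left( \frac{d\lambda}{d\lambda'} \right)^{r+1},
\end{align*}
which is precisely the statement that $\delta_\lambda H$ has weight $r+1$. There is no real obstacle here: the only delicate point is tracking the factor $\frac{d^2\lambda}{d\lambda'^2}\frac{d\lambda'}{d\lambda}$ in the transformation of $\Gamma$ and matching it against the analogous term produced by differentiating $\left(\frac{d\lambda}{d\lambda'}\right)^r$. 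Once both are written explicitly, the cancellation is immediate, so I would present the proof as a single short calculation with no auxiliary lemmas.
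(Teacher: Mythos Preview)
Your proof is correct and follows essentially the same route as the paper's own argument: both compute $\delta_{\lambda'}H'$ directly by substituting the transformation laws for $H$ and for $\Gamma$, and observe that the second-derivative terms coming from the chain rule and from the inhomogeneous part of $\Gamma'$ cancel. The only cosmetic difference is that you split the computation into the derivative piece and the $r\Gamma'H'$ piece before subtracting, whereas the paper carries out the subtraction in a single displayed line.
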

\begin{proof}
Consider the coordinate change $\lambda(\lambda')$. Let us compute
$\delta_{\lambda} H$ in the coordinates $\lambda'$:
\begin{align*}
\left ( \delta_{\lambda} H  \right )^{\prime} & = 
\frac{d}{d\lambda'} \left ( H'(\lambda') \right ) - r \Gamma' H'
 = \frac{d}{d \lambda'} \left ( H \left ( \frac{d \lambda}{d \lambda'} \right )^r
\right ) 
- r \left ( \Gamma \frac{d\lambda}{d \lambda'}
+ \frac{d^2 \lambda}{d \lambda'{}^2} \frac{d \lambda'}{d \lambda} \right ) H \left ( \frac{d \lambda}{
d \lambda'} \right )^{r} \\
& = \frac{d H}{d \lambda} \left ( \frac{d \lambda}{d \lambda'} \right )^{r+1}
+ r H 
\left ( \frac{d \lambda}{d \lambda'} \right )^{r-1}
\frac{d^2 \lambda}{d \lambda'{}^2} 
- r \Gamma H \left ( \frac{d \lambda}{d\lambda'} \right )^{r+1}
- r H \left ( \frac{d \lambda}{d \lambda'} \right )^{r-1}
\frac{d^2 \lambda}{d \lambda'{}^2}  \\
& = \left ( \frac{d H}{d\lambda} - r \Gamma H \right ) 
\left ( \frac{d \lambda}{d \lambda'} \right )^{r+1}
= \left ( \frac{d \lambda}{d \lambda'} \right )^{r+1}
\delta_{\lambda} H
\end{align*}
as claimed.
\end{proof}

\begin{remark}
Note that if $H_1$ is of weight $r$ and $H_2$ is of weight $s$
then $H_1 H_2$ has weight $r+s$ and the Leibniz rule holds
\begin{align*}
\delta_{\lambda} (H_1 H_2) = H_1 \delta_{\lambda} H_2 + H_2 \delta_{\lambda} H_1.
\end{align*}
\end{remark}

The covariant derivative $\delta_{\lambda}$ defines a flat connection:
\begin{lemma}
\label{flat}
For any spherically symmetric hypersurface data, there always exists a choice
of  radial coordinate $\lambda_a$ such that $\delta_{\lambda} = \frac{d}{d\lambda_a}$
on objects of arbitrary weight. Moreover $\lambda_a$ is unique up to affine
transformations.
\end{lemma}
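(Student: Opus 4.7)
The plan is to convert the statement ``$\delta_{\lambda} = d/d\lambda_a$ on every weight'' into the single scalar condition $\Gamma_a = 0$, where $\Gamma_a$ denotes $\Gamma$ expressed in the coordinate $\lambda_a$. Indeed, by Lemma \ref{defdeltalambda}, $\delta_{\lambda}H = dH/d\lambda - r\Gamma H$ for $H$ of weight $r$; so if $\Gamma$ vanishes in the coordinate $\lambda_a$ the derivative $\delta_{\lambda}$ becomes simply $d/d\lambda_a$ on objects of any weight. Conversely, taking $H$ of non-zero weight with $H\neq 0$, the coincidence $\delta_{\lambda}H = dH/d\lambda_a$ forces $\Gamma_a H=0$, hence $\Gamma_a = 0$. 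Thus the problem reduces to finding coordinates in which $\Gamma$ vanishes and determining the residual freedom.

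Next I would use the transformation law established in Lemma \ref{GaugeObjects},
\begin{align*}
  \Gamma_a = \Gamma\,\frac{d\lambda}{d\lambda_a} + \frac{d^2\lambda}{d\lambda_a^{\,2}}\,\frac{d\lambda_a}{d\lambda},
\end{align*}
and set $\Gamma_a=0$. Writing $\phi(\lambda) := d\lambda/d\lambda_a$ viewed as a function of $\lambda$, the chain rule gives $d^2\lambda/d\lambda_a^{\,2} = \phi\, d\phi/d\lambda$, and the vanishing of $\Gamma_a$ becomes the linear first-order ODE
\begin{align*}
  \frac{d\phi}{d\lambda} = -\Gamma(\lambda)\,\phi.
\end{align*}
Since $\Gamma$ is smooth on the interval $I$, this ODE admits the explicit, everywhere non-zero solution $\phi(\lambda) = C_1\exp\bigl(-\int^{\lambda}\Gamma(\tau)\,d\tau\bigr)$ with $C_1\in\mathbb{R}\setminus\{0\}$. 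Because $\phi$ never vanishes, the relation $d\lambda_a/d\lambda = 1/\phi(\lambda)$ integrates to a bona fide coordinate change
\begin{align*}
  \lambda_a(\lambda) = C_2 + \frac{1}{C_1}\int^{\lambda}\exp\!\Bigl(\int^{\mu}\Gamma(\tau)\,d\tau\Bigr)\,d\mu,
\end{align*}
with $C_2\in\mathbb{R}$ arbitrary, proving existence.

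For uniqueness, I would suppose $\lambda_a$ and $\tilde\lambda_a$ both trivialise $\Gamma$. Then the transition $\lambda_a \mapsto \tilde\lambda_a$ must itself satisfy the transformation law with $\Gamma = 0 = \tilde\Gamma$; this reduces to $d^2\lambda_a/d\tilde\lambda_a^{\,2} = 0$, whose general solution is the affine map $\lambda_a = a\tilde\lambda_a + b$ with $a\neq 0$. This precisely matches the two-parameter freedom $(C_1,C_2)$ found above, confirming the claim. No step is genuinely an obstacle: the content is the observation that $\Gamma$ transforms exactly like a one-dimensional affine connection symbol, so the classical existence and uniqueness of affine parameters applies verbatim; the only care needed is to verify that $\phi$ remains nowhere zero so that the transformation is a diffeomorphism, which is automatic from the exponential form of the solution.
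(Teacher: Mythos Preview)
Your proof is correct and follows essentially the same approach as the paper: both reduce the statement to the vanishing of $\Gamma$ in the new coordinate via the transformation law (\ref{weight0}) and then solve the resulting first-order linear ODE, with uniqueness up to affine reparametrisation following from the two integration constants. The paper's version is terser (it writes the equivalent equation $\frac{d\lambda_a}{d\lambda'}\Gamma' = \frac{d^2\lambda_a}{d\lambda'^{2}}$ for $\lambda_a$ directly rather than for the reciprocal derivative $\phi$, and simply declares existence ``obvious''), while you supply more detail, including the explicit exponential solution and the verification that $\phi$ is nowhere zero.
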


\begin{proof}
We start with a coordinate $\lambda'$ with a possibly non-zero
$\Gamma'(\lambda')$. We need to show that there exists an invertible
$\lambda_a(\lambda')$ such that $\Gamma=0$. From the transformation law
for $\Gamma$ (\ref{weight0}) this 
is equivalent to 
\begin{align*}
\frac{d \lambda_a}{d \lambda'} \Gamma'(\lambda') = 
\frac{d^2 \lambda_a}{d \lambda'{}^2}
\end{align*}
Existence of solutions is obvious.  It is also clear that
the general (invertible) solution is
$\lambda_a = a f(\lambda') + b$ with $a,b \in \mathbb{R}$, 
$a \neq 0$  and 
$f(\lambda')$ a particular solution with nowhere zero derivative.
\end{proof}

The vector field $V$ also allows one to construct a gauge invariant matter 
hypersurface quantity. 
\begin{lemma}
\label{rhoinv}
Let $\mathypdata$  be matter-hypersurface
data. Under the assumptions of Lemma \ref{GaugeInv} the quantity
\begin{align*}
\tilrho:= \rho_{\ell} - \bm{J}(V)
\end{align*}
is gauge invariant.
\end{lemma}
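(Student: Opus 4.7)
The plan is to simply combine the three gauge transformation laws that are already established: the behaviour of $\rho_\ell$ and $\bm{J}$ given in Definition \ref{MatterHyp}, together with the behaviour of $V$ from Lemma \ref{GaugeInv}. Since $\bm{J}$ is a one-form and $V$ is a vector field, the key observation is that when we apply a gauge transformation $(u,\zeta)$ (restricted to satisfy (\ref{sign}) so that Lemma \ref{GaugeInv} applies), the scalar $\bm{J}(V)$ transforms via
\[
\G_{(u,\zeta)}(\bm{J}(V)) = \G_{(u,\zeta)}(\bm{J})\bigl(\G_{(u,\zeta)}(V)\bigr) = \tfrac{1}{u}\bm{J}\bigl(u(V+\zeta)\bigr) = \bm{J}(V)+\bm{J}(\zeta).
\]

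I would then combine this with the transformation law $\G_{(u,\zeta)}(\rho_\ell) = \rho_\ell + \bm{J}(\zeta)$ from Definition \ref{MatterHyp} to obtain
\[
\G_{(u,\zeta)}(\tilrho) = \G_{(u,\zeta)}(\rho_\ell) - \G_{(u,\zeta)}(\bm{J}(V)) = \bigl(\rho_\ell + \bm{J}(\zeta)\bigr) - \bigl(\bm{J}(V)+\bm{J}(\zeta)\bigr) = \rho_\ell - \bm{J}(V) = \tilrho,
\]
which establishes the claim.

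There is essentially no obstacle here: the $\bm{J}(\zeta)$ contribution from the transformation of $\rho_\ell$ is precisely cancelled by the $\bm{J}(\zeta)$ picked up from the transformation of $V$, with the factor $u$ in $\G_{(u,\zeta)}(V)=u(V+\zeta)$ conspiring perfectly with the factor $1/u$ in $\G_{(u,\zeta)}(\bm{J}) = u^{-1}\bm{J}$. The fact that $\zeta$ is radial and $\bm{J}$ is a one-form means no further structure on $\zeta$ needs to be used. The only subtlety worth remarking is that Lemma \ref{GaugeInv} applies only to gauge parameters compatible with (\ref{sign}), but this is precisely the setting already assumed in the statement (via the assumption $\ellr\neq 0$ of Lemma \ref{GaugeInv}).
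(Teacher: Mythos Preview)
Your proof is correct and follows essentially the same argument as the paper: both apply the gauge transformation laws for $\rho_\ell$, $\bm{J}$ and $V$ and observe that the $\bm{J}(\zeta)$ terms cancel while the factors $u$ and $u^{-1}$ compensate. The paper simply condenses the computation into a single line.
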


\begin{proof}
Applying the gauge behaviour of $\rho_{\ell}$ and $\bm{J}$ (\ref{MatterHyp})
\begin{align*}
\G_{(u,\zeta)} (\rho_{\ell} - \bm{J}(V)) = 
\rho_{\ell} + \bm{J}(\zeta) - \frac{1}{u} \bm{J} ( \G_{(u,\zeta)}(V)) =
\rho_{\ell} + \bm{J}(\zeta) - \frac{1}{u} \bm{J} ( u(V + \zeta)) = 
\rho_{\ell} - \bm{J}(V).
\end{align*}
\end{proof}
We have now the necessary ingredients to write down the constraint
equations for spherically symmetric hypersurface data. We have
introduced a number of quantities which are either gauge invariant
($\Gamma$ and $\Z_T$) or  have  
simple scaling behaviour with $u$ ($\S$ and $\bm{J}$) 
under gauge transformations
$(u,\zeta)$. We call this family of five quantities  {\bf gauge covariant}.
Our aim is to write down the constraint field equations in terms
of these objects. Even more, $\Gamma$ being a radial connection we want to write
down equations involving only quantities with
well defined weights. In particular all
radial derivatives should involve the $\delta_{\lambda}$ operator.
The computations could be done without imposing any gauge
or, alternatively and in a much simpler way,
we may choose first a convenient gauge  to perform the
intermediate computations and
rewrite the final result in terms
of covariant quantities in such a way that we recover the 
explicitly obtained  expression
in the selected gauge.  The gauge covariance of the final
result implies then its validity in any gauge.

It turns out that a specific gauge that  simplifies greatly the computations
is the one satisfying $\ll =0$. We first make sure that under
the assumptions of Lemma \ref{GaugeInv} this gauge exists and it
is in fact unique
up to scaling.
\begin{lemma}
\label{gaugechoice}
Given spherically symmetric metric hypersurface data $\metdata$ with 
$\ellr \neq 0$, there exists a spherically symmetric gauge parameter $(u_0,\zeta_0)$
satisfying (\ref{sign})
such that 
$\G_{(u,\zeta)} (\ll) = 0$. In fact, the most general such gauge is given by
$(u,\zeta_0)$ where $u$ an arbitrary nowhere zero
spherically symmetric function.
\end{lemma}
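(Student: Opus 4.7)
The plan is to reduce the equation $\G_{(u_0,\zeta_0)}(\ll)=0$ to a pointwise polynomial equation in the radial component of $\zeta_0$ and then exhibit a solution in closed form using the vector field $V$ from Lemma \ref{GaugeInv}. Write $\zeta_0 = \zetar_0(\lambda) \partial_{\lambda}$. Since $u_0$ is nowhere zero, the second identity in (\ref{llprime}) shows that $\G_{(u_0,\zeta_0)}(\ll)=0$ is equivalent to
\begin{equation*}
\cN\,\zetar_0{}^2 + 2\,\ellr\,\zetar_0 + \ll = 0,
\end{equation*}
independently of $u_0$. This already establishes the ``multiplication by arbitrary nowhere zero $u$'' part of the claim, provided some $\zetar_0$ exists.

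For existence, I would take $\zeta_0 \defi -V$, with $V$ given by (\ref{defV}). A direct substitution using the definition of $V$ (squaring the relation $\cN V^{\lambda} - \ellr = -\sign(\ellr)\sqrt{\ellr^2 - \ll\cN}$ when $\cN\neq 0$, and using $V^{\lambda} = \ll/(2\ellr)$ when $\cN=0$) shows that $V^{\lambda}$ satisfies $\cN(V^{\lambda})^2 - 2\ellr V^{\lambda} + \ll = 0$, hence $\zetar_0 = -V^{\lambda}$ solves the required quadratic. Smoothness of $\zeta_0$ across null points is inherited from $V$ (Lemma \ref{GaugeInv}), and the sign condition (\ref{sign}) reduces to
\begin{equation*}
\sign(\ellr + \cN\zetar_0) = \sign\bigl(\sign(\ellr)\sqrt{\ellr^2-\cN\ll}\bigr) = \sign(\ellr),
\end{equation*}
which holds by construction of $V$. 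So the pair $(u_0,\zeta_0)=(u_0,-V)$ works for any choice of nowhere zero radial function $u_0$.

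For uniqueness of $\zeta_0$ within the class satisfying (\ref{sign}), I would argue pointwise. At points where $\cN\neq 0$ the quadratic has the two roots $\zetar_{\pm}= \cN^{-1}\bigl(-\ellr \pm \sqrt{\ellr^2-\cN\ll}\bigr)$, giving $\ellr + \cN\zetar_{\pm} = \pm\sqrt{\ellr^2-\cN\ll}$; the sign restriction (\ref{sign}) then selects the root $\zetar_{+\sign(\ellr)}=-V^{\lambda}$. At null points the equation becomes linear with the unique solution $\zetar_0=-\ll/(2\ellr)=-V^{\lambda}$, and (\ref{sign}) holds automatically since $\ellr+\cN\zetar_0=\ellr$. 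Hence $\zeta_0=-V$ is forced, while $u_0$ is unconstrained (its only trace in the transformation of $\ll$ is the overall factor $u_0^{2}$), completing the proof.

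I do not expect a serious obstacle: the only delicate point is matching the two cases in the definition of $V$ at null points, which is precisely the smoothness already established in Lemma \ref{GaugeInv}, and the choice of sign in the quadratic formula, which is exactly what the sign condition (\ref{sign}) was designed to encode.
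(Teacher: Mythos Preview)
Your proof is correct and follows essentially the same approach as the paper: both reduce to the quadratic $\cN\zetar_0^2 + 2\ellr\zetar_0 + \ll = 0$ and use the sign condition (\ref{sign}) to single out the correct root. Your framing via $\zeta_0=-V$ is a slight repackaging that makes the smoothness across null points immediate from Lemma~\ref{GaugeInv}; the paper instead writes out the root directly and notes only afterward that in this gauge $V$ vanishes.
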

\begin{proof}
From (\ref{llprime}) the condition $\G_{(u,\zeta)} (\ll)=0$ has solutions 
\begin{align*}
\zetar &= \frac{1}{\cN} \left ( - \ellr + \epsilon \sqrt{\ellr^2 - \ll \cN}
\right ) \quad \quad \quad & \mbox{if} \quad \quad \cN \neq 0 \\
\zetar & = - \frac{\ll}{2 \ellr} & \mbox{if} \quad \quad \cN =0
\end{align*}
where $\epsilon =\pm 1$. The sign condition (\ref{sign}) forces 
$\epsilon = \sign (\ellr)$. So $\zetar$ exists and is unique, while $u$
is arbitrary. This proves the result.
\end{proof}

Note that in the gauge defined by $(u,\zeta_0)$, the vector field $V$ vanishes.
So in this specific gauge $\tilrho = \rho_{\ell}$.

We now proceed with the computation of the right-hand sides of the constraint
field equations in the  gauge constructed in Lemma \ref{gaugechoice}.
First of all, spherical symmetry implies that (in the adapted
coordinates $\{ \lambda, \omega^A\}$, where $\omega$ are coordinates
in $\mathbb{S}^{m-1}$), it must be $\bm{J}_A =0$. Also the right-hand
side of (\ref{Jc}) with $c=A$ is identically vanishing as a consequence
of the symmetry. Thus, we only need to compute the right-hand side
of (\ref{rhol}) and of the component $c= \lambda$ of (\ref{Jc}).

We start with the curvature terms. Observe (\ref{inver})
$P = \frac{1}{R^2} \gsph^{\sharp}$, so
\begin{align*}
\Riemo{}^{c}{}_{bcd} P^{bd} + \ell_a \Riemo{}^{a}{}_{bcd} P^{bd} n^c & =
\frac{1}{R^2} \left ( \Riemo{}^{c}{}_{BcD} 
+ \ellr \nr \Riemo{}^{\lambda}{}_{B \lambda D} \right ) \gsph^{BD} \\
& =  
\frac{1}{R^2} \left ( 2 \Riemo{}^{\lambda}{}_{B\lambda D} 
+ \Riemo{}^{A}{}_{B A D} \right ) \gsph^{BD}
\end{align*}
where in the third equation we used $\nr\ellr=1$ (\ref{inver}). To compute
these terms we need the connection symbols $\Gamo$ of $\nablao$. 
Given the form of $P$, expression  (\ref{GambPropo})  yields
immediately
\begin{align*}
\Gamo{}^{\lambda}_{ab} & = \nr 
\dot{\ellr}  
\delta^{\lambda}_{a} \delta^{\lambda}_{b} =
\frac{\dot{\ellr}}{\ellr} 
\delta^{\lambda}_{a} \delta^{\lambda}_{b}, \\
\Gamo{}^{C}_{ab} & = \frac{1}{2} P^{CD} \left ( 
\partial_a \gamma_{D b} +
\partial_b \gamma_{D a} -
\partial_D \gamma_{ab} \right )=
 \left ( \delta^{\lambda}_a \delta^{C}_b 
+ \delta^{\lambda}_b \delta^{C}_a \right ) \frac{\dot{R}}{R}
+ \delta^{A}_a \delta^B_b \Gamma_{\sph^{m-1}}{}^C_{AB},
\end{align*}
where dot means derivative with respect to $\lambda$,
and $\Gamma_{\sph^{m-1}}{}^C_{AB}$ are the Christoffel symbols of 
$\gsph$. Thus
\begin{align*}
\Riemo{}^{\lambda}{}_{B\lambda D} &=
\Gamo{}^{\lambda}_{\lambda a} \Gamo{}^{a}_{BD} -
\Gamo{}^{\lambda}_{D a} \Gamo{}^{a}_{B\lambda} 
= 0, \\
\Riemo{}^{A}{}_{BAD} &= 
\partial_A \Gamo{}^{A}_{BD} - 
\partial_D \Gamo{}^{A}_{BA} + \Gamo{}^A_{A\lambda} \Gamo{}^{\lambda}_{BD}
+ \Gamo{}^A_{A C} \Gamo{}^{C}_{BD}
- \Gamo{}^A_{D\lambda} \Gamo{}^{\lambda}_{BA} 
- \Gamo{}^A_{A C} \Gamo{}^{C}_{BD} \\
& = \Riem_{\sph^{m-1}}{}^{C}_{BCD} = (m -2) \gsph{}_{BD}
\end{align*}
where  $\Riem_{\sph^{m-1}}$ is the Riemann tensor of the standard
$m-1$ dimensional sphere. We conclude
\begin{align*}
\Riemo{}^{c}{}_{bcd} P^{bd} + \ell_a \Riemo{}^{a}{}_{bcd} P^{bd} n^c & =
\frac{(m-1)(m-2)}{R^2}.
\end{align*}
Concerning the curvature term in the right-hand side
of the equation for $\bm{J}_{\lambda}$, this is obviously zero 
\begin{align*}
\ell_a \Riemo{}^{a}_{b \lambda d} n^b n^d = 0
\end{align*}
since $\n$ only has component along $\partial_{\lambda}$. 

The one form $\ellc$ is closed, so $F_{ab}=0$. Moreover,
$P^{bd} n^c \Y_{bc}= 0$ and
$P^{bc} \Y_{bc} = \frac{(m-1) \YT}{R^2}$ follow
immediately from the structure of $P$, $n$ and the form of $\Y$
(cf. (\ref{sphFormhyp})). Using also that $\ll=0$, 
the constraint equation (\ref{rhol}) becomes
\begin{align}
\rho_{\ell} &= 
\frac{(m-1)(m-2)}{2 R^2}
- \nablao_d \left (  \frac{(m-1) \YT}{R^2} n^d \right )
+ \frac{1}{2} \nn P^{BD} P^{AC} \left ( \Y_{BC} \Y_{DA} - \Y_{BD} 
\Y_{CA} \right ) 
+ P^{bc} \Y_{bc} \Y_{df} n^d n^f
\nonumber 
\\
& = 
\frac{(m-1)(m-2)}{2 R^2}
- \nablao_d \left (  \frac{(m-1) \YT}{R^2} n^d \right )
- \frac{\YT^2 \nn}{2 R^4} ( m-1 ) (m-2) + \frac{(m-1) \YT}{R^2} (\nr)^2 \Yr.
\label{eqrhol} 
\end{align}
The
$\nablao$-divergence of a vector $\zeta = H(\lambda) \partial_{\lambda}$
is
\begin{align}
\nablao_d \left ( H (\partial_{\lambda})^d \right ) & =
\dot{H} + H \Gamo{}^d_{da} (\partial_{\lambda})^a =
\dot{H} + H \Gamo{}^{d}_{d\lambda} = 
\dot{H} + H \Gamo{}^{\lambda}_{\lambda\lambda} 
+ H \Gamo{}^{A}_{A \lambda}  = \dot{H} + H \frac{\dot{\ellr}}{\ellr} 
+ H (m-1) \frac{\dot{R}}{R} \nonumber \\
& = 
\frac{\dot{ (H R^{m-1})}}{R^{m-1}} + H \frac{\dot{\ellr}}{\ellr} 
\label{divergence}
\end{align}
so that equation (\ref{eqrhol}) 
can be rewritten
as
\begin{align}
\frac{\rho_{\ell}}{m-1}
= \frac{m-2}{2 R^2} - \frac{1}{R^{m-1}}\frac{d}{d\lambda} \left ( \frac{\Y_T R^{m-3}}{\ellr} 
\right ) - \frac{\YT}{R^2} \frac{\dot{\ellr}}{\ellr^2}
+ \frac{m-2}{2} \frac{\YT^2 \cN}{\ellr^2 R^4}
+ \frac{\YT \Yr}{\ellr^2 R^2}
\label{eqrhol2}
\end{align}
once the explicit form of $\nn$ and $\nr$ from (\ref{inver}) are substituted.
In the present gauge, the vector field $V$ vanishes identically 
(see (\ref{defV})) and
hence the gauge covariant quantities
$\S, \Gamma, h, \tilrho$ take the form
\begin{align}
\label{nullgauge}
\S = \ellr, \quad \quad 
\Gamma = \frac{1}{\ellr} \left ( \frac{d \ellr}{d \lambda} - \Yr
\right ), \quad \quad
h = - \frac{\YT}{\ellr}, \quad \quad
\tilrho = \rho_{\ell}.
\end{align}
Thus, using the fact that
$h R^{m-3}$ has weight $r=-1$,
equation (\ref{eqrhol2}) can be written in terms of gauge covariant
quantities as
\begin{align*}
\frac{\tilrho}{m-1} = \frac{m-2}{2 R^2}
+ \frac{1}{R^{m-1}} \delta_{\lambda} \left ( h R^{m-3} \right )+  \frac{m-2}{2} 
\frac{h^2 \cN}{R^4}.
\end{align*}
Note that all terms
in this expression has the same weight, as required by consistency.

We next turn our attention to the constraint equation for $J_{\lambda}$. Given 
that $\ll=0$, $F_{ab}=0$ and $\ell_a \Riemo{}^{a}{}_{bc d} n^b n^d=0$, 
equation (\ref{Jc}) reduces immediately to
\begin{align*}
J_{c} & = - \nablao_f \left [ 
\left ( \nn P^{bd} - n^b n^d \right ) \left (
\delta^f_{d} \Y_{b c}  - \delta^f_c \Y_{bd} \right ) 
\right ] - P^{bd} 
\left ( \nablao_d \U_{bc} -
\nablao_c \U_{bd} \right )  - \left ( P^{bd} n^f - P^{bf} n^d \right ) \Y_{bd} \U_{c f}.
\end{align*}
We need the explicit form of  $\U_{ab}$. First, we use 
Lemma \ref{SomeIden} with $\bm{\sone}  =  i_{\n} F =0$ and $\ll=0$
to conclude
\begin{align}
  \U(n,\cdot) = \frac{1}{2} d \nn \quad \quad
\Longrightarrow \quad \quad P^{ab} \U_{ac} n^c =0, \quad \U(n,n)  = \frac{1}{2 \ellr} \frac{d \nn}{d \lambda}. \label{Uncdot}
\end{align}
The $AB$ components of $U_{ab}$ are immediate from its definition
(\ref{defU}):
\begin{align*}
\U_{AB} = \frac{R \dot{R}}{\ellr} \gsph{}_{AB}.
\end{align*}
We are only concerned with $\bm{J}(n)$,
so we contract (\ref{Jc}) with $n^c$ and 
``integrate by parts'' to get
\begin{align*}
\bm{J} (\n) = & 
- \nablao_{f} \left [
\left ( \nn P^{bd} - n^b n^d \right ) \left (
\delta^f_{d} \Y_{b c}  - \delta^f_c \Y_{bd} \right ) n^c \right ]
+ \left ( \nn P^{bd} - n^b n^d \right ) \left (
\delta^f_{d} \Y_{b c}  - \delta^f_c \Y_{bd} \right ) \nablao_f n^c \\
& 
- P^{bd} \nablao_d \left (  \U_{bc} n^c \right ) 
+ P^{bd} \U_{bc} \nablao_d n^c 
+ P^{bd} n^c \nablao_c (\U_{bd} )
- (P^{bd} \Y_{bd} ) (\U_{cf} n^c n^f)
+ P^{bf} n^d \Y_{bd} \U_{cf} n^c.
\end{align*}
Since (cf. Lemma \ref{identitiesnablao}) $P$  
is $\nablao$-covariantly constant  and $\nablao_f n^c = P^{ca} \U_{af}$,
 this expression simplifies to
\begin{align*}
\bm{J} (\n) & =  \nablao_f \left ( \nn P^{bd} \Y_{bd} n^f \right )
+ \left (\nn P^{bd} - n^b n^d \right ) \left (\Y_{bc} P^{ca} \U_{ad}
- \Y_{bd} P^{ca} \U_{ca} \right ) + P^{bd} \U_{bc} P^{ca} \U_{ad} \\
& + n^{c} \nablao_c \left ( P^{bd} \U_{bd} \right ) 
- (P^{bd} \Y_{bd} ) (\U_{cf} n^c n^f) \\
& = \nablao_f \left ( \nn P^{bd} \Y_{bd} n^f \right ) + n^{c} \nablao_c \left ( P^{bd} \U_{bd} \right ) 
+ P^{bd} P^{ca} \left ( \nn \Y_{bc} + \U_{bc} \right ) \U_{da} - \nn (P^{bd} \Y_{bd} ) (P^{ca} \U_{ca} ) \\
& + Y(n,n) P^{ca} \U_{ca}  - \U(n,n) P^{bd} \Y_{bd}  
\end{align*}
 where we used $P^{bf} n^d \Y_{bd} =0$ and
$P^{bf} n^d \U_{bd} =0$. Inserting the form of $\Y$, $\U$ and $P$ yields
immediately
\begin{align*}
\bm{J} (n)  = \,  &
\nablao_{f} \left ( (m-1) \frac{\nn \YT}{R^2} n^f \right )
+ \frac{1}{\ellr} \frac{d}{d\lambda} \left ( (m-1) \frac{\dot{R}}{R \ellr} 
\right ) + (m-1) \frac{\dot{R}}{R^3 \ellr} \left ( \nn \YT + \frac{R \dot{R}}{
\ellr} \right ) \\
& - \nn (m-1)^2 \frac{ \YT \dot{R}}{ \ellr R^3}
+ (m-1) \frac{\Yr \dot{R}}{\ellr^3 R} - \frac{(m-1)}{2} \frac{\YT}{R^2 \ellr}
\frac{d \nn}{d\lambda} \\
 = \, &  \frac{(m-1)}{R^{m-1} \ellr}  
\frac{d}{d\lambda} \left ( \nn \YT R^{m-3}  \right )
+ \frac{m-1}{R \ellr^2} \left ( \frac{d}{d \lambda} \dot{R}
-  \dot{R} \frac{\dot{\ellr}}{\ellr} + \dot{R} \frac{\Yr}{\ellr} 
\right )
- (m-1) (m-2) \frac{\nn \dot{R} \YT}{R^3 \ellr}  \\
& - \frac{m-1}{2} \frac{\YT}{R^2 \ellr} \frac{d \nn}{d\lambda}
\end{align*}
where in the second equality we used expression (\ref{divergence}) for the divergence term. $R$ has weight zero, so $\delta_\lambda R = \dot{R}$
is a weight one quantity. Thus, the term inside the parenthesis in the last expression 
equals $\delta_{\lambda} \delta_{\lambda} R$. Inserting $\nn = -\cN/\ellr^2$ 
and $\YT = - h \ellr$ yields
\begin{align*}
\bm{J}(\n) = (m-1) \left [
 \frac{1}{R^{m-1} \ellr} \frac{d}{d\lambda} \left ( 
\frac{\cN h R^{m-3}}{\ellr} \right ) - (m-2)  \frac{\cN h \dot{R}}{R^3 
\ellr^2 } - \frac{1}{2} \frac{h}{R^2}  \frac{d}{d \lambda} \left ( \frac{\cN}{\ellr^2} \right ) + \frac{1}{R \ellr^2} \delta_{\lambda} \delta_{\lambda} R
\right ].
\end{align*}
Expanding the first derivative we arrive at
\begin{align*}
\frac{\bm{J}(\n)}{m-1} 
= \frac{1}{R^2 \ellr^2} \left ( \frac{1}{2} h \dot{\cN} 
+ \cN \dot{h} - \frac{\cN h}{R} \dot{R} + R
 \delta_{\lambda} \delta_{\lambda} R \right ).
\end{align*}
Now, $R$ is of weight $0$, $N$ is weight $2$ and $h$ is weight $-1$, 
so we may replace all derivatives with respect to $\lambda$
by $\delta_{\lambda}$ derivatives, as all the
connection terms one introduces along the way cancel each other. The final result is
\begin{align*}
\frac{\ellr^2 \bm{J}(\n)}{m-1} =
\frac{1}{2R^2} h \delta_{\lambda} \cN
+ \frac{\cN}{R^2} \delta_{\lambda} h - \frac{\cN h}{R^3} \delta_{\lambda} R + 
\frac{1}{R} \delta_{\lambda} \delta_{\lambda} R.
\end{align*}
Note that all terms in the right hand side are of weight $2$ and gauge
invariant. The right hand side is not yet gauge invariant, but it is immediate
to find a gauge invariant expression that reduces to this in the gauge
$\ll=0$.  Indeed, the quantity $\S \bm{J} (\partial_{\lambda})$ reduces
to $\ellr^2 \bm{J} (\n)$ in this specific gauge and it
is gauge invariant under any gauge satisfying (\ref{sign}):
\begin{align*}
\G_{(u,\zeta)} (\S \bm{J} (\partial_{\lambda})) = u \S \frac{1}{u} \bm{J} \left ( 
\partial_{\lambda} \right ) = S \bm{J} (\partial_{\lambda}).
\end{align*}
We also note that both $\S$ and $\bm{J} (\partial_{\lambda})$
are of weight $1$, so their product is weight two.  
Thus, the gauge invariant and radial-coordinate
covariant equation is
\begin{align*}
\frac{\S \bm{J}(\partial_{\lambda})}{m-1} =
\frac{h}{2R^2}  \delta_{\lambda} \cN
+ \frac{\cN}{R^2} \delta_{\lambda} h - \frac{\cN h}{R^3} \delta_{\lambda} R + 
\frac{1}{R} \delta_{\lambda} \delta_{\lambda} R.
\end{align*}
We can summarize the results of this section in the following theorem.
\begin{theorem}
Let $\{\N^{m},\gamma,\ellc,\ll,\Y,\rho_{\ell},\bm{J} \}$ ($m > 2$)
be
spherically symmetric matter hypersurface data
of Lorentzian ambient signature. Decompose
$\gamma$ as (\ref{sphFormmet}) and define the gauge covariant quantities $\S$, $h$ and
$\Gamma$  as in Lemma \ref{GaugeObjects}, $\tilrho$ as in Lemma
\ref{rhoinv} and the operator $\delta_{\lambda}$ as in Lemma \ref{defdeltalambda}. Then, the 
constraint equations take the following form in any gauge where
$\ellc \neq 0$
\begin{align}
\frac{\tilrho}{m-1} & = \frac{m-2}{2 R^2} \left ( 1 
+ \frac{\cN h^2 }{R^2} \right )
+ \frac{1}{R^{m-1}} \delta_{\lambda} \left ( h R^{m-3} \right ),  \label{firstsph} \\
\frac{\S \bm{J}(\partial_{\lambda})}{m-1} & =
\frac{h}{2R^2}  \delta_{\lambda} \cN
+ \frac{\cN}{R^2} \delta_{\lambda} h - \frac{\cN h}{R^3} \delta_{\lambda} R + 
\frac{1}{R} \delta_{\lambda} \delta_{\lambda} R. \label{secondsph} 
\end{align}
\end{theorem}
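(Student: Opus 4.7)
The plan is to perform the computation in a cleverly chosen gauge and then invoke gauge covariance to obtain the result in any gauge with $\ellc\neq 0$. By Lemma \ref{gaugechoice}, the assumption $\ellc\neq 0$ guarantees the existence of a spherically symmetric gauge parameter $(u,\zeta_0)$ bringing the data to the form $\ll=0$. In this gauge, the vector field $V$ of Lemma \ref{GaugeInv} vanishes identically, so the gauge covariant quantities reduce to the simple expressions displayed in \eqref{nullgauge}, namely $\S=\ellr$, $\Gamma=\dot{\ellr}/\ellr-\Yr/\ellr$, $h=-\YT/\ellr$, and $\tilrho=\rho_\ell$. The strategy is then: (i) write down the simplified tensors $P$, $n$, $\nn$ from \eqref{inver} with $\ll=0$; (ii) observe that spherical symmetry and $\ll=0$ imply $F=0$ (since $\ellc=\ellr\,d\lambda$ with $\ellr=\ellr(\lambda)$ is closed), which kills most terms in \eqref{rhol} and \eqref{Jc}; (iii) compute the remaining curvature contractions and covariant derivatives.

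First I would compute the connection coefficients $\Gamo{}^c{}_{ab}$ from \eqref{GambPropo} in adapted coordinates $\{\lambda,\omega^A\}$, obtaining $\Gamo{}^\lambda_{\lambda\lambda}=\dot{\ellr}/\ellr$, the mixed terms $\Gamo{}^C_{\lambda B}=(\dot R/R)\delta^C_B$, and the sphere Christoffel symbols, with all other components vanishing. From this I would compute the curvature contraction $\Riemo{}^c{}_{bcd}P^{bd}+\ell_a\Riemo{}^a{}_{bcd}P^{bd}n^c=(m-1)(m-2)/R^2$ using the fact that the non-sphere components of $\Riemo$ vanish in this gauge. Similarly $\ell_a\Riemo{}^a{}_{bc d}n^bn^d=0$ because $n$ is radial. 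The tensor $\U$ from \eqref{defU} has only the $\lambda\lambda$ and $AB=RR'\gsph_{AB}/\ellr$ components, and the identities in Lemma \ref{SomeIden} give $U(n,\cdot)=\tfrac{1}{2}d\nn$ since $\bm{\sone}=0$.

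Next I would substitute these pieces into the two constraint equations. The radial divergence of a vector $H(\lambda)\partial_\lambda$ is computed once and for all as in \eqref{divergence}, giving $\nablao_d(H\partial_\lambda^d)=\dot{(HR^{m-1})}/R^{m-1}+H\dot{\ellr}/\ellr$. Substituting into \eqref{rhol} yields \eqref{eqrhol2} after elementary simplification; substituting into the $c=\lambda$ component of \eqref{Jc} (the $c=A$ components vanish identically by spherical symmetry, as does $\bm{J}_A$) and integrating by parts using $\nablao P=0$ and $\nablao_fn^c=P^{ca}U_{af}$ yields an expression in $\ellr,\cN,R,\Yr,\YT$ that can be reorganized in terms of $h=-\YT/\ellr$ and $\cN$. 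At this point every object appearing has a definite weight, so all $d/d\lambda$ derivatives that appear coherently cancel with the connection terms and can be replaced with $\delta_\lambda$ — this is forced by the Leibniz rule and the weight count, since adding vanishing connection contributions is automatic when the assembled expression is of the correct weight.

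The last step is the gauge-covariance argument. Both sides of the resulting identities are, by construction, built only from the gauge covariant quantities $\tilrho$, $\S\bm{J}(\partial_\lambda)$, $\cN$, $R$, $h$, and $\delta_\lambda$ (which is coordinate covariant of definite weight by Lemma \ref{defdeltalambda}). Since the constraint equations \eqref{rhol} and \eqref{Jc} hold in every gauge, and the right-hand sides of \eqref{firstsph} and \eqref{secondsph} are gauge invariant expressions that agree with the constraint equations in the particular gauge $\ll=0$, they must agree with them in every gauge satisfying the hypothesis. The main obstacle is purely the bookkeeping in the second constraint: one must carefully track the many $U_{AB}$, $\Yr$, $\YT$ cross-terms and recognize the combination $\tfrac{d}{d\lambda}\dot R-\dot R\dot{\ellr}/\ellr+\dot R\Yr/\ellr$ as $\delta_\lambda\delta_\lambda R$ (using that $R$ has weight $0$ and $\delta_\lambda R=\dot R$ has weight $1$, with connection contribution $-\Gamma\dot R=(\Yr-\dot{\ellr})\dot R/\ellr$), which is what makes the final formula compact.
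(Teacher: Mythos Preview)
Your proposal is correct and follows essentially the same route as the paper: fix the gauge $\ll=0$ via Lemma~\ref{gaugechoice}, compute the connection, curvature, $\U$ and the divergence formula \eqref{divergence} in that gauge, substitute into \eqref{rhol} and the $n$-contraction of \eqref{Jc}, recognize the combination $\tfrac{d}{d\lambda}\dot R-\dot R\,\dot{\ellr}/\ellr+\dot R\,\Yr/\ellr$ as $\delta_\lambda\delta_\lambda R$, and then invoke gauge covariance of both sides to extend the identities to any gauge with $\ellc\neq 0$. The only point worth making explicit is that $\nablao P=0$ is not a general identity but holds here because $F=0$ and $\ll=0$ in the chosen gauge (cf.\ \eqref{nablaoP}); you use it correctly, but the justification should be stated.
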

We  note for later use
that the second equation in the theorem
implies
\begin{align*}
\frac{1}{m-1} h \S \bm{J} (\partial_{\lambda}) = 
\frac{1}{2} \delta_{\lambda} \left ( \frac{ N h^2}{R^2} \right )
+ \frac{h}{R} \delta_{\lambda} \delta_{\lambda} R. 
\end{align*}
In fact, this equations is equivalent to 
(\ref{secondsph})  whenever $h \neq 0$).
It is thus natural to define the quantity $\beta := \frac{N h^2}{R^2}$,
which is both gauge invariant
and of weight zero.

\section{Birkhoff theorem at the abstract hypersurface
data level}
\label{Sect:Birkhoff}

In this subsection we want to find the most general solution of 
the constraint equations for Lorentzian ambient
signature in the {\it vacuum} case, by which we 
mean $\tilrho=0$ and $\bm{J}=0$. 
Since we are assuming $m>2$, equation  (\ref{firstsph}) with
$\tilrho=0$ implies that 
$h(\lambda)$ cannot vanish
on any open set, so we may replace $N$ by $\beta$, and the equations
to be solved are
\begin{align*}
0 & = \frac{m-2}{2} \left ( 1 + \beta \right ) 
+\frac{1}{R^{m-3}}
\delta_{\lambda} \left ( h R^{m-3} \right ),  \\
0 & = \frac{1}{2} \delta_{\lambda} \beta + \frac{h}{R} \delta_{\lambda}
\delta_{\lambda} R,
\end{align*}
which in terms of $\sigma:= \frac{h}{R}$ become
\begin{align}
0 & = \frac{m-2}{2} \left ( 1 + \beta \right ) 
+ R \delta_{\lambda} \sigma
+(m-2) \sigma \delta_{\lambda} R, \label{firsta} \\
0 & = \delta_{\lambda} \beta + 2 \sigma \delta_{\lambda}
\delta_{\lambda} R. \nonumber 
\end{align}
Taking $\delta_{\lambda}$ in the first equation and inserting the second yields
\begin{align*}
  (m-1) \delta_{\lambda} R \delta_{\lambda} \sigma + R \delta_{\lambda} \delta_{\lambda}
  \sigma =0 \quad \quad \Longleftrightarrow \quad \quad \delta_{\lambda}
  \left ( R^{m-1} \delta_{\lambda} \sigma \right ) =0.
\end{align*}
The quantity $R^{m-1} \delta_{\lambda} \sigma$ is of weight zero, so we can replace
$\delta_{\lambda}$ by $\frac{d}{d\lambda}$ and integrate
\begin{align}
  \delta_{\lambda} \sigma = \frac{-(m-2) M}{R^{m-1}}
\label{deltasigma}
\end{align}
where $M$ is an integration constant. Equation (\ref{firsta}) can be solved 
algebraically for $\beta$:
\begin{align*}
  \beta = -1 + \frac{2M}{R^{m-2}} - 2 \sigma \frac{dR}{d \lambda}
    \quad \quad \Longleftrightarrow
  \quad \quad N = \frac{\beta}{\sigma^2}
  = - \frac{1}{\sigma^2} \left ( 1 - \frac{2M}{R^{m-2}} + 2 \sigma
  \frac{d R}{d \lambda} \right ),
\end{align*}
where we have used that $R$ is of weight zero.
We emphasize that so far no choice
of  radial coordinate $\lambda$ has been made.
Concerning the tensor $Y$, in the gauge where $\ll=0$ it has the form
(see (\ref{nullgauge}))
\begin{align*}
  \Y  & = \Yr d \lambda^2 + \YT \gsph
  = \left ( \frac{d \ellr}{d \lambda} - \ellr \Gamma  \right ) d \lambda^2
  - \ellr R \sigma \gsph \\
  & = (\delta_{\lambda} \ellr) d \lambda^2 - \ellr R \sigma \gsph
\end{align*}
where in the second equality we used that $\ellr$ is of weight $1$.
Observe also that the gauge $\ll=0$ admits a residual gauge freedom
of the form $(u,0)$. This  may be used to set $\ellr =1$.
In this  gauge, the hypersurface data has two arbitrary
functions $R(\lambda)$ (weight $0$) and $\sigma(\lambda)$ (weight $-1$),
both non-vanishing.
Now, since $\sigma$ is of weight $-1$, (\ref{deltasigma}) reads
\begin{align*}
  \frac{d \sigma}{d \lambda} + \Gamma \sigma =
  - \frac{(m-2)M}{R^{m-1}} \quad \quad \Longleftrightarrow
  \quad \quad
  \Gamma =
  - \frac{1}{\sigma} \left ( \frac{d \sigma}{d\lambda}
  + \frac{(m-2) M}{R^{m-1}} \right )
\end{align*}
and the most general hypersurface data is
\begin{align*}
  \gamma & = \left [ - \frac{1}{\sigma^2} \left ( 1- \frac{2M}{R^{m-2}} \right )
    - \frac{2}{\sigma} \frac{dR}{d\lambda} \right ] d \lambda^2
  + R^2 \gsph, \quad \quad \ellr = 1, \quad \quad \ll=0 \\
  Y & = \frac{1}{\sigma} \left ( \frac{d \sigma}{d\lambda}
  + \frac{(m-2) M}{ R^{m-1}} \right ) d \lambda^2
  - R \sigma \gsph.
\end{align*}
We still have the freedom to
select the coordinate $\lambda$. Choosing  an affine
coordinate, where $\Gamma=0$ (cf. Lemma \ref{flat}), it holds
\begin{align*}
  \frac{d \sigma}{d \lambda} = - \frac{(m-2) M }{R^{m-1}},
\quad \quad  \quad \quad Y = - R \sigma  \gsph.
\end{align*}
So far, we have been able to identity the most general solution
of the spherically symmetric vacuum constraint equations in terms
of a constant $M$ and several free functions. 
Our final aim is to show that all this data can be embedded (locally)
in the Kruskal spacetime
of mass $M$ and dimension $m+1$. We work in an advanced 
Eddington-Finkelstein patch, where the spacetime metric takes the form
\begin{align*}
  g_{\mbox{\tiny Kr}}
  = - \left ( 1- \frac{2M}{r^{m-2}} \right ) dv^2 + 2 dv dr
  + r^2 \gsph.
\end{align*}
Consider the embedding $\Phi := \{ v(\lambda),r(\lambda), \omega^A\}$ defined by
\begin{equation*}
\left .   \begin{array}{ll}
  r(\lambda) &= R (\lambda) \\
  \frac{dv(\lambda)}{d\lambda} &= -\frac{1}{\sigma}
\end{array}
\right \}.
  \end{equation*}
It is immediate that the first fundamental form is $\gamma$. To check that
the data is embedded we need to find a rigging that yields
$\ellr, \ll$ and $\Y$.  This rigging must be null (to fulfill the gauge choice
$\ll=0$).
Moreover, it must be orthogonal to the spherical part $\gsph$ of the metric.
Select $\rig =  -\sigma \partial_r$. The
corresponding one form is $\bm{\rig} = -\sigma dv$ which satisfies
\begin{align*}
\Phi^{\star} (\bm{\rig} ) = -\sigma \frac{dv}{d\lambda} d \lambda
= d \lambda = \ellc
\end{align*}
given that $\ellc = \ellr d \lambda = d \lambda$. Finally, we compute
\begin{align*}
  \frac{1}{2}  \Phi^{\star} \left ( \pounds_{\rig}
  g_{\mbox{\tiny Kr}} \right )
  & = \frac{1}{2} \Phi^{\star} \left ( - \sigma
  \pounds_{\partial_r}
  g_{\mbox{\tiny Kr}}
  - 2 g_{\mbox{\tiny Kr}} (\partial_r, \cdot) \otimes_s
d \sigma
  \right ) \\
  & = - \frac{\sigma }{2} \Phi^{\star} \left (\pounds_{\partial_r}
  g_{\mbox{\tiny Kr}} \right ) + \frac{1}{\sigma} \Phi^{\star} (\bm{\rig})
  \otimes_s d \sigma \\
  & = - \frac{\sigma }{2} \Phi^{\star} \left (
- \frac{ 2M (m-2)}{r^{m-1}} dv^2 + 
  2 r \gsph \right ) + \frac{1}{\sigma}
  \frac{d \sigma}{d \lambda}  d \lambda^2 \\
  & = \left ( \frac{ M (m-2)}{\sigma R^{m-1}} + \frac{1}{\sigma}
  \frac{d \sigma}{d\lambda} \right ) d \lambda^2 - \sigma R \gsph
  = \Y.
\end{align*}
Summarizing, we have proved
\begin{theorem}[Birkhoff theorem for hypersurface data]
\label{Birkhoff}
  Let $\hypdata$ be spherically symmetric and vacuum  hypersurface data
  of dimension $m > 2$ and of Lorentzian ambient
  signature. There exists $M \in \mathbb{R}$ such that
  outside a (possibly empty) closed
  subset of $\N$ with empty interior, there exist local coordinates
  $\{\lambda,\omega^A\}$,
 two smooth, positive functions
 $R(\lambda)$ and $\sigma(\lambda)$  and a choice of gauge such that
   \begin{align*}
\  \gamma & = \left [ - \frac{1}{\sigma^2} \left ( 1- \frac{2M}{R^{m-2}} \right )
    - \frac{2}{\sigma} \frac{dR}{d\lambda} \right ] d \lambda^2
  + R^2 \gsph, \quad \quad \quad \ellc = d \lambda, \quad \quad \quad \ll=0, \\
  Y & = \frac{1}{\sigma} \left ( \frac{d \sigma}{d\lambda}
  + \frac{(m-2) M}{ R^{m-1}} \right ) d \lambda^2
  - R \sigma \gsph.
   \end{align*}
   The functions $R$ and $\sigma$ are, respectively, of weight $0$ and
   $-1$. Moreover, all such  hypersurface data can be embedded into the Kruskal
   spacetime of mass $M$.
\end{theorem}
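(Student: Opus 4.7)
The plan is to exploit the gauge freedom to put the data in a canonical form, integrate a first-order conservation law to extract the mass constant $M$, and then write down an explicit embedding into Kruskal. First I would restrict to the open subset $\UU \subseteq \N$ where $\ellr \ne 0$; its complement is closed, and it has empty interior because if $\ellr \equiv 0$ on an open set then, after imposing the gauge $\ll = 0$ of Lemma \ref{gaugechoice} at nearby points, the tensor $\A$ would degenerate, contradicting the hypersurface data hypothesis. On $\UU$, Lemma \ref{gaugechoice} produces a gauge with $\ll = 0$ and residual $(u,0)$-freedom; using this residual freedom to set $\ellr = 1$ simultaneously collapses $V$ to zero, so $\tilrho = \rho_\ell$ and the vacuum conditions become simply $\rho_\ell = 0$, $\bm{J} = 0$.

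In this canonical gauge the gauge-covariant quantities $\sigma := h/R$ (weight $-1$) and $\beta := N h^2/R^2$ (weight $0$) are well defined, with $h$ nowhere zero because otherwise the first constraint would force $(m-2)/R^2 = 0$. The system (\ref{firstsph})--(\ref{secondsph}) then reduces to a pair of coupled ODEs for $\sigma$, $\beta$, $R$. The decisive step is to apply $\delta_\lambda$ to the first equation and substitute the second, producing the conservation law $\delta_\lambda(R^{m-1} \delta_\lambda \sigma) = 0$; weight considerations imply that $R^{m-1}\delta_\lambda \sigma$ is a genuine constant, which I would write as $-(m-2)M$. Back-substitution then fixes $\beta$, hence $N$ and $\Y$, in terms of $R$, $\sigma$ and $M$. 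Passing to an affine radial coordinate via Lemma \ref{flat} sets $\Gamma = 0$ and yields the normal form displayed in the statement (sign conventions on $\sigma$ being adjustable by the coordinate flip $\lambda \mapsto -\lambda$).

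For the embedding, I would work in advanced Eddington--Finkelstein coordinates $(v, r, \omega^A)$ on the $(m+1)$-dimensional Kruskal spacetime of mass $M$, where $g_{\text{Kr}} = -(1 - 2M/r^{m-2})\,dv^2 + 2\,dv\,dr + r^2 \gsph$, and try $\Phi: \lambda \mapsto (v(\lambda), R(\lambda), \omega^A)$ with $dv/d\lambda = -1/\sigma$, paired with the null rigging $\rigging = -\sigma\,\partial_r$. Nullness of $\partial_r$ makes $g_{\text{Kr}}(\rigging,\rigging) = 0 = \ll$ automatic, and $\Phi^\star g_{\text{Kr}} = \gamma$ and $\Phi^\star(g_{\text{Kr}}(\rigging,\cdot)) = d\lambda$ follow by direct substitution. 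The main obstacle---the only place where the specific value of $M$ really earns its keep---is verifying $\tfrac{1}{2} \Phi^\star(\pounds_\rigging g_{\text{Kr}}) = \Y$. Here I would combine the identity $\pounds_{fX} g = f \pounds_X g + 2 g(X,\cdot) \otimes_s df$ with $\pounds_{\partial_r} g_{\text{Kr}} = -2(m-2)M r^{1-m}\,dv^2 + 2r\, \gsph$, and check that the $r^{1-m}$ contribution conspires with the integration constant to reproduce the coefficient of $d\lambda^2$ in $\Y$ precisely because of the first integral $R^{m-1}\delta_\lambda\sigma = -(m-2)M$ extracted above.
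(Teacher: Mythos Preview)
Your proposal follows essentially the same route as the paper: fix the gauge $\ll=0$, $\ellr=1$, introduce $\sigma=h/R$ and $\beta=Nh^2/R^2$, differentiate the first constraint and substitute the second to obtain the conservation law $\delta_\lambda(R^{m-1}\delta_\lambda\sigma)=0$, extract the mass $M$, solve algebraically for $N$, pass to an affine radial coordinate, and embed via $r=R$, $dv/d\lambda=-1/\sigma$ with null rigging $-\sigma\,\partial_r$ in Eddington--Finkelstein coordinates. One small imprecision: your argument for why the bad set has empty interior invokes Lemma~\ref{gaugechoice} on a region where $\ellr\equiv 0$, which is circular since that lemma presupposes $\ellr\ne 0$; the paper instead identifies the excluded set with the centers of symmetry together with the zero locus of $\sigma$ (equivalently $h$), whose interior is empty because the first vacuum constraint $(m-2)/(2R^2)+\ldots=0$ forbids $h\equiv 0$ on any open set.
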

\begin{remark}
  The closed set in the statement of the theorem corresponds to the
  centers of symmetry of the $SO(m)$ action and to points
  where $\sigma$ might vanish.
\end{remark}
\begin{remark}
  This theorem is more general that the standard Birkhoff theorem
  because the hypersurface data constraint equations may
  a priori have more solutions than those coming from embedded hypersurfaces
  in vacuum spacetimes. If the tensor $\gamma$ is positive definite, then
  the data defines a vacuum initial data set for the Einstein vacuum field
  equations, and well-posedness of the Cauchy problem of the
  Einstein field equations
  combined with the standard Birkhoff theorem would provide an
  alternative proof of the hypersurface
  data Birkhoff theorem \ref{Birkhoff} in this case.
  If the data is null everywhere (i.e.
  $\gamma$ is degenerate everywhere), one could still argue via
  well-posedness of the characteristic
  initial value (this requires constructing out of the
  prescribed hypersurface data, data on two
  null hypersurfaces intersecting on a spacelike surface).
  However, for any other case
  ($\gamma$ of Lorentzian signature, or of changing signature), no
  such well-posedness result is known -- neither expected, for instance
when $\gamma$ is Lorentzian somewhere --) so one cannot argue indirectly
  via the Cauchy problem. In this sense, theorem \ref{Birkhoff} is a
  non-trivial generalization of the classical Birkhoff theorem.
\end{remark}

\section*{Acknowledgments}

Financial support under the projects
PGC2018-096038-B-I00
(Spanish Ministerio de Ciencia, Innovaci\'on y Universidades)
and SA083P17 (JCyL)
is acknowledged. I am grateful to Miguel Manzano for checking the paper and
spotting several typos.

\end{document}